\def\BibTeX{{\rm B\kern-.05em{\sc i\kern-.025em b}\kern-.08em
    T\kern-.1667em\lower.7ex\hbox{E}\kern-.125emX}}
\begin{document}

\title{\HND: From Truth Discovery to Ability Discovery by Recovering Matrices with the Consecutive Ones Property
}

\author{\IEEEauthorblockN{Zixuan Chen}
\IEEEauthorblockA{\textit{Northeastern University}\\
chen.zixu@northeastern.edu}
\and
\IEEEauthorblockN{Subhodeep Mitra}
\IEEEauthorblockA{\textit{Google}\\
mitradeep1@gmail.com}
\and
\IEEEauthorblockN{R. Ravi}
\IEEEauthorblockA{\textit{Carnegie Mellon University}\\
ravi@andrew.cmu.edu}
\and
\IEEEauthorblockN{Wolfgang Gatterbauer}
\IEEEauthorblockA{\textit{Northeastern University}\\
w.gatterbauer@northeastern.edu}
}

\maketitle

\thispagestyle{plain}
\pagestyle{plain}

\begin{abstract}
We analyze a general problem in a crowd-sourced setting where 
one user asks a question (also called item) and other users return answers (also called labels) for this question.
Different from existing crowd sourcing work which focuses on \emph{finding the most appropriate label for the question (the ``truth'')}, our problem is to \emph{determine a ranking of the users based on their ability to answer questions}.
We call this problem ``\emph{ability discovery}'' to emphasize the connection to and duality with the more well-studied 
problem of ``truth discovery''.

To model items
and their labels in a principled way, we draw upon Item Response Theory (IRT) which is the widely accepted theory behind standardized tests such as SAT and GRE.
We start from an idealized setting where the relative performance of users is \emph{consistent} across items and better users choose better fitting labels for \emph{each} item.
We posit that a principled algorithmic solution to our more general problem should solve this ideal setting correctly
and observe that the response matrices 
in this setting
obey the \emph{Consecutive Ones Property} (C1P).
While C1P is well understood algorithmically with various discrete algorithms, 
we devise a novel 
variant of 
the HITS algorithm which we call ``{\HND}'' (or \HnD), and prove that it can recover the ideal C1P-permutation in case it exists.
Unlike fast combinatorial algorithms for finding the consecutive ones permutation (if it exists), \HnD also returns an ordering when such a permutation does not exist.
Thus it provides a \emph{principled heuristic for our problem that is guaranteed
to return the correct answer in the ideal setting}.
Our experiments show that \HnD produces user rankings with robustly high accuracy compared to state-of-the-art truth discovery methods. 
We also show that our novel variant of \HITS scales better in the number of users than \ABH, the only prior spectral C1P reconstruction algorithm.

\end{abstract}

\begin{IEEEkeywords}
truth discovery, item response theory, consecutive ones property
\end{IEEEkeywords}

\section{Introduction}
\label{sec:intro}

\introparagraph{Motivation}
We first present a couple of examples from a class to a more general crowdsourcing context.

\begin{example}[Student ranking]
Kiyana, an innovative instructor for an online class who suffered from the leakage of previous exam questions and difficulty of creating new ones, notices a lot of interactions in student forums like Piazza~\cite{piazza} and pilots a new learning approach.
Since students are willing to ask and answer questions, Kiyana aims to utilize such communication for both practice and assessment of the class by requiring students to suggest and answer multiple-choice questions (MCQs) themselves.
The task of students is to come up with meaningful MCQs (including question stems and choices) 
related to the topics of the class and answer the questions from others.
In this way, the initiative of and interactions between students are encouraged, and their performances can be used as another important measure (e.g., a ``participation'' score) towards the grade assessment, in addition to traditional exam scores.
To assess students, Kiyana first simply counts how many times a student answers questions, which is the traditional way for an instructor to give a participation score for students in a forum.
However, in this way, the grades are biased towards students who answer a lot of questions randomly.
The second attempt is to require all students to answer the same number of questions and rank them by how many questions they answer correctly, which still requires a lot of efforts for her to figure out all the correct answers and suffers from the problem that each question has quite different difficulties.
Kiyana wonders whether there is a more principled way for ranking students by their abilities to answer questions correctly.
\end{example}

\begin{example}[Crowd workers ranking]
Daiyu wants to release a human intelligence task which consists of a set of questions at a crowdsourcing platform like Amazon Mechanical Turk~\cite{amazonturk}.
Suffering from low-quality answers from the crowd workers, she wonders whether there is a better way to select top crowd workers instead of simply setting thresholds for the number of tasks they have finished or finished successfully.
\end{example} 

The above examples motivate our problem of ``ability discovery'' 
which ranks the users (students/workers) based on their abilities to answer questions correctly.

\introparagraph{The ability discovery problem}
We have $m$ users and $n$ items.\footnote{We use item/question and label/option/answer/choice 
interchangeably.}
Each item has up to $k$ labels,\footnote{In other words, the item(s) with the most unique labels has $k$ different labels. Labels can be proposed either from questioners or answerers.} and the labels usually vary between items (the items are thus ``heterogeneous'').
Each user chooses up to one label to each of the items,
and two users may choose the same label to the same item.
Our goal is to derive a principled way for \emph{determining a ranking of the users in terms of their ability to pick correct labels for the items}
based solely on the user responses.

\introparagraph{Connection to truth discovery}
Ability discovery can be considered a dual problem of the widely studied \emph{truth discovery} problem~\cite{Zheng:2017:TIC:3055540.3055547}.
The setup is similar; 
the difference is that the truth discovery problem measures success in finding the truth 
(thus the correct label for each item)
whereas our problem focuses on finding the correct ranking of the users by their relative abilities.
While the truth discovery problem occurs in a wide range of problems related to crowdsourcing and has been of intense focus
for the data management community~\cite{Zheng:2017:TIC:3055540.3055547}, the ability discovery problem gets little attention and is usually treated as a sub-problem:
if one knows the truth, it is easier to rank the users based on the choices they make.
In turn, if one knows the order of user abilities, it is easier to determine the truth.
However, we show in \cref{sec:exp} that 
it is not straightforward to rank users correctly, 
even when given the correct answers to the questions beforehand, 
which means even the perfect truth discovery method is not guaranteed to perform well for the ability discovery problem.
Furthermore, we argue that ability discovery is much more than a sub-problem of truth discovery and highlight its importance in two aspects:
1)~It has different application scenarios as we discussed in the examples. 
2)~Different from correctness of answers, user abilities are \emph{abstract and cannot to be measured directly}, which makes ability discovery results valuable but also hard to evaluate with \emph{no acknowledged ground truth}.

\introparagraph{Assumptions}
Similar as in truth discovery, 
we assume an objective total order on the labels of each item based on their correctness,
and a total order on the users based on their latent one-dimensional abilities for choosing the correct labels.

\introparagraph{Our approach}
We first define an idealized scenario in which the user responses are \emph{consistent} with their abilities 
across the items and characterize 
it as the response matrix having the \emph{Consecutive Ones Property} (C1P).
We then suggest an efficient spectral method 
that we call \HND (\HnD)
for reconstructing such ideal orderings if they exist.
We prove that in the ideal error-free scenario (better users always make better choices) 
our method is guaranteed to find the correct ranking,
which puts our approach on a stronger theoretical footing as a cross between a heuristic and an exact algorithm. 
Importantly, our method generalizes to the general non-ideal case 
and allows us to compare it with existing truth discovery methods for ranking users.
One key innovation in our work is the use of Item Response Theory (IRT) \cite{LH1997}
to model both label rankings as well as the propensity of users of different abilities to choose such labels,
and the previously not made connection to C1P.
We utilize 3 different generative models from the IRT literature to generate realistic synthetic data.
Experiments on these data show that
($i$) our new method is \emph{more accurate than existing truth discovery methods}, and 
($ii$)~it can also serve as scalable approach that \emph{reconstructs the C1P order if it exists} and \emph{generalizes much better on non-ideal inputs} than the only other C1P order reconstruction method that works in the non-ideal scenario.

\introparagraph{Contributions}
(1) We connect the notion of \emph{consistent responses} in heterogeneous multiclass classification 
to the well-known \emph{Consecutive Ones Property} (\cons) from seriation theory~\cite{pjm/FulkersonGross:1965,kendall, ABH}.
We argue that any principled solution to ranking users by their abilities 
should be able to recover the correct ranking when responses are consistent with 
abilities.

(2) We propose a novel yet simple 
adaptation of the \HITS algorithm~\cite{hits} that we call \HND (\HnD) 
for ranking users based on their abilities. 
We prove the surprising result
that \HnD \emph{recovers the consecutive ones ranking} of users 
when a unique such order exists.
Unlike fast combinatorial algorithms for finding the \cons ordering if it exists, 
\HnD \emph{can also deal with the general case when no such order exists}.
This makes \HnD an ideal candidate for our problem (and even becomes an exact algorithm in special settings).
We compare \HnD against \ABH~\cite{ABH}, which is the only other spectral approach that has these properties, and 
give intuitive and experimental evidence for why \HnD performs better.

(3) We show how 
\emph{Item Response Theory (IRT)}~\cite{LH1997},
which is widely deployed in educational testing,
provides 
a natural and mathematically principled theory 
(including generative models)
for modeling heterogeneous item ranking
that includes the C1P as a special case of consistent responses.

(4) We conduct extensive experiments on synthetic datasets generated by 3 polytomous IRT models 
and show that \HnD can outperform other existing truth discovery approaches in terms of accuracy of the user ranking. 
We also show (both in theory and with experiments) that \HnD has better scalability 
and accuracy than 
\ABH (which is the only other C1P reconstruction approach known today that can be used for the general ranking problem).

\introparagraph{Outline}
\Cref{sec:prob_definition} defines our problem, draws the connection to the C1P property,
and introduces IRT as generalization of C1P.
\Cref{sec:algorithm} introduces our approach, gives its formal properties and compares it to closely related work.
\Cref{sec:exp} presents experiments.
\Cref{sec:stateofart} discusses additional related work on truth discovery 
before \Cref{sec:conclusion} concludes.
\iflabelexists{sec:nomenclature}
{Our code is available online~\cite{HITSnDIFFS-code:2023}.}
{Code, proofs and more experiments are available online~\cite{HITSnDIFFS-arxiv:2023,HITSnDIFFS-code:2023}.
}

\section{Formal setup}
\label{sec:prob_definition}

\subsection{Ability discovery problem formulation}
\label{sec:formulation}
Consider a setting with $m$ users 
choosing among $k$ options for each of $n$ items.
Items are \emph{heterogeneous} in that 
they have \emph{different options},
as is the case in 
MCQs
used in standardized test settings (see \cref{fig:Fig_Intro}a).
This setup is different from typical
multiclass classification~\cite{dawid} where all $n$ items 
have the same class of $k$ labels.
To emphasize the difference, we refer to our setup as \emph{heterogeneous multiclass classification}.

User responses can be represented in one-hot encoding as a ($m \times kn$) \emph{binary response matrix} 
$\response$ (see \cref{fig:Fig_Intro}b)
where each row represents the choices of a user
and each column represents an option for some item.
The number of non-zeros in $\response$ is $mn$ and the number of non-zeros in each row $n$.

Our goal is to rank the users by their abilities to choose the ``best'' options for each item.
Each user $\user_j$ has a latent one-dimensional \emph{selection ability} $\theta_j$ that represents the user's ability to choose the best of options for each item.

\begin{definition}[Ability discovery]
	Given $m$ users and their choices among $k$ options\footnote{To simplify the discussion and different from \cref{sec:intro}, we assume here that each item has exactly $k$ choices. For items with $k'\!<\!k$ choices, we can assume them to have $k \!-\! k'$ more choices and nobody choosing them.}
	for $n$ \emph{heterogeneous} items
	as binary response matrix $\response$.
	Rank the $m$ users by their abilities to correctly choose options for each item.
\end{definition}	

Several approaches on homogeneous items assume the probabilities of users getting a correct answer to be identical across questions and encode user abilities as a $(k \times k)$ dimensional confusion matrix per user \cite{dawid,Zheng:2017:TIC:3055540.3055547}.
In our setting, this is not the case: 
each option $h$ for item $t_i$ may have a \emph{different} 
``discrimination score'' $\alpha_{ih}$.
In general, the higher $\alpha_{ih}$ for option $h$ is, the more ``discriminating'' it is 
(the option's probability of being chosen more quickly increases with student ability).
The probability of answering a question correctly is then
a function of the user ability and all the question option discrimination scores. 
This setup builds upon \emph{Item Response Theory (IRT)}~\cite{LordNovickBirnbaum:1968:StatisticalTheories,LH1997}, summarized in detail in \cref{sec:irt}.

\begin{example}
\label{example:fig1}
\Cref{fig:Fig_Intro}a shows $m\!=\!4$ users answering $n\!=\!3$ MCQs. 
Each question has $k\!=\!3$ choices labeled A to C in decreasing order of fit.
\Cref{fig:Fig_Intro}b shows an example response matrix $\vec C'$ and its binary form $\vec C$.
Assuming that users' choices are ``consistent'' with their abilities 
(i.e.\ correctness of labeling increases with ability), the only possible ranking of users for the observed $\vec C'$ is 1, 2, 3, 4, or its reverse.
\Cref{fig:Fig_Intro}c illustrates an IRT model between the latent user abilities and the probability of picking the correct answer when the correct ranking is 1, 2, 3, 4.
For example, user $u_2$ has the ability to label items $t_1$ and $t_2$ correctly, thus user $u_2$ chooses the correct answer A for both items.
Our problem is to rank the users by their mastery of the subject based only on the users' choices without knowing the true labels.
\end{example}

\begin{figure}[t]
\centering
\includegraphics[scale=0.45]{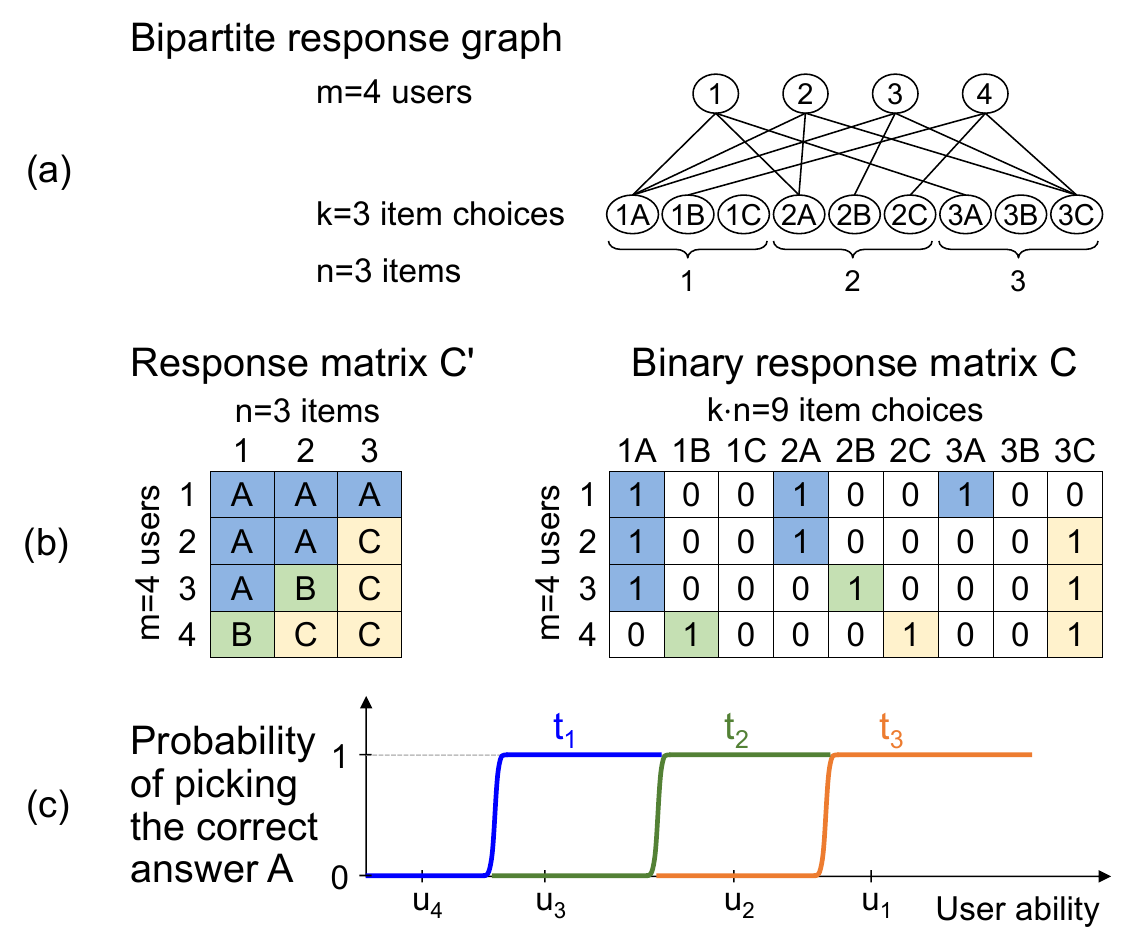}
\caption{(a) Ability discovery problem: $m\!=\!4$ users choose one from $k\!=\!3$ choices of labels A, B, C for each of $n=3$ items.
(b) Input: the $(m \times k)$ response matrix $\vec C'$,
or equally its flattened $(m \times k n)$ binary response matrix $\response$.
(c) Model: the probability of picking the correct answer in terms of the user ability for Items 1,2,3. The abilities of all 4 users are marked on the horizontal axis.
}
\label{fig:Fig_Intro}
\end{figure}

\subsection{The ideal case with consistent responses}
\label{sec:ideal_case}

We call a response matrix to be ``\emph{consistent}''
if there is an unambiguous ordering of users according to their abilities that is reflected in their responses across the items.
This way, the ability is a unique skill that is tested across their responses to all items.
In this ideal case, \emph{if a user $\user_1$ chooses a better option for an item $t_1$ than user $\user_2$,
then $\user_1$ must also choose an equal or better option for any other item $t_2$}
to reflect this consistency.
This implies that there is an implicit ordering among the choices for each item from best to worst and the better users pick better options for every item.

More formally, assume that the user abilities $\theta_j$ are all distinct, and also that for every item $t_i$, the discrimination $\alpha_{ih}$ over all the options are distinct, then there is 
\emph{a unique linear ordering of the users, and of the options for each item}.

\begin{definition}[Consistent Responses]
A response matrix $\response$ is \emph{consistent} if
there exists an assignment of user abilities $\bm{\theta}$
and item discriminations $\bm{\alpha}$, s.t.\
for any pair of users $u_1$ and $u_2$ with $\theta_1 > \theta_2$, and for any item $t_i$ where $u_1$ chooses option $h_1$ and $u_2$ chooses option $h_2$, we have $\alpha_{ih_1} > \alpha_{ih_2}$.
\end{definition}

\subsection{Relation to Consecutive ones Property (C1P)}

We observe that consistent response matrices, when row-sorted according to user abilities, satisfy a widely studied ordering property in seriation, called the \emph{consecutive ones property} (\cons)~\cite{pjm/FulkersonGross:1965,kendall, ABH}. 
We follow the notation from seriation theory and call it a \emph{P-matrix}.

\begin{definition}[\cons, P-matrix \& pre-P-matrix~\cite{ABH}]
A binary matrix satisfies \cons 
and is called a \emph{P-matrix}
if in each column, all the $1$'s are consecutive.
If the rows of a matrix can be permuted so it becomes a P-matrix, we call it a \emph{pre-P-matrix}.
\end{definition}

In other words,
no $0$'s appear between any two $1$'s in a column in a P-matrix (see $\vec C$ in \cref{fig:Fig_Intro}b).
To see that consistent responses with users sorted by abilities $\theta$ give a P-matrix,
suppose for a contradiction that a column corresponding to an option for an item has two or more blocks of ones.
Then the users corresponding to the zeros in between these blocks will have chosen another option for which the quality is strictly higher or lower than that of this option since the option qualities are assumed to be distinct.
But this violates consistency since the rows are ordered by user ability.
\begin{observation}[Consistent Responses imply C1P]
	A response matrix $\response$ is consistent iff it is a pre-P-matrix.
\end{observation}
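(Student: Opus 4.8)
The statement is an equivalence, so the plan is to prove the two implications separately, reading the consistency condition in the weak form its informal phrasing suggests (``an equal or better option for any other item''), i.e.\ $\theta_1>\theta_2$ forces $\alpha_{ih_1}\ge\alpha_{ih_2}$ on every item, with abilities and per-item discriminations assumed distinct as recalled just before the definition. The direction \textbf{consistent $\Rightarrow$ pre-P-matrix} is the argument already sketched, which I would just make precise: fix a witnessing $\bm\theta,\bm\alpha$, permute the rows of $\response$ into strictly decreasing order of $\theta$ (unique, since the $\theta$'s are distinct), and call the result $\response'$. If $\response'$ were not a P-matrix, some column---option $h$ of item $t_i$---would have a $0$ between two $1$'s, i.e.\ sorted positions $j_1<j_2<j_3$ with entries $1,0,1$ there. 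The user at $j_2$ picks exactly one option on $t_i$, necessarily some $h'\neq h$. Consistency applied to the pair $(j_1,j_2)$ gives $\alpha_{ih}\ge\alpha_{ih'}$ and applied to $(j_2,j_3)$ gives $\alpha_{ih'}\ge\alpha_{ih}$, hence $\alpha_{ih}=\alpha_{ih'}$---contradicting distinctness of $t_i$'s discriminations, since $h\neq h'$. So every column of $\response'$ is consecutive and $\response$ is a pre-P-matrix.

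For \textbf{pre-P-matrix $\Rightarrow$ consistent}, which is where the real work is, I would build the witnesses directly. Let $\pi$ turn $\response$ into a P-matrix $\response'$; index its rows $1,\dots,m$ top to bottom and write $p(u)$ for the position of user $u$. Put $\theta_u:=-p(u)$, so abilities are distinct and decrease down the rows. For each item $t_i$: since every user picks exactly one of $t_i$'s $k$ options and each of those $k$ columns is an interval of $\response'$, they cut $\{1,\dots,m\}$ into $k$ consecutive blocks; order $t_i$'s options by the top-to-bottom position of their block (options picked by nobody have empty blocks and are appended last, in any order), let $r_i(h)$ be this rank, and put $\alpha_{ih}:=-r_i(h)$---distinct within $t_i$, larger for higher blocks. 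Then for any $u_1,u_2$ with $\theta_{u_1}>\theta_{u_2}$ (so $u_1$ sits above $u_2$) and any item $t_i$, the block holding $u_1$ is at or above the block holding $u_2$, hence $r_i(h_1)\le r_i(h_2)$ and $\alpha_{ih_1}\ge\alpha_{ih_2}$, which is exactly the consistency requirement. So $\response$ is consistent, completing the equivalence.

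The main obstacle is the converse, and inside it the one step that deserves care is the claim that the $k$ columns of a single item in a P-matrix really do partition the rows into consecutive blocks: this uses the standing assumption that each row has exactly $n$ ones (one choice per item), and it also needs a fixed convention for options that nobody picks. A secondary point worth making explicit is the role of the distinctness assumptions: distinct per-item discriminations are what turn ``weakly better'' into a genuine contradiction in the forward direction, while distinct abilities are what keep the equivalence non-vacuous---with ties allowed, the all-equal ability assignment would satisfy consistency for every matrix, including ones that are not pre-P, so the assumption recalled just before the definition is essential rather than cosmetic. Modulo these points, both directions are short.
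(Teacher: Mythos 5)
Your proposal is correct, and for the direction the paper actually argues it is the same proof: sort rows by ability and derive a contradiction from a $1,0,1$ pattern in a column, using distinctness of the per-item discriminations to rule out the middle user's option being tied with the split one. Where you go beyond the paper is the converse. The paper states the observation as an ``iff'' but its surrounding text only sketches consistent $\Rightarrow$ pre-P; the reverse implication is left implicit. You supply it with an explicit witness construction --- $\theta_u := -p(u)$ from the C1P row order, and per-item discriminations ranked by the top-to-bottom position of each option's block, using the fact that the $k$ interval-supported columns of one item partition the rows because every user answers every item --- which is exactly the right (and essentially the only) way to do it, including the sensible convention for never-chosen options. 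Your side remarks are also well taken: the paper's Definition of consistent responses literally demands $\alpha_{ih_1} > \alpha_{ih_2}$ even when $h_1 = h_2$, which cannot be meant, and your weak reading plus the standing distinctness assumptions (stated just before the definition) is the reading under which both directions go through; and you are right that without distinct abilities the existential definition would be vacuously satisfied by a constant $\bm{\theta}$, so that assumption is doing real work. In short: same argument as the paper where the paper gives one, plus a correct completion of the half it omits.
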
	

Consequently, ranking the users in a scenario of consistent responses
corresponds to the problem of
determining a permutation of the rows of $\response$ so that the result obeys \cons.

\introparagraph{State-of-the-art on C1P}
\label{sec:c1p}
Booth and Leuker \cite{boothleuker} (``\textbf{\BL}'') proposed the fastest known algorithm 
for finding all possible permutations of the rows that reconstruct the \cons ordering 
in time linear in the number of nonzero entries in the matrix, taking time $O(mn)$ in our setting.
However, their method fails to produce an ordering of the rows when the matrix is not a pre-P-matrix, 
and therefore cannot be used as a general heuristic for simulated or real-world datasets that are not ideal.
In contrast, Atkins et al. \cite{ABH} (``\textbf{\ABH}'') proposed an elegant spectral method 
to determine whether a matrix obeys \cons, thus giving a rare continuous method to solve a combinatorial ordering problem.
Moreover, it can also be adapted as a heuristic when the input matrix is not a pre-P-matrix.
To the best of our knowledge, this is the only currently known method that can be used for 
ability discovery that is also guaranteed to recover a C1P ranking if it exists.

\introparagraph{Our goal}
Our goal is to develop a fast and principled algorithm 
that just like \ABH
($i$) returns a P-matrix in the special case of pre-P matrix inputs, and
($ii$) can solve the problem in the more general case when the response matrix is \emph{not pre-P}.
As we show, the performance of \ABH quickly drops in the non-ideal setting (the general IRT setting in \cref{sec:irt}) which makes it unusable for ability discovery.
We show that our method is more robust and generalizes better, thus being the first method with a useful accuracy for ability discovery that is also guaranteed to solve the ideal case.

\subsection{Relation to Item Response Theory (IRT)}
\label{sec:irt}
\introparagraph{Brief introduction of IRT models}
A large body of work from psychological and educational researchers called \emph{Item Response Theory (IRT)}~\cite{LordNovickBirnbaum:1968:StatisticalTheories,LH1997}
studies the mathematical functions relating the probability of an examinee's response on a test item to an underlying ability.
All major educational tests, such as the Scholastic Aptitude Test (SAT)~\cite{LordNovickBirnbaum:1968:StatisticalTheories} 
and Graduate Record Examination (GRE)~\cite{kingston1982feasibility}, are based on IRT.
IRT forms a mathematically principled, experimentally validated, and widely used theory on how users answer items.
\cref{fig:IRT_overview} summarizes 
the connections between different IRT models
and 
\iflabelexists{sec:nomenclature}
{\cref{app:IRT}
contains even more details on IRT.}
{our online appendix~\cite{HITSnDIFFS-arxiv:2023}
contains even more details on IRT.
}

Binary IRT models can be seen as variations of the standard \emph{logistic} or sigmoid function
$\sigma : \R \rightarrow [0,1]$ defined by
$\sigma(x) = \frac{e^x}{1+e^x} = \frac{1}{1+e^{-x}}$,
which is widely used in machine learning models and a smooth relaxation of the Heaviside step function $H(x) = \mathbb{I}(x>0)$~\cite{pml1Book}.
These models describe the probability $\P_{i}(\theta)$ for a student with ability $\theta$ to answer a question $i$ correctly.
Here we only show the 3PL IRT model due to limited space, where $\theta$ is the latent ability for each user; $a$, $b$, $c$ are latent item factors characterizing the questions and their options, representing discrimination, difficulty and random guessing respectively
(Other models can be derived based on \cref{fig:IRT_overview}):
\begin{align*}
	\P_{i}(\theta) = c_i + (1-c_i)\sigma\big(a_i(\theta-b_i)\big) = c_i + \frac{1 - c_i}{1 + e^{-a_i(\theta-b_i)}}	
\end{align*}

Multinomial models measure the probability $\P_{ih}(\theta)$ for a student with ability $\theta$ to choose an option $h$ for a question $i$.
Thus different from binary models whose parameters belong to questions, multinomial IRT models assign parameters to each option.
For example, the Graded Response Model (GRM)  model \cite{samejima1997graded} considers a difficulty parameter for each option and a discrimination parameter for each question:
\begin{align*}
	&\P_{ih}(\theta) = \P^*_{ih}(\theta) - \P^*_{i,h+1}(\theta) \\
	&\P^*_{ih}(\theta) =  \sigma\big(a_i(\theta-b_{ih})\big) = \frac{1}{1 + e^{-a_i(\theta - b_{ih})}}	\\ 
	&-\infty = b_{i0} < b_{i1} < ... < b_{i,k-1} < b_{ik} = \infty
\end{align*}
The Bock model \cite{bock1972estimating} furthermore assigns a discrimination parameter to each option, and the Samejima model \cite{samejima1979new} takes into account random guessing by adding a dummy option.

\begin{figure}[t]
\includegraphics[scale=0.23]{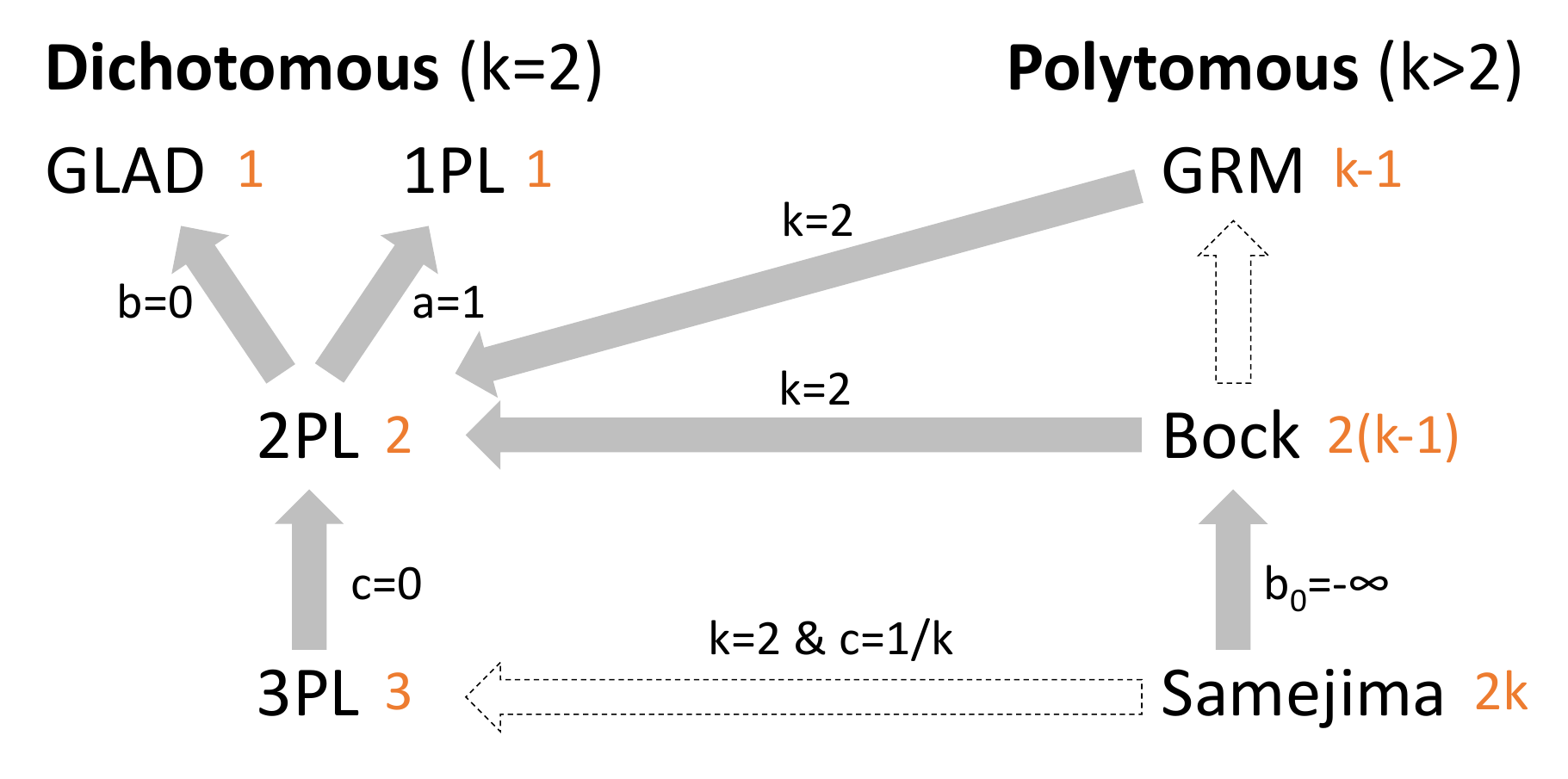}
\centering
\caption{Correspondences between the discussed IRT models.
Orange numbers show number of free parameters per question.
Arrows mean ``specializes into.'' 
Dashed arrows imply specialization requires special assumptions:
Bock to GRM: holds only approximately after fixing $a^{\textrm{Bock}}_{h} \!=\! h \cdot a^{\textrm{GRM}}$,
Samejima to 3PL: for $k \!=\! 2$ when $c \!=\! 1/k$.
}
\label{fig:IRT_overview}
\end{figure}

\introparagraph{Connection between IRT and C1P}
We introduced the definition of consistent responses in \cref{sec:ideal_case},
When a response matrix is consistent, it is a pre-P-matrix and can be permuted to become a P-matrix which has the consecutive ones property (C1P).
If the response matrix is a pre-P-matrix, 
the corresponding response function of the probability for a user to choose a specific option 
$h \in \{0, \ldots, k-1\}$ 
can be expressed as the difference between two Heaviside step functions:
\begin{align*}
	\P_{ih}(\theta) &= H(\theta - b_{ih}) - H(\theta - b_{i,h+1})
\end{align*}
for appropriately chosen $b_{ih}$ such that:
\begin{align*}
	&-\infty = b_{i0} < b_{i1} < ... < b_{i,k-1} < b_{ik} = \infty
\end{align*}
Notice that this response function is exactly the GRM model in the limit of $a \rightarrow \infty$:
Whenever the user ability $\theta$ is between $b_{ih}$ and $b_{i,h+1}$, then this student chooses option $h$.

To summarize, IRT models can be seen as a relaxed version of the response function in the ideal case when the response matrix can be permuted to obey C1P.
As widely accepted models for MCQs, they strongly support our principle of satisfying the more strict C1P in the ideal case and can be used to generate data in non-ideal cases (see \cref{sec:exp}).

\section{A family of HITS algorithms}
\label{sec:algorithm}

We review the \HITS algorithm and variants that have been proposed for truth discovery.
We then describe a natural averaging version of \HITS that we call ``\avgHITS.''
Our key observation is that the \emph{eigenvector corresponding to the 2nd largest eigenvalue} 
of its update matrix reconstructs the user (or row) ordering 
with \cons if one exists and is unique.\footnote{We consider an ordering and its reverse ordering to be the same.}
We then describe a variant that we call ``\HND'' (or \HnD in short) on a tripartite graph to find this eigenvector
efficiently:
it keeps an additional vector of differences between adjacent scores and updates it in the loop of the \avgHITS algorithm to compute the ordering of users that we require.
We prove that \HnD returns the correct ordering when the user responses are consistent
and compare its time complexity and expected accuracy for non-consistent responses with other methods.

\introparagraph{Required background from spectral graph theory}
We say that $\vec v$ is an eigenvector of $(n \times n)$ quadratic 
matrix $\vec A$ with eigenvalue $\lambda$ if
$\vec v \neq 0$ and
$\vec A \vec v = \lambda \vec v$.
If $\vec A$ is symmetric, then all eigenvalues are real~\cite{lay2016linear}.
We use indices to refer to eigenvalues sorted algebraically:
$\lambda_1  \geq \lambda_2 \geq \ldots \lambda_{n-1} \geq \lambda_{n}$.
We write $\vec v_i$ to refer to the eigenvector corresponding to eigenvalue $\lambda_i$
and will be cavalier in referring to it as ``the $i$-th largest eigenvector'' 
when we really mean ``the eigenvector corresponding to the $i$-th largest eigenvalue.''

If the matrix is non-negative and irreducible, then according to the Perron-Frobenius theorem~\cite{perron1907theorie, frobenius1912matrizen, Meyer00}
the first eigenvector is also the largest by amplitude 
($\lambda_1 = \max_i |\lambda_i|$)
and the corresponding eigenvector $\vec v_1$ is positive.

\subsection{``HITS'' and its variants for truth discovery}
\label{sec:hits}

\textbf{Hubs and Authorities (\HITS)}~\cite{hits}\footnote{HITS originally stood for ``Hyperlink-Induced Topic Search.''} is a classic spectral approach 
that several truth discovery approaches have built upon.
The original aim of the algorithm is to rate web pages by two scores: authority and hub score.
These scores are recursively defined such that the hub scores are proportional to the sum of the authority scores of the nodes they point to, and the authority scores are proportional to the sum of the scores of the hubs pointing to them, thus reflecting a mutually consistent set of scores~\cite{Newman:2010:NI:1809753}.

In the context of truth discovery, the authority and hub scores can be interpreted as the user abilities and option correctness scores, respectively. 
Let $\response$ represent the $m \times n$ binary response matrix
where $C_{j,i}=1$ iff user $j$ chooses item $i$, 
and $\userscore$ be the $m$-dimensional user score vector,
and $\qweight$ be the $n$-dimensional item weight vector.
In matrix notation, the scores are recursively defined as:
\begin{align*}
	\userscore \leftarrow \beta \response \qweight
	\qquad\qquad
	\qweight \leftarrow \alpha \response^\transpose \userscore	
	\quad
\end{align*}
where $\alpha$ and $\beta$ are normalization constants
and $\response^\transpose$ is the transpose of $\response$.
The algorithm starts from an initial assignment, such as $\userscore = \vec e$ and
iteratively updates then normalizes $\qweight$ and $\userscore$.
The user scores $\userscore$ will converge to the 1st eigenvector of the matrix $\response \response^\transpose$.

\textbf{TruthFinder}~\cite{DBLP:journals/tkde/YinHY08} modifies HITS 
by first taking \emph{the average instead of the sum} of the chosen item scores as user scores
and interpreting them as probabilities of the users being correct on any question.
It then defines an item's score 
as the probability of it being true given the independent probabilities of all the users choosing the item.
When appropriately initialized, the approach does not require normalization.
In the following matrix formulation, let $\responserow$ represent the row-normalized response matrix $\response$:
\begin{align*}
	\userscore \leftarrow \responserow \qweight
	\qquad\qquad
	\qweight \leftarrow \vec 1 -  
	\exp \big( \response^\transpose \log (\vec 1 - \userscore) \big)
	\quad
\end{align*}

\textbf{Investment} \cite{PasternackR2010:FactFinder} 
calculates the ability of users as the sum of the scores of their chosen options, weighted by the user ability they invested in the previous iteration.
\textbf{PooledInvestment} \cite{PasternackR2010:FactFinder} extends Investment by using a different formula for the item scores.
Both variants use non-linear scaling of the item scores with different user-specified hyperparameters.

\introparagraph{Our method}
We build upon this key idea of updating scores in a bipartite graph 
of users and items
by iteratively summing scores from one side to update the other.
However, in contrast to other methods, we focus on the 2nd largest instead of the dominant eigenvector
of a new variant and show that this approach is \emph{guaranteed to recover the correct ranking in case of consistent responses}.
As we will show in \cref{sec:exp}, no other existing truth discovery method can do that.

\subsection{``\avgHITS''}
\label{sec:avghits}

In our setup, there are $nk$ different choices ($k$ choices for each of $n$ items).
Consider a bipartite graph $G = (L \cup R,E)$ that corresponds to the ($m \times nk$) response matrix $\response$:
Partition $L$ contains a vertex for each of $m$ users: $L = \{\user_1,...,\user_m\}$.
Partition $R$ is a collection of $n$ vertex sets $R = \{I_1,...,I_n\}$, one for each item.
Each set $I_i$ contains $k_i \leq k$ vertices $I_i = \{c_{i1},...,c_{ik_i}\}$ where $c_{ih}$ represents option $h$ of item $i$.
We add an edge to between a user $\user_j$ and an option $c_{ih}$ $E$ if user $j$ chooses option $h$ for item $i$.

To make our derivations easier to follow, we will conveniently assume that each item $i$ has the same number $k_i = k$ of options. Notice however, that this is not required for our approach.
We further assume $\response$ to be connected. This requirement applies to all spectral truth ranking methods including \HITS as otherwise the relative ranking of users (or items) from different components can't be established.\footnote{PageRank achieves the connectivity with the teleport operation.}
Finally, define $\userscore$ as a ($m \times 1$) user score vector
and $\qweight$ as a ($kn \times 1$) option weight vector denoting weights for each of the $kn$ options
	according to their order in $\response$.

We call \avgHITS the modification of the \HITS update rule 
that uses \emph{averages} instead of sums
to iteratively update the user scores and option weights \emph{in both directions}:
the user score $s_j$ is updated to be the average of the weights of all the options that user $j$ picked,
and an option weight $c_{ih}$ is updated to be the average of the scores of all users who picked it.
In the following matrix formulation, 
let $\responserow$ represent the row-normalized
and $\responsecol$  the column-normalized response matrix $\response$.
At each iteration (until convergence), we update 
the user score vector $\userscore$ 
and the option weight vector $\qweight$ as follows:
\begin{align*}
	\userscore \leftarrow \responserow \qweight
	\qquad\qquad
	\qweight \leftarrow (\responsecol)^\transpose \userscore
	\quad
\end{align*}

By combining the
above two update equations,
we can update user scores between iterations directly
by replacing the two normalized response matrices with one \emph{update matrix} $\update$:
\begin{align}
	\textbf{s} &\leftarrow \underbrace{\responserow(\responsecol)^\transpose}_{\update} \textbf{s}
\end{align}

These iterations are not yet very helpful. Indeed, we observe 
that the largest eigenvector of $\update$ is the all-ones vector $\onesvec$,
and this is the vector of user scores that $\avgHITS$ converges to.
It turns out that it is the eigenvector corresponding to the \emph{2nd largest eigenvalue} of $\update$ that we seek. 

\subsection{Our algorithm ``\HND'' (\HnD)}
\label{sec:HND_algo}

In the following, we show a simple algorithm to find the 2nd largest eigenvector ordering of $\update$ and prove that 
it can be used to find the unique consecutive ones ordering of the response matrix $\response$.
By ``the eigenvector ordering'', we mean the ranking of entries in this eigenvector in terms of their values.
For example, 
$\vec v_1 = \{0.36, 0.8, 0.48\}$ 
and
$\vec v_2 = \{0.48, 0.64, 0.6\}$
have the identical ordering $\{3, 1, 2\}$ or its reverse $\{1, 3, 2\}$.
Our algorithm does not return the 2nd largest eigenvector of $\update$ but instead returns a vector with the \emph{identical ordering}.

The 2nd largest eigenvector of a matrix can be found using a variant of the deflation method \cite{pml1Book, DBLP:conf/nips/Mackey08}, 
which we will discuss
in detail
in \Cref{sec:complexity}.
Here we present a novel, conceptually simple, and faster algorithm 
that we term ``HITS and DIFFS'' (\HND or \HnD) 
that extends
\avgHITS from a bipartite to a tripartite graph 
and whose iterative updates converge to a user ranking that is
\emph{guaranteed to be \cons in the ideal case}, 
and that \emph{performs well also in more general settings.}
This approach leverages particular properties of our problem that don't apply 
to the 2nd largest eigenvector orderings of any matrix
from more general settings.

First, we propose a new intermediate step that calculates differences between user scores in the iterative updates of \avgHITS.
Rather than updating the user scores iteratively,
\HND updates the \emph{differences} between adjacent user scores by using a suitably modified update matrix,
and results in the scores converging to the ordering according to the second largest eigenvector of $\update$.
Furthermore, this modification only adds a linear overhead of computing the user score difference vectors and normalizing it in every iteration.
As we will show in \cref{theorem: main}, 
when the response matrix obeys \cons, then \HnD reconstructs the correct ordering of the rows.

\begin{figure}[t]
\captionsetup[subfigure]{justification=centering}
	\centering
	\includegraphics[scale=0.5]{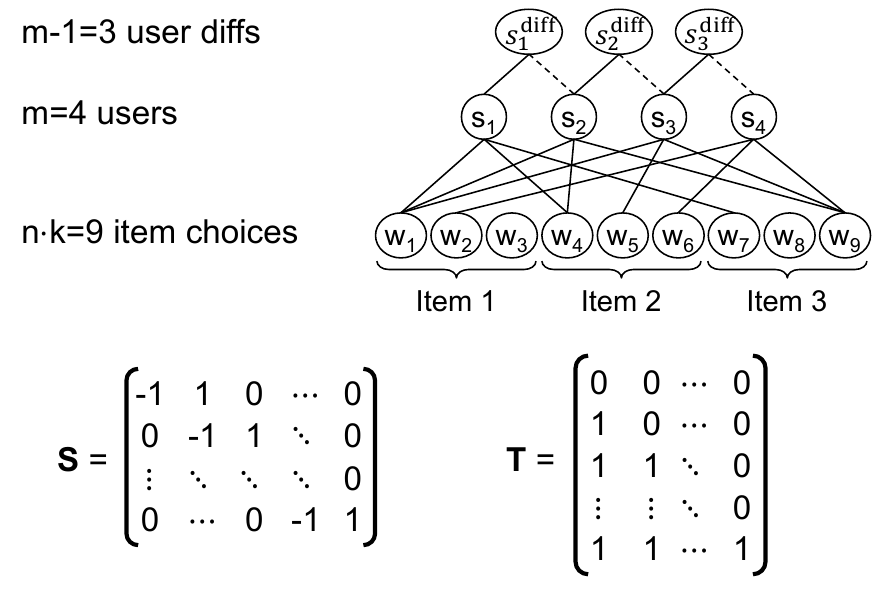}
\caption{\HND uses a \emph{3-partite graph} of option weights, user scores, and user diffs.
Contrast this graph with \cref{fig:Fig_Intro}.
The update equations 
(see~\cref{alg:hnd})
use two re-shaping matrices $\vec S$ and $\vec T$.}
\label{fig:Fig_Tripartite}		
\end{figure}

As shown in \cref{fig:Fig_Tripartite}, 
we define a new vector $\userdiff$ of differences in user scores with entries
$\singleuserdiff_j = s_{j+1} \!-\! s_{j}$, $j \in [m\!-\!1]$.
This is equal to 
$\userdiff_j =  \vec{S} \userscore$
where $\vec{S} \in \Rset^{(m-1) \times m}$ is 
shown in \cref{fig:Fig_Tripartite}.
In the reverse direction, there are infinitely many vectors $\userscore$ that can be 
generated from a given $\userdiff$, all shifted by different constants.
Since we only want a final ordering of users,
we can WLOG set the first element of the vector $\userscore$ to be $0$.
The transformation then is
$\userscore = \vec{T} \userdiff$
where $\vec T \in \Rset^{m \times (m-1)}$ is the lower unit triangular matrix\footnote{It is this fixing that intuitively keeps the ordering, but changes the actual amplitudes.}.

We can now get a user difference score update rule:
\begin{align}
	\vec{s}^{\textrm{diff}} 
	&\leftarrow \vec{S} \userscore 
	= \vec{S} \responserow (\responsecol)^\transpose \userscore 
	= \underbrace{\vec{S} \responserow (\responsecol)^\transpose \vec{T}}_{\updatediff} \vec{s}^{\textrm{diff}}
	\label{eq:hitsndiffs_naive}
\end{align}

\noindent
In other words,
$\updatediff = \vec{S} \update \vec{T}$
is a ``difference update'' matrix that is used to update $\userdiff$ from one iteration to the next.
With these update equations, $\vec{s}^{\textrm{diff}}$ converges to the largest eigenvector
of $\updatediff$.
Our algorithm \HnD that implements this is described in \cref{alg:hnd}.

We now prove the connection between the 1st eigenvector of $\updatediff$
and the 2nd largest eigenvector of $\update$.

\begin{lemma}[Eigenvector correspondence]
\label{lemma:EigenvectorCorrespondence}
$\vec{x}$ is the 2nd largest eigenvector of $\update$
iff
$\vec{y} = \vec{S} \vec{x}$ is the largest eigenvector of $\updatediff$.
\end{lemma}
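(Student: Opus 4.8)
The plan is to combine the elementary algebra of the reshaping matrices $\vec S$ and $\vec T$ with the spectral structure of $\update$. First I would record three facts that follow at once from the explicit forms of $\vec S$ (first differences) and $\vec T$ (prefix sums with the first coordinate pinned to $0$) in \cref{fig:Fig_Tripartite}: (i)~$\vec S\vec T = \vec I_{m-1}$; (ii)~$\vec T\vec S\,\userscore = \userscore - [\userscore]_1\,\onesvec$ for every $\userscore$, so $\vec T\vec S$ differs from $\vec I_m$ only by subtraction of a multiple of $\onesvec$; and (iii)~$\vec S\onesvec = \vec 0$, i.e.\ $\ker\vec S = \mathrm{span}(\onesvec)$. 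Next I would recall, from \cref{sec:avghits}, the spectral picture of $\update = \responserow(\responsecol)^\transpose$: it is row-stochastic, so $\update\onesvec = \onesvec$; by connectivity of $\response$ it is irreducible, so Perron--Frobenius makes $\lambda_1 = 1$ a \emph{simple} eigenvalue with eigenspace $\mathrm{span}(\onesvec)$; and, writing $\responserow = \vec D_r^{-1}\response$ and $(\responsecol)^\transpose = \vec D_c^{-1}\response^\transpose$ with $\vec D_r,\vec D_c$ the (invertible, by connectivity) diagonal row- and column-sum matrices, the similarity $\vec D_r^{1/2}\update\vec D_r^{-1/2} = \vec M\vec M^\transpose$ with $\vec M = \vec D_r^{-1/2}\response\vec D_c^{-1/2}$ exhibits $\update$ as similar to a symmetric PSD matrix, hence diagonalizable with real eigenvalues $1 = \lambda_1 > \lambda_2 \ge \cdots \ge \lambda_m \ge 0$ and an eigenbasis $\vec v_1 = \onesvec, \vec v_2, \dots, \vec v_m$.

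The heart of the proof is a one-line computation: by construction $\updatediff = \vec S\update\vec T$ (see \cref{eq:hitsndiffs_naive}), so for every $i \ge 2$, using (ii), $\update\onesvec = \onesvec$, and (iii),
\begin{align*}
\updatediff(\vec S\vec v_i) &= \vec S\update\vec T\vec S\vec v_i = \vec S\update\bigl(\vec v_i - [\vec v_i]_1\,\onesvec\bigr) \\
&= \vec S\bigl(\lambda_i\vec v_i - [\vec v_i]_1\,\onesvec\bigr) = \lambda_i\,\vec S\vec v_i .
\end{align*}
Also $\vec S\vec v_i \ne \vec 0$, since $\vec v_i$ (belonging to an eigenvalue $\lambda_i \ne \lambda_1$) is not a multiple of $\onesvec$, the generator of $\ker\vec S$. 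Thus $\vec S\vec v_2, \dots, \vec S\vec v_m$ are $m-1$ nonzero eigenvectors of the $(m-1)\times(m-1)$ matrix $\updatediff$ with eigenvalues $\lambda_2, \dots, \lambda_m$, and they are independent: $\sum_{i\ge 2} c_i\vec S\vec v_i = \vec 0$ forces $\sum_{i\ge 2} c_i\vec v_i \in \mathrm{span}(\vec v_1)$, so all $c_i = 0$ by independence of the eigenbasis. Hence $\{\vec S\vec v_i\}_{i=2}^m$ is a basis of $\R^{m-1}$, $\mathrm{spec}(\updatediff) = \{\lambda_2, \dots, \lambda_m\}$, the top eigenvalue of $\updatediff$ equals $\lambda_2$, and its top eigenspace is $\vec S$ applied to the $\lambda_2$-eigenspace of $\update$.

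Both directions of the stated equivalence now follow, under the assumption $\lambda_2 > \lambda_3$ needed for ``the'' 2nd largest eigenvector to be well defined. If $\vec x$ is the 2nd largest eigenvector of $\update$, then $\vec x = c\vec v_2$ with $c \ne 0$, so $\vec y = \vec S\vec x = c\,\vec S\vec v_2 \ne \vec 0$ spans the top eigenspace of $\updatediff$, i.e.\ is the largest eigenvector of $\updatediff$. Conversely, if $\vec y = \vec S\vec x$ is the largest eigenvector of $\updatediff$, then $\vec S\vec x \in \mathrm{span}(\vec S\vec v_2)$ gives $\vec x \in \mathrm{span}(\vec v_2) + \ker\vec S = \mathrm{span}(\vec v_2, \onesvec)$ with nonzero $\vec v_2$-component (else $\vec y = \vec 0$); among the vectors of this plane the only eigenvectors of $\update$ are the nonzero multiples of $\vec v_2$ (as $\lambda_1 \ne \lambda_2$), so $\vec x$ — equivalently, up to the harmless additive shift by $\onesvec$ that the recovery step $\userscore = \vec T\vec y = \vec x - [\vec x]_1\onesvec$ performs and that leaves the entrywise ordering unchanged — is the 2nd largest eigenvector of $\update$. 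The step I expect to be the main obstacle is the spectral bookkeeping for the \emph{non-symmetric} matrix $\update$: one must carefully justify its diagonalizability and real spectrum via the $\vec D_r^{1/2}$-similarity to $\vec M\vec M^\transpose$, and the simplicity of $\lambda_1 = 1$ via irreducibility, since these are exactly what ensure that the $m-1$ vectors $\vec S\vec v_i$ exhaust $\mathrm{spec}(\updatediff)$ with no spurious eigenvalue and that $\ker\vec S$ meets the relevant eigenspaces trivially.
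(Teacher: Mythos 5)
Your proposal is correct and takes essentially the same route as the paper: the central chain $\updatediff(\vec{S}\vec{x}) = \vec{S}\update\vec{T}\vec{S}\vec{x} = \lambda\,\vec{S}\vec{x}$, exploiting $\vec{T}\vec{S} = \vec{I}_m - \vec{e}\vec{e}_1^\transpose$, the row-stochasticity $\update\vec{e} = \vec{e}$, and $\vec{S}\vec{e} = \vec{0}$, is exactly the paper's computation. The extra care you take — diagonalizability of the non-symmetric $\update$ via similarity to a symmetric positive semidefinite matrix, the independence count showing $\updatediff$ has no spurious eigenvalues beyond $\lambda_2,\dots,\lambda_m$, and the explicit converse with the harmless $\vec{e}$-shift — fills in details the paper's sketch leaves implicit rather than constituting a different approach.
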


\introparagraph{Proof sketch}
Due to the limit of space, we only provide the high-level ideas of our proofs in the paper.
First, we can find out that each row of $\update$ has sum 1.
Using this, we can prove that the largest eigenvector of $\update$ is in the direction of the all ones vector $\vec{e} = \vec{1}_m$ if the largest eigenvalue of $\update$ has multiplicity 1 
(i.e.\ the graph is a single connected component).
Let $\vec{x}$ be an eigenvector of $\update$ that is not in the direction of the all ones vector, i.e. $\vec{x} \neq \alpha \vec{e}$.
Note that $\vec{T} \vec{S} = (\vec{I}_m - \vec{e}\vec{e}_1^\transpose)$ and each row of $\vec{S} \update$ sums to $0$.
Then,
\begin{align}
\update \vec{x} &= \lambda \vec{x}  				\notag \\
\vec{S} \update \vec{x} &= \lambda \vec{S} \vec{x}  \notag \\
\vec{S} \update (\vec{I}_m - \vec{e}\vec{e}_1^\transpose) \vec{x} &= \lambda \vec{S} \vec{x}  \notag \\
\vec{S} \update \vec{T} \vec{S} \vec{x} &= \lambda \vec{S} \vec{x}  \notag \\
\updatediff \vec{y} &= \lambda \vec{y}, \quad \textrm{where} \quad \vec{y} = \vec{S} \vec{x} 	
\label{eq:s to s_diff}
\end{align}

Therefore, $\updatediff$ has exactly the same eigenvalues as $\update$ except the largest eigenvalue $1$, and the eigenvectors of $\updatediff$ are the differences between the entries of the corresponding eigenvector of $\update$.
Thus we prove the lemma.
\qedsymbol{}

\begin{theorem}[2nd eigenvector of \avgHITS recovers C1P]
\label{theorem: unique}
If $\response$ is a pre-P-matrix with a unique consecutive ones ordering of its rows and each row has the same row sum, then this ordering of the rows of $\response$ is given by the ranking of the rows sorted by values in the 2nd largest eigenvector of $\update$.
\end{theorem}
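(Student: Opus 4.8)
The plan is to reduce the statement to a claim about a \emph{symmetric} Robinson matrix and then invoke the spectral seriation theory of \ABH~\cite{ABH}. Write $\vec{D}_r$ and $\vec{D}_c$ for the diagonal matrices of row and column sums of $\response$ (after discarding any empty option columns, which contribute nothing and do not affect the ranking), so that $\responserow = \vec{D}_r^{-1}\response$ and $\responsecol = \response\vec{D}_c^{-1}$. Since every row of $\response$ has the same sum $r$, we have $\vec{D}_r = r\vec{I}$, and hence, writing $\vec{M}\define\response\vec{D}_c^{-1}\response^{\transpose}$,
\[
  \update \;=\; \responserow(\responsecol)^{\transpose} \;=\; \tfrac1r\,\response\vec{D}_c^{-1}\response^{\transpose} \;=\; \tfrac1r\,\vec{M}.
\]
The matrix $\vec{M} = (\response\vec{D}_c^{-1/2})(\response\vec{D}_c^{-1/2})^{\transpose}$ is symmetric and positive semidefinite, and it is irreducible because $\response$ is connected as a bipartite graph. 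Scaling by the positive constant $1/r$ changes neither eigenvectors nor the algebraic order of eigenvalues, so the $2$nd largest eigenvector of $\update$ equals the $2$nd largest eigenvector of $\vec{M}$. Moreover $\vec{M}\vec{1}_m = \response\vec{D}_c^{-1}(\response^{\transpose}\vec{1}_m) = \response\vec{D}_c^{-1}\vec{D}_c\vec{1}_{kn} = \response\vec{1}_{kn} = r\,\vec{1}_m$, so $\vec{M}$ has constant row sums $r$ and its graph Laplacian $\vec{L}\define r\vec{I} - \vec{M}$ shares the eigenvectors of $\vec{M}$, with the Fiedler vector (eigenvector of the $2$nd smallest eigenvalue of $\vec{L}$) equal to the $2$nd largest eigenvector of $\vec{M}$, hence of $\update$. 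So it suffices to analyze the Fiedler vector of $\vec{L}$.

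Next I would establish the Robinson property. Permute the rows of $\response$ into the (by hypothesis unique) \cons order, so that $\response$ is a P-matrix; we must show the induced ordering $1,2,\dots,m$ is exactly the order of the entries of the Fiedler vector. Because $\response$ is a P-matrix, the set of rows having a $1$ in any fixed column $h$ is an interval $I_h \subseteq [m]$, and therefore
\[
  \vec{M}_{jj'} \;=\; \sum_{h\,:\,j,j'\in I_h} \tfrac1{d_h}.
\]
For $j < j' < j''$, every interval containing both $j$ and $j''$ also contains $j'$, so $\{h : j,j''\in I_h\}\subseteq\{h:j,j'\in I_h\}$ and likewise $\subseteq\{h:j',j''\in I_h\}$; hence $\vec{M}_{jj''}\le\vec{M}_{jj'}$ and $\vec{M}_{jj''}\le\vec{M}_{j'j''}$, i.e.\ $\vec{M}$ is a Robinson matrix in this ordering (equivalently, $\vec{M}$ is pre-Robinson).

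Now I would invoke the spectral seriation theorem of \ABH~\cite{ABH}: for an irreducible pre-Robinson matrix, sorting by the Fiedler vector always yields a Robinson ordering, and when the Robinson ordering is unique (up to reversal) the $2$nd eigenvalue is simple and the Fiedler vector has pairwise distinct entries, so the induced ranking is well defined and equals that Robinson ordering. Combined with the first paragraph, this shows the rows of $\response$ sorted by the $2$nd largest eigenvector of $\update$ come out in a Robinson ordering of $\vec{M}$. It then remains to bridge uniqueness: a unique \cons ordering of $\response$ must force a unique Robinson ordering of $\vec{M}$ (in particular $\lambda_2(\update)$ must be simple, so that ``the $2$nd largest eigenvector'' is well defined at all). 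The idea is that every feature that pins down the C1P ordering of $\response$ also produces a \emph{strict} Robinson inequality in $\vec{M}$: whenever a column forces two rows to be separated in all C1P orderings, that column contributes to one pairwise overlap $\vec{M}_{jj'}$ but not to a nested one $\vec{M}_{jj''}$, while all remaining columns contribute monotonically; chasing this through the PQ-tree description of the C1P orderings of $\response$ and the analogous description of the Robinson orderings of $\vec{M}$ should show the two sets of orderings coincide, after which uniqueness transfers and the theorem follows.

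I expect this last step --- transferring uniqueness from $\response$ to $\vec{M}$, equivalently ruling out ties in the Fiedler vector --- to be the main obstacle: the reduction to $\vec{M}$ and the Robinson property are routine linear algebra and interval combinatorics, and the ABH seriation theorem can be cited as a black box, but the uniqueness bridge needs quantitative control of the entries $\vec{M}_{jj'}$ and is where both the equal-row-sum hypothesis (which makes $\vec{M}$ symmetric, so that the seriation theory applies at all) and the unique-\cons hypothesis are used essentially.
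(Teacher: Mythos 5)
Your reduction and Robinson step are sound and essentially coincide with the paper's intermediate lemmas: equal row sums give $\update=\tfrac1r\,\response\vec D_c^{-1}\response^{\transpose}$, which packages the paper's symmetry and row-stochasticity lemmas in one line, and your interval argument that this matrix is Robinson in the C1P order is the paper's R-matrix lemma. The genuine gap is the step you yourself flag, and it does not close by citation: the black-box statement you attribute to \ABH (uniqueness of the Robinson ordering forces a simple second eigenvalue and a Fiedler vector with pairwise distinct entries, after which sorting recovers that ordering) is not what \ABH prove. Their theorems run in the other direction: assuming the Fiedler value is simple and the Fiedler vector has no repeated entries, sorting yields an R-matrix, and repeated values are handled only through their PQ-tree machinery; moreover their C1P-specific results concern the unnormalized product $\response\response^{\transpose}$, not your column-normalized $\vec M=\response\vec D_c^{-1}\response^{\transpose}$, so they cannot be quoted verbatim either. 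The proposed PQ-tree transfer of uniqueness from C1P orderings of $\response$ to Robinson orderings of $\vec M$ is only a sketch, and it is exactly the content of the theorem that remains unproven.

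The paper closes this step directly, with no uniqueness transfer at all: with the rows in the C1P order it shows the difference-update matrix $\updatediff=\vec S\update\vec T$ is entrywise non-negative (each row of $\vec S\update$ sums to zero, and by the R-matrix property the $j$-th row of $\vec S\update$ has its first $j$ entries non-positive and the remaining ones non-negative, so every partial sum arising in the product with $\vec T$ is non-negative); Perron--Frobenius then yields a non-negative leading eigenvector of $\updatediff$, and the correspondence $\vec y=\vec S\vec x$ between eigenvectors of $\updatediff$ and of $\update$ converts this into monotonicity of the second largest eigenvector of $\update$, from which the claimed ranking follows. If you wish to keep your Laplacian framing, the analogous repair is to prove (or adapt from \ABH) the statement that an R-matrix has a monotone Fiedler vector, applied to $r\vec I-\vec M$, rather than to derive simplicity and distinctness of the Fiedler entries from uniqueness of the Robinson ordering; as written, your final step does not go through.
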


\introparagraph{Proof sketch}
We can first prove that if $\response$ is a pre-P-matrix with a unique consecutive ones ordering of its rows and each row has the same row sum, $\update$ is an R-matrix (defined in \cite{ABH}) where in each row and column, the entries closer to the diagonal are larger than or equal to the further entries.
Using this, we can prove every entry in $\updatediff$ is non-negative by computing each entry in $\updatediff$ step by step according to its definition, which means $\updatediff$ is a non-negative matrix.
We can now apply the Perron-Frobenius Theorem \cite{perron1907theorie, frobenius1912matrizen}: 
there exists a non-negative eigenvector of $\updatediff$ corresponding to the largest eigenvalue of $\updatediff$. 
We know $\updatediff$ has exactly the same eigenvalues as $\update$, 
except the largest eigenvalue $1$, and the eigenvectors of $\updatediff$ are 
the differences between the elements of the corresponding eigenvector of $\update$.
Since the differences between the elements of the eigenvector corresponding to the 2nd largest eigenvalue of $\update$ (largest eigenvalue of $\updatediff$) is non-negative, 
that eigenvector of $\update$ is monotonic.
Therefore, sorting the rows according to the second largest eigenvector ordering of the corresponding update matrix $\update$ gives a P-matrix, proving the theorem.
\qedsymbol{}

\begin{algorithm}[t]
\caption{\HND (\HnD-power): 
A fast implementation of equation \cref{eq:hitsndiffs_naive} to calculate the 2nd eigenvector ordering of $\update=\responserow(\responsecol)^\transpose$
}
\label{alg:hnd}
\SetKwFor{ForAll}{forall}{do}{endfch}	
\KwIn{Response matrix $\response$, randomly initialized user scores $\userscore_0$}
\KwOut{User scores $\userscore$}
\BlankLine
$\vec{s}^{\textrm{diff}} \leftarrow \vec{s}^{\textrm{diff}}_0$ 
\hspace{13.8mm}// initialize user score differences \;
\Repeat{convergence or iteration limit}
{
	$\vec s  \leftarrow \vec T \vec{s}^{\textrm{diff}} $	\label{alg:3} 
	\hspace{9mm}// update user scores \;
	$\qweight \leftarrow (\responsecol)^\transpose  \userscore$	\label{alg:4}	
	\hspace{3.4mm}// update option weights \;
	$\userscore \leftarrow \responserow  \vec \qweight$	\label{alg:5}		
	\hspace{7.4mm}// update user scores \;
	$\vec{s}^{\textrm{diff}}  \leftarrow \vec S \userscore$ \label{alg:6}	
	\hspace{9.5mm}// update user score differences \;
	Normalize $\vec{s}^{\textrm{diff}}$ to be a unit vector\;
}
$\vec s  \leftarrow \vec T \vec{s}^{\textrm{diff}} $
\end{algorithm}

\begin{theorem}
\label{theorem: main}
If $\response$ is a pre-P-matrix with a unique C1P ordering of its rows and each row has the same row sum, 
then \HnD reconstructs the consistent ordering of the users 
taking linear time in the number of nonzeros in $\update$ per iteration.
\end{theorem}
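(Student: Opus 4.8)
The plan is to split \cref{theorem: main} into a correctness claim and a per‑iteration running‑time claim, and to obtain correctness almost for free from \cref{lemma:EigenvectorCorrespondence} and \cref{theorem: unique} once we pin down what the loop of \cref{alg:hnd} actually computes. The first step is to observe that one pass through lines~\ref{alg:3}--\ref{alg:6} applies, in sequence, the factors $\vec{T}$, $(\responsecol)^{\transpose}$, $\responserow$, and $\vec{S}$ to the current iterate, i.e.\ it performs exactly the update $\vec{s}^{\textrm{diff}} \leftarrow \vec{S}\,\responserow(\responsecol)^{\transpose}\vec{T}\,\vec{s}^{\textrm{diff}} = \updatediff\,\vec{s}^{\textrm{diff}}$ of \cref{eq:hitsndiffs_naive}, followed by renormalization to unit length. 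Hence \HnD is nothing but the normalized power method applied to $\updatediff$, and it remains to show (i) that this iteration converges to the leading eigenvector $\vec{y}$ of $\updatediff$, and (ii) that the vector $\vec{s} = \vec{T}\,\vec{y}$ returned on the last line has the same sorted order as the second‑largest eigenvector of $\update$, which by \cref{theorem: unique} is the unique \cons (consistent) order of the users.

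For (ii) I would argue purely algebraically: by \cref{lemma:EigenvectorCorrespondence}, $\vec{y}$ is parallel to $\vec{S}\vec{x}$ where $\vec{x}$ is the second‑largest eigenvector of $\update$; using the identity $\vec{T}\vec{S} = \vec{I}_m - \onesvec\vec{e}_1^{\transpose}$ established in the proof of \cref{lemma:EigenvectorCorrespondence}, the returned vector satisfies $\vec{s} \propto \vec{T}\vec{S}\vec{x} = \vec{x} - x_1\onesvec$, which differs from $\vec{x}$ only by an additive constant and therefore induces the identical ordering of the rows; \cref{theorem: unique} then identifies that ordering with the C1P order of the users. For (i) I would invoke the proof of \cref{theorem: unique}, where under the hypotheses of \cref{theorem: main} the matrix $\updatediff$ is shown to be entrywise non‑negative; combined with connectivity of $\response$ (assumed throughout \cref{sec:avghits}) this makes $\updatediff$ irreducible, and a short additional check that $\updatediff$ is \emph{primitive} (e.g.\ by exhibiting a strictly positive power, or irreducibility together with a nonzero diagonal entry) yields a strictly dominant, simple, positive eigenvalue $\rho(\updatediff) = \lambda_2(\update)$, so the normalized power method started from the random $\userscore_0$ converges to $\vec{y}$ up to sign — and the sign is immaterial since an ordering and its reverse are identified.

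For the running time I would go line by line through the loop of \cref{alg:hnd}: $\vec{s} \leftarrow \vec{T}\,\vec{s}^{\textrm{diff}}$ is a prefix sum and $\vec{s}^{\textrm{diff}} \leftarrow \vec{S}\,\userscore$ a vector of adjacent differences, each $O(m)$, as is the unit normalization; the two updates $\qweight \leftarrow (\responsecol)^{\transpose}\userscore$ and $\userscore \leftarrow \responserow\,\qweight$ are sparse matrix--vector products, each costing $O(mn)$ since $\responsecol$ and $\responserow$ have the same $mn$-entry support as $\response$. Thus one iteration runs in $O(mn)$ time, i.e.\ linear in the number of nonzero entries of the response matrix, which is the claimed linear bound; crucially, \HnD never materializes the possibly dense matrices $\update$ or $\updatediff$, so this is never slower — and typically much faster — than a single plain power‑method step that works with $\update$ directly.

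The routine parts here are the line‑by‑line cost accounting and the $\vec{T}$/$\vec{S}$ bookkeeping in (ii). The real obstacle is the convergence claim (i): upgrading ``$\updatediff$ is non‑negative'' (inherited from \cref{theorem: unique}) to ``the power method converges'', i.e.\ ruling out additional eigenvalues of maximum modulus. Since an R‑matrix $\update$ can have vanishing corner entries, $\updatediff$ need not be strictly positive, so I expect to rely on irreducibility (from connectivity of $\response$) plus an aperiodicity argument that exploits the strict inequalities in the R‑matrix structure forced by the \emph{uniqueness} of the C1P order. A secondary subtlety is the ``almost‑sure'' qualifier attached to a random start, handled by a measure‑zero genericity argument — or sidestepped entirely if $\userscore_0$ is taken non‑negative, since then primitivity alone forces convergence to the positive Perron vector.
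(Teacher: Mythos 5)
Your proposal is correct and follows essentially the same route as the paper: correctness comes from combining \cref{lemma:EigenvectorCorrespondence} with \cref{theorem: unique} to conclude that the converged vector induces the second-largest-eigenvector ordering of $\update$, and the per-iteration bound comes from the same line-by-line accounting of \cref{alg:hnd} (prefix sums, adjacent differences, and normalization in $\bigO(m)$, plus two sparse matrix--vector products in $\bigO(mn)$, never materializing $\update$ or $\updatediff$). The one point where you go beyond the paper---demanding strict dominance (irreducibility plus primitivity) of the leading eigenvalue of $\updatediff$ so the power method provably converges, and handling the sign/initialization genericity---is a real subtlety, but the paper's own proof does not address it either and simply asserts convergence of $\vec{s}^{\textrm{diff}}$ to the largest eigenvector of $\updatediff$, so your flagged-but-unfinished aperiodicity argument is added rigor rather than a different approach.
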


\introparagraph{Proof sketch}
From \cref{lemma:EigenvectorCorrespondence}, we know by converting the converged largest eigenvector of $\updatediff$ back into a user score, we regain the ordering of the rows according to values in the second largest eigenvector of $\update$.
This, along with \cref{theorem: unique}, proves this theorem that \HnD detailed in \cref{alg:hnd} reconstructs the ideal consistent ordering.
\qedsymbol{}

\subsection{Decile entropy-based symmetry breaking}
Notice that reversing the order of a P-matrix still leaves it as a P-matrix.
Thus all methods for solving \cons suffer from a natural \emph{symmetry breaking problem}:
they have to decide between the order returned by an algorithm \emph{or its exact inverse}.

Our solution to this symmetry-breaking problem is motivated by the following observation:
\emph{users with higher ability tend to converge on the correct option as a majority answer},
while users with lower ability at the other end of the ordering tend to answer randomly.
This idea is similar to the main argument in \cite{li2017hyper} 
that experts tend to answer similar correct answers.
Thus the lower end of the user ordering 
has a \emph{higher entropy} in the choices picked than the higher quality end.
Notice that this idea is also implicit in IRT models with random guessing
where users with low ability choose uniformly random among the options (hence have high entropy),
whereas users of high ability pick the single correct choice.

We operationalize this idea in a new heuristic that is very effective in practice:
Given a ranking of the users, we compute,
for the top and the bottom user decile,
the average entropy of the chosen item options across all items.
We pick the side with lower entropy as the users with higher quality. 
We use this ``\emph{decile entropy method}'' for both \HnD and \ABH in our experiments.

\subsection{Why \HnD works better than ABH}
\label{sec:Hnd_vs_ABH_theory}

\HnD and \ABH rely on strikingly similar intuitions about spectral properties of matrices:
\HnD ranks users by the 2nd largest eigenvector of $\update$ 
(whose difference is the largest eigenvector of $\updatediff = \vec{S} \responserow (\responsecol)^\transpose \vec{T}$),
whereas 
it can be shown that \ABH ranks users
by the 2nd smallest eigenvector of the Laplacian matrix $\laplacian$ of $\response\response^\transpose$
(whose difference is the smallest eigenvector of $\vec M = \vec S \laplacian \vec T$).
In an ideal scenario with consistent responses 
(thus in an IRT scenario with very large discrimination scores), 
both methods are guaranteed to return the correct C1P ordering.\footnote{Recall that they are guaranteed to return the same ordering, but not the same eigenvector.}
	
We can also expect the accuracy to be identical in the other extreme scenario where all questions have 0 discrimination. 
But how can we expect their accuracy to compare in the more general scenario?

We interpret the general IRT scenario as random perturbations~\cite{stewart1990matrix} from the ideal C1P case.
Now notice that the smallest eigenvector of $\vec M$ 
is identical to the largest eigenvector of 
$\beta \vec I_{m-1} - \vec M$ where $\beta$ is larger than all the entries and all the eigenvalues of $\vec M$.\footnote{To see that, assume $\vec A \vec v= \lambda \vec v$.
Then $(\vec A + \beta \vec I) \vec v = \vec A \vec v + \beta \vec I \vec v = \lambda \vec v + \beta \vec v = (\lambda + \beta) \vec v$.
Thus if $\vec v$ is an eigenvector of $\vec A$ with eigenvalue $\lambda$, then 
$\vec v$ is also an eigenvector of the spectrally shifted matrix $\vec A + \beta \vec I$, but with eigenvalue $\lambda + \beta$.}
Thus the comparison of \HnD and \ABH corresponds to the largest eigenvector of $\updatediff$ against $\beta \vec I - \vec M$.
Since both matrices have all non-negative entries in the ideal scenario,
we know from the Perron-Frobenius theorem \cite{perron1907theorie, frobenius1912matrizen} 
that all values in their largest eigenvector are non-negative.

The user score of the $k$th user equals to the cumulative sum of the first $k-1$ entries in the eigenvector. 
In the ideal case when $C$ is a C1P matrix, every entry of $\vec{s}^{\textrm{diff}}$ is non-negative so $\userscore$ give us a perfect ranking of the students. 
In the non-ideal scenario when $C$ is not a C1P matrix and the users are permuted by their abilities, 
the sign of the entries in $\vec{s}^{\textrm{diff}}$ change.
When the eigenvector is even, a simple sign change in one of the entries does not influence the entire ranking of $\userscore$ but when the eigenvector has a large variance, a simple sign change in one large entry can break the entire ranking.
For example, if the $k$th entry of $\vec{s}^{\textrm{diff}}$ is quite large but the sign is negative, the error of the ranking of the ($k+1$)th students can be very large. 

Based on our observation above, we expect \HnD to work better than \ABH as the variance of largest eigenvector of $\beta \vec I - \vec M$ should be much larger compared to $\updatediff$.
The result is verified with dedicated experiments in \cref{sec:stability}:
\Cref{fig:comparison_variance} shows our observations on the variances of the $\vec{s}^{\textrm{diff}}$ for $\beta \vec I - \vec M$ and $\updatediff$.
\cref{fig:comparison_difference} and \cref{fig:comparison_accuracy} verify that \ABH is less accurate and less stable than \HnD.

\subsection{Complexity Comparison}
\label{sec:complexity}

We analyze the asymptotic time complexity of the various methods.
We compare \HnD against ($i$) existing C1P reconstruction algorithms \BL and \ABH, 
and ($ii$) the deflation method~\cite{pml1Book, DBLP:conf/nips/Mackey08} as an alternative method to compute the 2nd largest eigenvector for \avgHITS.

The time complexity depends on the number of users $m$, 
the number of questions $n$, and the number of iterations $t$ which may be different for different methods.
	
We assume $t \ll n$ and $t \ll m$ and thus only focus on $m$ and $n$.
Notice that although the response matrix $\response$ is a ($m \times kn$)-matrix, 
it has only $\O(m  n)$ 
non-zero entries since every user can pick only one label per question.

\introparagraph{\HND} 
A naive way to calculate the ranking is to first compute $\updatediff$ and then use the power method on it.
However, computing $\updatediff$ requires a matrix-matrix multiplications before the iterations with time complexity  $\bigO(m^2n)$.
Since $\updatediff$ is a $(m-1) \times (m-1)$ matrix,
the time complexity to run the power method on $\updatediff$ 
is $\bigO(m^2)$ per iteration.
This gives a total time complexity for the naive implementation as $\bigO(m^2n) + \bigO(m^2t) = \bigO(m^2n)$.
By instead running the mutual updates of $\qweight$, $\userscore$ and $\userdiff$ as described in \cref{alg:hnd}, 
we can replace matrix-matrix multiplications with several matrix-vector multiplications
and thereby get a more efficient implementation of \HND in \fbox{$\bigO(mnt)$}.
In other words, the speed-up results from applying the associativity law and replacing the calculation
$\vec s \leftarrow (\vec{S} \responserow (\responsecol)^\transpose \vec{T})
\vec{s}^{\textrm{diff}}$ 
with 
$\vec s \leftarrow \vec{S} (\responserow ((\responsecol)^\transpose (\vec{T}
\vec{s}^{\textrm{diff}})))$.
One detail is that we need however to implement \cref{alg:3} differently.
Materializing the matrix $\vec T$ would take $\bigO(m^2)$. 
Instead, we calculate the entries 
for $\vec s$ from 
$\vec{s}^{\textrm{diff}}$
via cumulative summation (e.g.\ via \texttt{numpy.cumsum} in Python).

\introparagraph{The deflation method}
\Cref{theorem: unique}
showed that our problem can be formulated as
finding the 2nd largest eigenvector $\vec v_2$ 
of $\update$.
The problem of finding the eigenvector corresponding to the second largest eigenvalue of a given matrix $\vec A$ can be solved with the \emph{deflation method}~\cite{pml1Book, DBLP:conf/nips/Mackey08}.
The idea is to first calculate the dominant eigenvector $\vec v_1$, and then eliminate the influence of the $\vec v_1$  from $\vec A$ to get a new matrix $\vec B$ whose 1st eigenvector is the 2nd eigenvector of $\vec A$.
Then $\vec v_2$ of $\vec A$  can be obtained by using the power method on $\vec B$.  
We next argue (and later show experimentally in \cref{sec:sca}) that using the deflation method is slightly less efficient than $\HnD$
(in addition to being not as simple to formulate as \HnD-power).

The most widely known deflation method 
\cite{pml1Book, DBLP:conf/nips/Mackey08} 

only works for symmetric matrices and does not apply to the asymmetric $\update$.
\cite{Paul58} presents several more variants of the deflation method including some that work for non-symmetric matrices.
Most of those methods either require matrix-matrix multiplication or both the left dominant eigenvector and the right dominant eigenvector.
The only exception is Wilkinson's vector annihilation \cite{Paul58Wilkinson}
which only needs the right dominant eigenvector 
(which we know is a unit vector in the direction of the all ones vector in our case). However, \cite{Paul58} claimed that this method is difficult to apply in practice because of the need to conduct annihilation between the power iterations and we found no open-source implementation.

For our experiments in \cref{sec:exp} we implement \emph{Hotelling's matrix deflation} \cite{Paul58Hotelling} which uses both the left and right largest eigenvectors and thus requires one more round of the power iteration.\footnote{
We first calculate the dominant left eigenvector via power iteration, then deflate the matrix, 
and then calculate the dominant right eigenvector on the deflated matrix.}
The experimental result in \cref{sec:sca} verifies that \HnD is not just conceptually simpler
but also slightly more efficient than the deflation method.

\introparagraph{ABH \cite{ABH}}
To reconstruct the C1P ordering, ABH requires the computation of the Fiedler vector \cite{fiedler1989laplacian}, 
which is the eigenvector corresponding to the 
2nd smallest eigenvalue of the related Laplacian matrix 
$\laplacian$.

The original \ABH paper \cite{ABH} does not propose an explicit solution and instead refers 
to the Lanczos algorithm \cite{lanczos1950iteration, lanczos}, whose time complexity is $\bigO(dmt)$ 
with $d$ being the average number of non-zero entries in a row of a given $(m \times m)$ matrix $\vec A$.
When the Laplacian matrix is dense as in our scenarios 
the time complexity of the Lanczos algorithm is $\bigO(m^2t)$.
It is efficient for eigenvector computations on large symmetric matrices
\cite{cullum2002lanczos},
and the state-of-the-art Fiedler vector solver~\cite{hu2003fast} uses Lanczos.

However, implementation by libraries such as Scipy~\cite{2020SciPy-NMeth} and Tenpy~\cite{tenpy}, require the full matrix as input, which would require us to compute the Laplacian matrix first. 
This calculation involves matrix-matrix multiplications and, as we show in \cref{sec:sca}, results in 
\fbox{$\bigO(m^2n)$}.

We provide another solution for \ABH which is not in the original paper \cite{ABH}.
Similar to how we implement \HnD as \cref{alg:hnd}, 
we can also implement \ABH by using the power method on the matrix $\beta \vec I_{m-1} - \vec M$ to get its largest eigenvector without having matrix-matrix multiplications.
As we discussed in \cref{sec:Hnd_vs_ABH_theory}, 
this largest eigenvector of $\beta \vec I_{m-1} - \vec M$ is identical to the smallest eigenvector of 
$\vec M$ which can be used to compute the order of the 2nd smallest eigenvector of $\vec L$ 
in the same way as \cref{alg:hnd}.
The total time complexity for this algorithm is \fbox{$\bigO(mnt + m^2t)$}.
When $m$ and $n$ are close or $m < n$, the time complexity becomes $\bigO(mnt)$ which is the same as \HnD.
However, when $m \gg n$, the time complexity becomes $\bigO(m^2t)$, which is larger than $\bigO(mnt)$ of \HnD.

\introparagraph{BL \cite{boothleuker}}
The original paper by Booth and Leuker (\BL) \cite{boothleuker} for reconstructing the \cons property 
can work directly on the initial response matrix 
and runs in $\bigO(mk + n + f)$, where $f$ is the non-zero entries in the matrix.
In our setup where $f = \bigO(mn)$, the time complexity is $\bigO(mn)$.
Therefore, BL is the fastest method when it works. 
Since it cannot be used for solving ability discovery in general, we are not using it in our experiments.

\introparagraph{\HITS \cite{hits}, Truthfinder \cite{DBLP:journals/tkde/YinHY08}, 
Investment \cite{PasternackR2010:FactFinder}, PooledInvestment \cite{PasternackR2010:FactFinder}}
All these methods are iteration-based variants of HITS 
that take at least $\bigO(mnt)$
and differ in how they iterate between the user and the item scores.
In practice, only \HITS can be defined as an eigenvector problem with a closed-form solution and efficient linear algebra implementation. Truthfinder converges in practice,
while Investment and PooledInvestment can not converge and return different results depending on initialization.
Our approach in contrast comes with the same computational properties as \HITS: 
guarantee of convergence, unique solution, an intuitive formulation as a spectral problem, and an efficient matrix implementation.

\section{Experiments}
\label{sec:exp}

Our experiments compare the \emph{accuracy} of ability discovery and the \emph{scalability} of the various methods.
The main takeaways from the experiments are:
1) \HnD robustly returns rankings for users with accuracy on average \emph{better than or equal to} other  
truth discovery methods;
2) \HnD is competitive with two ``\emph{cheating competitors}" (that are provided the ground truth information about the correct options for each question that is usually not available);
3) \HnD has better scalability as a C1P reconstruction algorithm than \ABH.

\subsection{Experimental setup}
\label{sec:benchmark}

\introparagraph{Environment}
All scalability experiments are run on Intel Xeon E5-2680 CPUs with an exclusive environment and 128G allocated memory.
We implemented \HnD in Python 3.8.1.

\introparagraph{Benchmarks}
\cite{sheshadri2013square} provides an extensive benchmark for the \emph{truth discovery problem}, 
and \cite{Zheng:2017:TIC:3055540.3055547} provides a broad survey of existing truth discovery methods.
Two points stand out: 
1) All open-source datasets used in the two papers \emph{lack ground truth for user abilities}.
2) All 20 datasets
fall under 
the
setting with \emph{homogeneous items}.
As we discussed in \cref{sec:intro}, it is easy to understand the lack of ground truth for the ability discovery problem because the \emph{user abilities are abstract and not able to be obtained from external knowledge}.
To make up for it,
we first create synthetic data based on 
the polytomous models from
\emph{Item Response Theory (IRT)} (recall \cref{sec:irt}).
Moreover, we use a real-world MCQ dataset with approximate (but not accurate enough) ground truth as a supplementary evidence to verify the usefulness of \HnD in \cref{sec:multichoice}.

\introparagraph{Polytomous synthetic data generator}
\label{sec:synth}
We use the three polytomous IRT models from~\cref{sec:irt} 
(GRM~\cite{samejima1997graded}, Bock~\cite{bock1972estimating} and Samejima~\cite{samejima1979new}) to generate synthetic data sets with known ground truth.
Samejima model takes random guessing into account so it models the educational test scenario where students try to maximize their scores.
Bock and GRM models with no random guessing models the crowdsourcing scenario where workers usually do not guess.

By default, we set user ability $\theta$ to be within $[0,1]$, 
item difficulty $b$ to be within $[-0.5,0.5]$, 
and the item discrimination $a$ to be within $[0,10]$, all uniformly random.
Besides varying the number of users, items and options, 
\cref{sec:acc} also has experiments with shifted $b$'s (chosen to achieve a certain percentage of users giving correct answers).\footnote{For experiments with GRM data, we use the data generator from the GIRTH package which requires at least $k=3$ options. 
We implemented Bock and Samejima generators ourselves and thus options can start from $k=2$.}

\introparagraph{Methods and their implementations}
For a thorough evaluation, we created three alternative implementations of \HnD and two alternative implementations of \ABH.
\textbf{\HnD-power} follows \cref{alg:hnd} which only involves matrix-vector multiplications. 
\textbf{\ABH-power} is our novel kimplementation of \ABH
that avoids matrix-matrix multiplications by using the power method on the matrix $\beta \vec I_{m-1} - \vec M$.
\textbf{\ABH-direct} is the implementation of \ABH using the Lanczos algorithm as suggested by \cite{ABH}.
\cref{sec:complexity} discussed the drawback of requiring matrix-matrix multiplications.
We use an efficient sparse Linear Algebra Python library called Scipy~\cite{2020SciPy-NMeth}.
\textbf{\HnD-direct} implements \HnD similarly by directly computing the 2nd largest eigenvector of $\update$ 
by using the Arnoldi algorithm \cite{arnoldi1951principle}, which can be considered the general version of the Lanczos algorithm on asymmetric matrices, also using Scipy.
\textbf{\HnD-deflation} implements \HnD with the deflation method discussed also in \cref{sec:complexity}.
For \HnD-power, \ABH-power and \HnD-deflation, the criterion for convergence 
is a maximal L2-norm of $10^{-5}$ over the change.
For our experiments (other than \cref{sec:sca}) we used ``\HnD-power'' for \HnD and ``\ABH-direct'' for \ABH 
since they turned out to be the fastest implementations.

We also implemented \textbf{HITS}~\cite{hits}, \textbf{TruthFinder}~\cite{DBLP:journals/tkde/YinHY08},
\textbf{Investment} and \textbf{PooledInvestment}~\cite{PasternackR2010:FactFinder}.
Since none of those iteration-based approaches (except HITS) allows an efficient matrix formulation, 
our implementation in Python uses loops and is not efficient.
We thus do not report scalability experiments on those methods as native implementation in C++ would bring those close to \HnD as discussed in \cref{sec:complexity}.
For Investment and PooledInvestment (which do not converge) 
we use 10 iterations instead of tuning the number of iterations.

\introparagraph{Two cheating baselines}
To show the effectiveness of \HnD, we also compare \HnD with two ``{cheating competitors}" that are given additional ground truth information about questions that is usually not available: 
\textbf{True-answer}
has information about which choices are correct for each question (which is usually not known in our scenario)
and then 
ranks users by the number of correctly answered questions.
\textbf{GRM-estimator} uses a Python package called GIRTH~\cite{girth}
that estimates the parameters of a GRM model including user abilities.
However, it \emph{requires knowing the order of options} for each question by correctness. 

Our comparison with this approach is notable because it is the theoretically ``best'' model 
to fit data generated by the same synthetic GRM process.

\subsection{Accuracy on synthetic data}
\label{sec:acc}

\begin{figure*}[h!]
\captionsetup[subfigure]{justification=centering}
    \centering
	\begin{subfigure}[h]{0.23\textwidth}
        \centering
        \includegraphics[width=\textwidth]{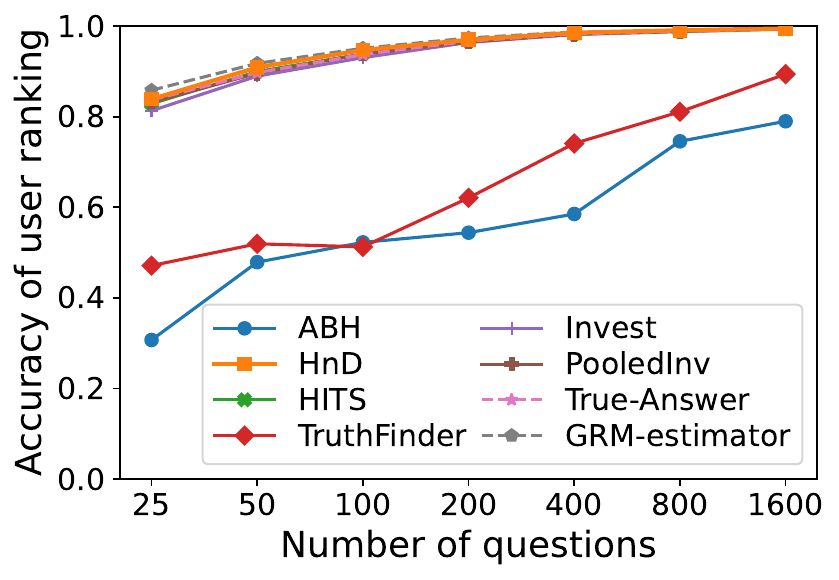}
        \caption{Varying $n$ (GRM)
		}
        \label{fig:acc_grm_q}
    \end{subfigure}
	~
	\begin{subfigure}[h]{0.23\textwidth}
        \centering
        \includegraphics[width=\textwidth]{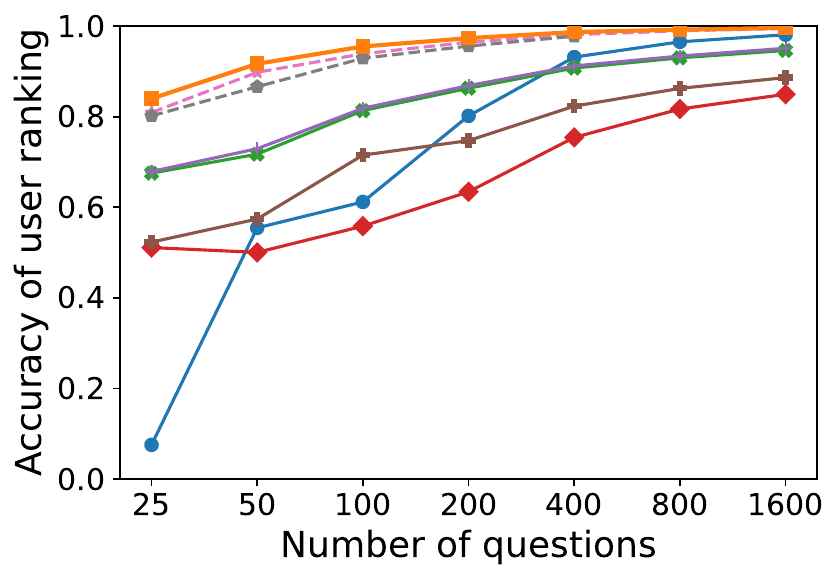}
        \caption{Varying $n$ (Bock)}
        \label{fig:acc_bock_q}
    \end{subfigure}
    ~
     \begin{subfigure}[h]{0.23\textwidth}
        \centering
        \includegraphics[width=\textwidth]{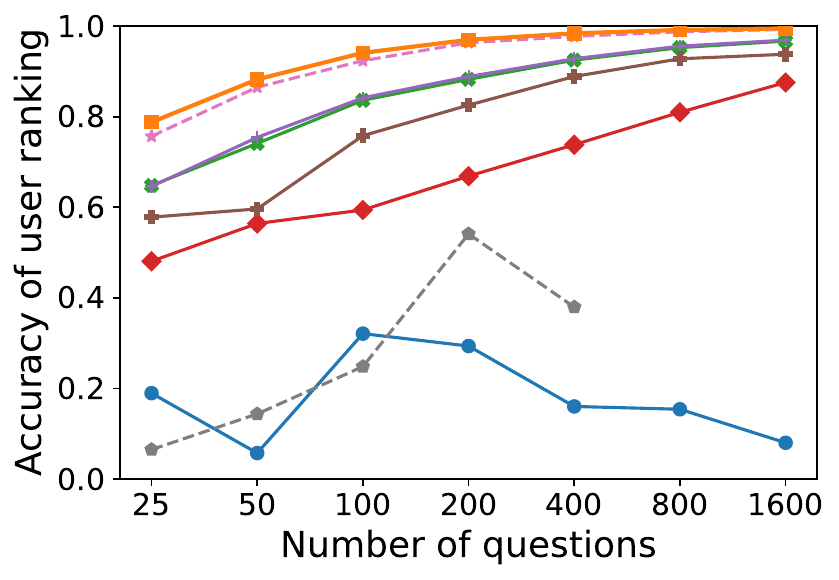}
        \caption{Varying $n$ (Samejima)}
        \label{fig:acc_samejima_q}
    \end{subfigure}
    ~
    \begin{subfigure}[h]{0.23\textwidth}
        \centering
        \includegraphics[width=\textwidth]{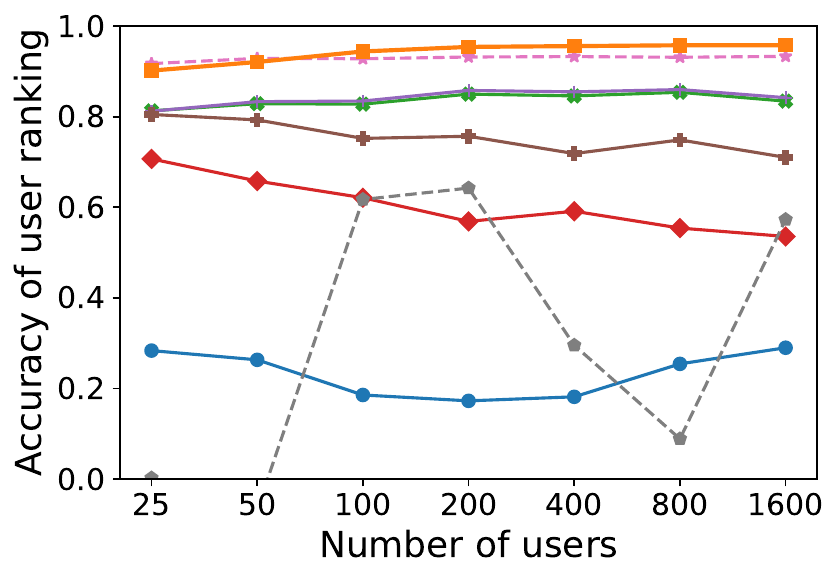}
        \caption{Varying $m$ (Samejima)}
        \label{fig:acc_samejima_u}
    \end{subfigure}
	
    \begin{subfigure}[h]{0.23\textwidth}
        \centering
        \includegraphics[width=\textwidth]{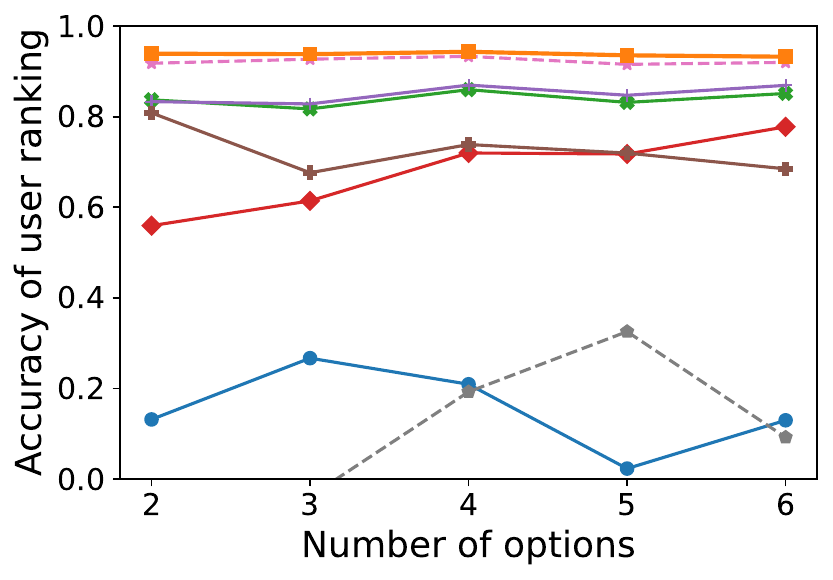}
        \caption{Varying $k$ (Samejima)}
        \label{fig:acc_samejima_o}
    \end{subfigure}
	~
    \begin{subfigure}[h]{0.23\textwidth}
		\centering
		\includegraphics[width=\textwidth]{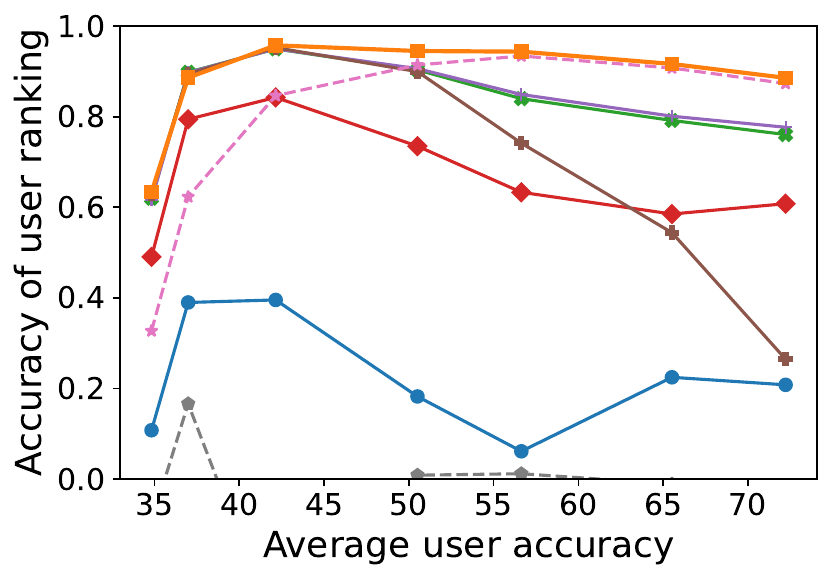}
        \caption{Varying $b_{ih}$ (Samejima)}
        \label{fig:acc_samejima_d}
    \end{subfigure}
    ~
    \begin{subfigure}[h]{0.23\textwidth}
        \centering
        \includegraphics[width=\textwidth]{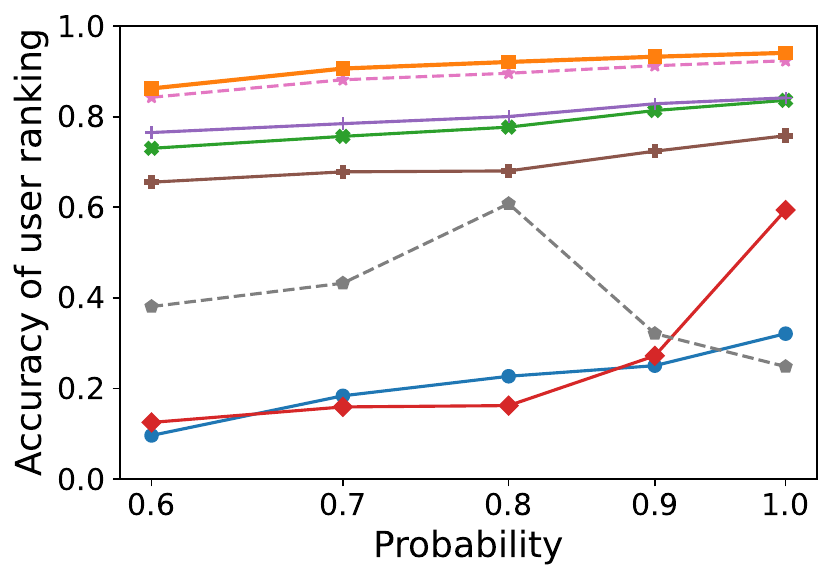}
        \caption{Varying $p$  (Samejima)}
        \label{fig:acc_samejima_p}
    \end{subfigure}
    ~
    \begin{subfigure}[h]{0.23\textwidth}
		\centering
		\includegraphics[width=\textwidth]{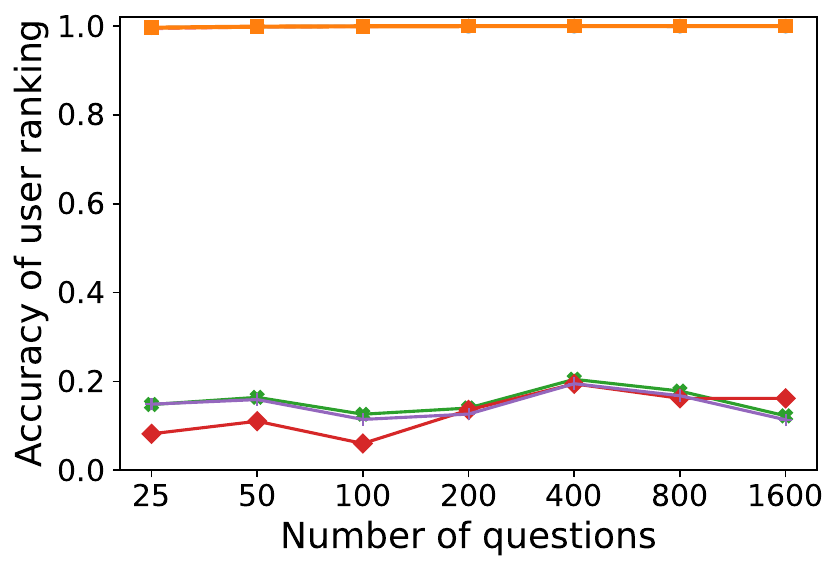}
        \caption{Varying $n$ (C1P)}
        \label{fig:acc_c1p}
    \end{subfigure}
	\caption{\Cref{sec:acc}: Results of accuracy experiments (the legend is in the first figure).	}
\label{fig:acc}
\end{figure*}

\introparagraph{Accuracy}
To determine the accuracy of a method, 
we calculate \emph{Spearman's rank correlation coefficient} 
\cite{spearman1904proof} 
between the returned user ranking and the ground truth ranking by their actual abilities.
Spearman's correlation is defined as the \emph{Pearson correlation 
between the rankings} of two scoring functions and ranges between $-1$ and~$1$.
It is similar to Kendall's correlation, yet strictly preferred if there are ties in the data~\cite{PuthSpearmanKenall:2015}.
There can be negative accuracy at times (not shown in \Cref{fig:acc}), which means the returned ranking is negatively correlated or random whose coefficient is near 0.

\introparagraph{Setup}
We conduct 
experiments to determine the accuracy as function of the
1) \emph{number of items $n$},
2) \emph{number of users $m$},
3) \emph{number of options $k$},
4) \emph{option difficulties $b_{ih}$},
and
5) \emph{probability of questions to be answered $p$}.
In the first experiment, we use data generated according to the three polytomous models (GRM, Bock, Samejima) from~\cref{sec:irt} to also show the robust performance of \HnD for all three models.
In other experiments, we only use data generated according to the Samejima model since it is the most general one to avoid redundancy.
To verify the ability of algorithms to recover a C1P ranking, 
we also generate data that 6) \emph{follows the consistent response property} 
(which as discussed in \cref{sec:c1p} is the case for IRT models when the discrimination $a_{ih} \rightarrow \infty$).
Every question has the same number of options, and every user answers every question.
By default, we set the users $m=100$, items $n=100$, and options $k=3$.

\introparagraph{1. Varying number of questions $n$
(\Cref{fig:acc_bock_q,fig:acc_samejima_q,fig:acc_grm_q}\footnote{The GRM estimator does not work when the question number is large.})} 
\HnD has better than or equal accuracy as the other methods even including the two cheating competitors over data generated by all three models.
Notice that the GRM-estimator 
works poorly for 
Samejima because it does not take random guessing into account.

\introparagraph{2. Varying number of users $m$
(\Cref{fig:acc_samejima_u})}
\HnD works here also better than or equal to other approaches
(except for the data point with low $m$ where the cheating competitors win).

\introparagraph{3. Varying number of options $k$
(\Cref{fig:acc_samejima_o})} 
\HnD stays top and accurate, even slightly outperforming True-answer.

\introparagraph{4. Varying question difficulties $b_{ih}$
(\Cref{fig:acc_samejima_d})} 
Here, we change the difficulty range from the default $[-0.5,0.5]$ to 7 different ranges, $[-1,0]$, $[-0.75,0.25]$, $[-0.5,0.5]$, $[-0.25,0.75]$, $[0,1]$, $[0.25,1.25]$, $[0.5,1.5]$,
while user abilities $\theta_j$ remain at $[0,1]$.
Thus even the least able user has a high probability to answer a difficult question
in the easiest setting,
while even the best user can be incorrect for some easy questions in the hardest setting.
The x-axis here is the \emph{average accuracy} on the questions across all the users.
In all scenarios, we see \HnD outperforms other competitors.

\introparagraph{5. Varying probability $p$ of answering a question (\Cref{fig:acc_samejima_p})}
To show that \HnD works for more general scenarios where users 
\emph{answer different number of questions}, we vary the probability of questions to be answered.
For each pair of question and user, there is the probability of $p$ for the user to answer the question.
We see that \HnD performs well even when the dataset is not complete.

\introparagraph{6. C1P (\Cref{fig:acc_c1p})}\footnote{In the experiments, PooledInv returned all negative coefficients but we consider its ranking to be the reverse one.}
In addition to the three multinomial IRT models, we also generate response matrices that are consistent and can be reconstructed to be a P-matrix.
These responses correspond to a random GRM instance with very strong discrimination $a$.
We use these matrices to verify the effectiveness of \HnD in reconstructing a C1P permutation.
To avoid ties in the rankings and provide a unique C1P ordering, we set both the user ability $\theta$ and the difficulty parameter $b$ to be within $[0,1]$, randomly chosen.
We see that \HnD and \ABH are indeed the only two methods that can reconstruct the C1P permutation if there exists one.

\introparagraph{Summary}
\HND is a \emph{robust} method that outperforms the other approaches in most setups, especially those with high discrimination. 
We see this as vindication for designing an approach based on the principle that \emph{consistent answers need to be solved correctly}.
\HnD is also competitive even against the two cheating approaches which have the best item answer given  
(i.e.\ they have access to an oracle that can solve the entire problem of truth discovery).
Moreover, we verified that \HnD and \ABH are indeed the only ones that can reconstruct a C1P permutation if it exists.

\subsection{Scalability experiments}
\label{sec:sca}

\begin{figure}[t]
\captionsetup[subfigure]{justification=centering}
    \centering
    \begin{subfigure}[h]{0.23\textwidth}
        \centering
        \includegraphics[scale=0.29]{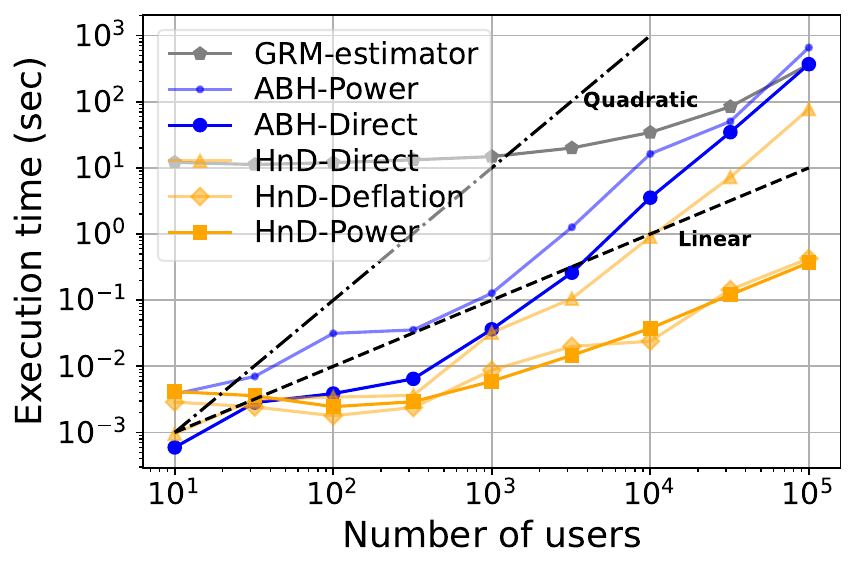}
        \caption{Scalability with users ($m$).}
        \label{fig:sca_u}
    \end{subfigure}
    \begin{subfigure}[h]{0.23\textwidth}
        \centering
        \includegraphics[scale=0.29]{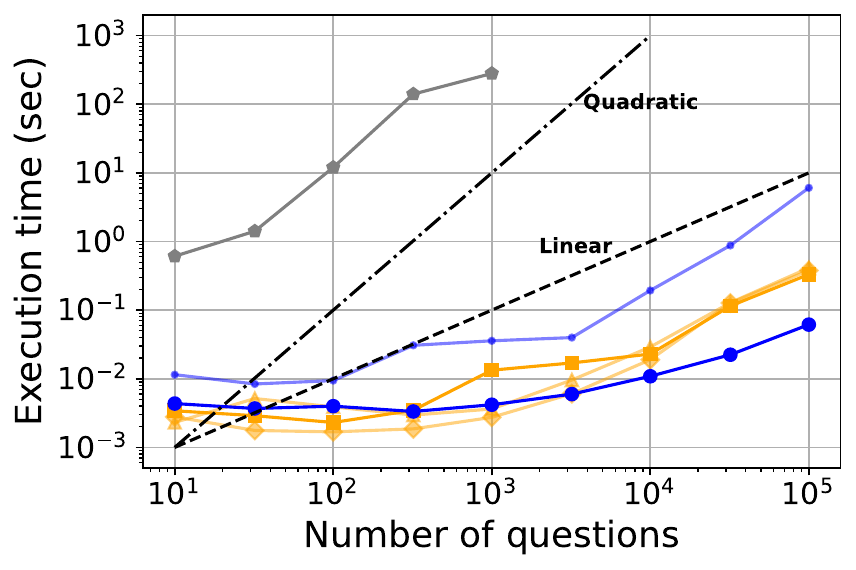}
        \caption{Scalability with items ($n$).}
        \label{fig:sca_q}
    \end{subfigure}
\caption{\Cref{sec:sca}: Scalability experiments with $n=100$ items and increasing numbers of users $m$ in (a),
or $m=100$ users and increasing numbers of items $n$ in (b).
The experiments confirm that our method (HnD) scales linearly in the number of items and users, whereas ABH (even trying various alternative methods) has an unavoidable quadratic scalability in the number of users.
}
\label{fig:sca}
\end{figure}

\Cref{fig:sca_u,fig:sca_q} show the scalability of our various implementations of \ABH and \HnD, as well as the GRM-estimator 
w.r.t.\

number of users ($m$) and
questions ($n$).
Each shown data point is the median over 5 runs, 
and we set a timeout of 1,000 seconds.
\introparagraph{\HnD vs.\ \ABH}
\Cref{fig:sca_u} shows that \ABH-direct and \HnD-direct scale with $O(m^2k)$ in the number of users as predicted in \cref{sec:complexity}.
The theoretic time complexity of \ABH-power is $O(m^2t)$ when $m$ is much larger than $n$, and thus it also takes quadratic time.
In contrast, \HnD-power can scale linearly and is about 20\% faster than \HnD-deflation on average for $m>1000$ users
as it needs only one round of the power method.
\Cref{fig:sca_q} shows that although \ABH-direct is slightly faster for fixed few users, 
all implementations are efficient even for a large number of questions.

\introparagraph{GRM-estimator}
As a representative of max likelihood parameter estimation, 
the GRM-estimator is expected to perform best on GRM data.
However, \Cref{fig:sca} shows that such parameter estimation
is by orders of magnitude
slower than \HnD .

\introparagraph{Summary}
\HnD scales asymptotically and practically better than the other existing C1P reconstruction algorithm \ABH
in the number of users.
Moreover, our intuitive \cref{alg:hnd} is slightly faster than an adaptation of the deflation method.

\subsection{Stability experiments for \ABH and \HnD}
\label{sec:stability}

\begin{figure*}[h!]
	    \captionsetup[subfigure]{justification=centering}
	    \centering
	    \begin{subfigure}[h]{0.25\textwidth}
	        \centering
	        \includegraphics[scale=0.3]{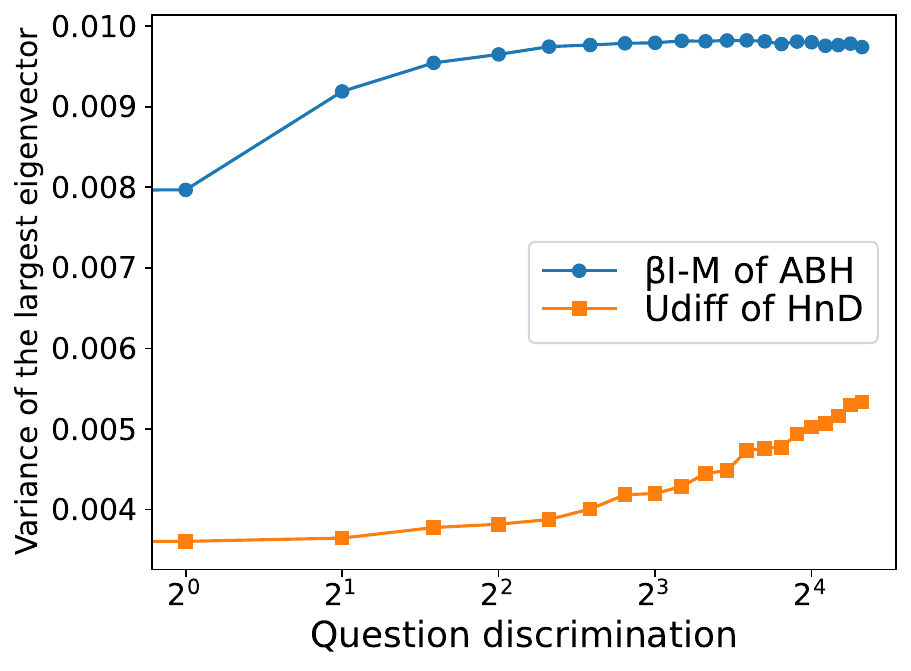}
	        \caption{Variance of eigenvector used by \HnD or \ABH
		}
	        \label{fig:comparison_variance}
	    \end{subfigure}
		~
	    \begin{subfigure}[h]{0.25\textwidth}
	        \centering
	        \includegraphics[scale=0.3]{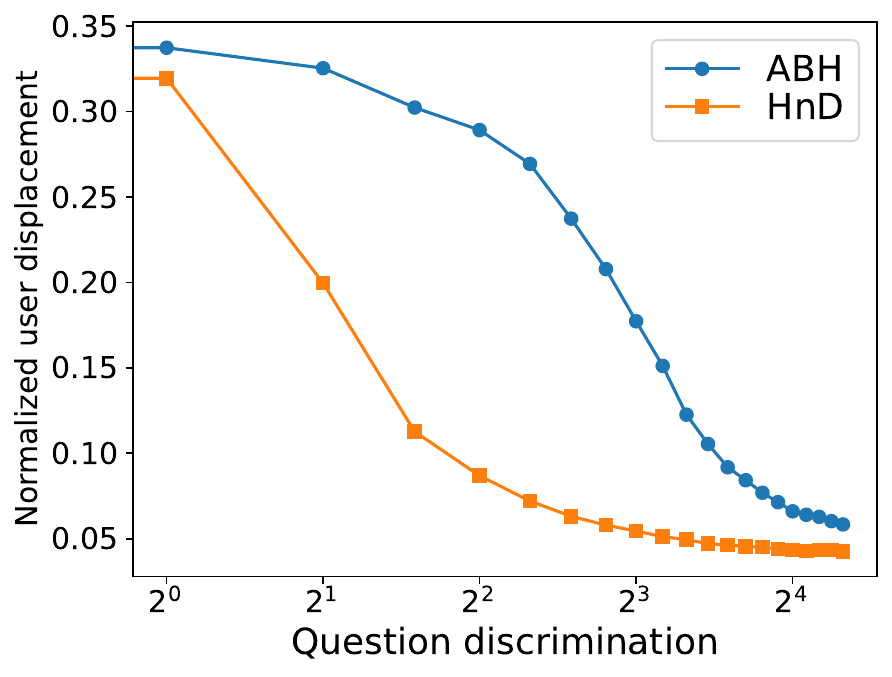}
	        \caption{Normalized user displacement
			}
	        \label{fig:comparison_difference}
	    \end{subfigure}
		~
	    \begin{subfigure}[h]{0.25\textwidth}
	        \centering
	        \includegraphics[scale=0.3]{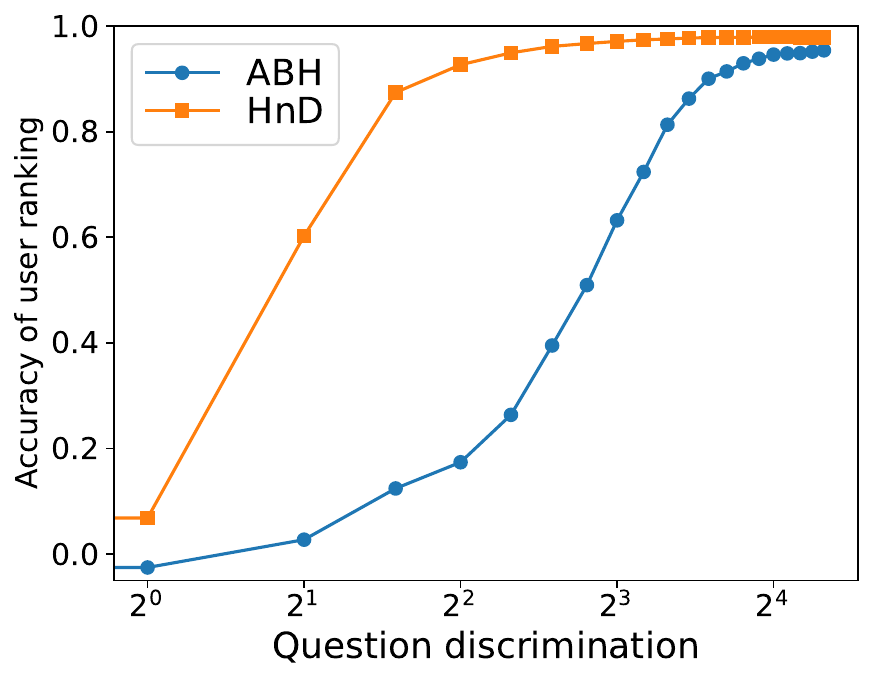}
	        \caption{Accuracy of user ranking}
	        \label{fig:comparison_accuracy}
	    \end{subfigure}
	    \caption{\Cref{sec:stability}: Stability experiments: 
	(a) The variance of the eigenvector 
	used by \HnD is smaller than that used by \ABH, 
	which makes it more robust to perturbations from the ideal C1P case. 
	This leads to \HnD
	having lower difference in the user ranking (b) and higher accuracy (c).	}	
\end{figure*}

We next experimentally verify our prediction from
\cref{sec:Hnd_vs_ABH_theory}
that \HnD generalizes better from the ideal case than \ABH.
In this setup, we fix $m=100$ users, $n=100$ items, $k=3$ options,
with user abilities and item difficulties equally spaced between $[0,1]$ and $[-0.5,0.5]$ respectively.
For one item, all the option difficulties are the same.
All items have identical discrimination $a$ and all the options in one question have equally spaced $a$ (as in the GRM model).

We then vary the question discrimination scores and compare the 
($i$) variance of the respective eigenvectors used for ranking;
($ii$) the normalized average difference in rank between each user's ranking;\footnote{
Here difference means the average difference of each user's rankings from different runs, scaled down to $[0,1]$ by the user number.}
and ($iii$) the average accuracy of the predicted rankings for \HnD and \ABH across repeatedly sampled response matrices.

\cref{fig:comparison_variance} shows our observation from \cref{sec:Hnd_vs_ABH_theory} that the variance of the largest eigenvector of $\updatediff$ of \HnD is much smaller than $\beta \vec I_{m-1} - \vec M$, which is expected to lead to the better stability and accuracy of the \HnD rankings.
\cref{fig:comparison_difference} confirms that the ranking of a user is more stable for \HnD.
\cref{fig:comparison_accuracy} shows the resulting increase in the accuracy of \HnD over \ABH.
This confirms our original goal to develop a spectral method that can achieve the same C1P ranking as \ABH,
yet generalizing better in the non-ideal case.

\subsection{Accuracy experiments on real-world data}
\label{sec:multichoice}

As mentioned in \cref{sec:benchmark}, we do not know \emph{any existing benchmark with a known true ranking} of users by their abilities.
In order to still verify the performance of \HnD on real-world datasets we use the ranking of the \emph{``True-answer"} baseline as the ground truth.
Notice that although this baseline performs well in our synthetic experiments, it is far from the perfect gold standard (sometimes even outperformed by \HnD) so the experimental result in this subsection should be seen only  as a supplementary evidence.
The six used real world MCQ datasets are from \cite{li2017hyper}.

\begin{figure}
	\centering
        \includegraphics[width=0.35\textwidth]{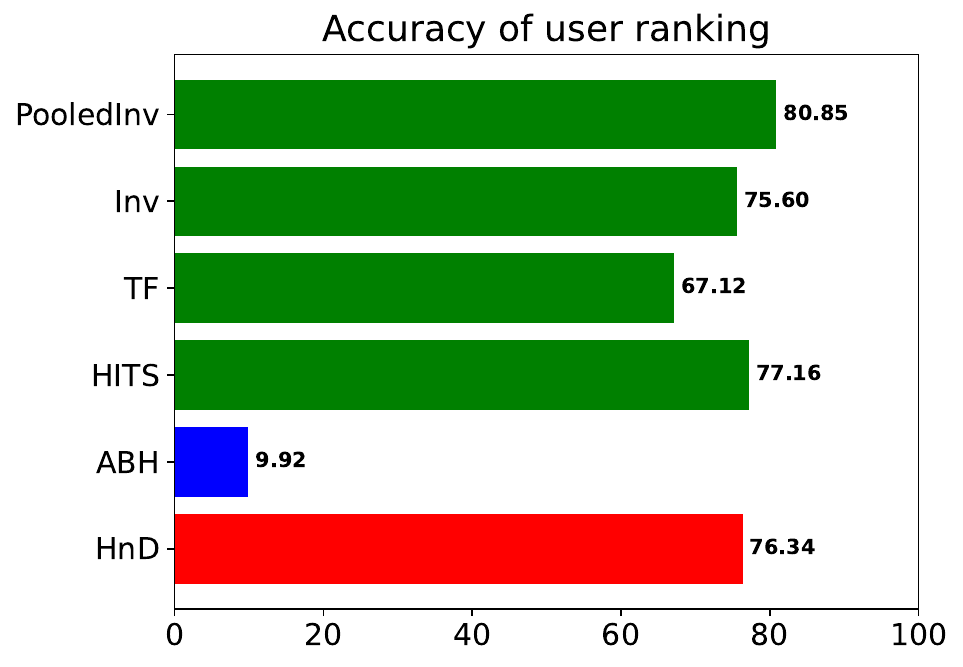}
        \caption{\cref{sec:multichoice}: Correlation of user ranking on real-world datasets with the ``True-answer'' baseline 
		that serves us as approximate gold standard user ranking.
		Notice that the true ranking of users by ability is not known.}
        \label{fig:multichoice}
\end{figure}

\cref{fig:multichoice} shows the average experimental result of six datasets where PooledInvestment and HITS perform slightly better than \HnD.
However, we need to emphasize several points: 
(1) All datasets are very small in terms of question numbers (from 20 to 36) but have double user numbers on average, which indicates their limited discrimination.
(2) There is no consistent winner on all six datasets 
\iflabelexists{sec:nomenclature}
{(see detailed result in \cref{app:more_experiments}),}
{(see detailed result in our online appendix~\cite{HITSnDIFFS-arxiv:2023}),}
an observation also made by \cite{Zheng:2017:TIC:3055540.3055547} for the related truth discovery problem.
(3) All other models except ABH with poor performance tend to have more similar accuracy than \HnD while \HnD tops them by far on 2 of the 6 datasets, which shows the novelty of \HnD and its usefulness on data of different distributions.

\section{Additional Related Work}
\label{sec:stateofart}
In this section, we discuss additional approaches for the truth discovery problem.
This is in addition to existing C1P reconstruction algorithms discussed in  
\cref{sec:c1p} and HITS-based truth discovery approaches discussed in \cref{sec:hits}.

\introparagraph{Spectral approaches}
Dalvi et al.\ \cite{dalvi} proposed two methods that output the user abilities and item labels with the help of eigenvector computation.
Ghosh et al.\  \cite{ghosh} proposed a method that only outputs the item labels, which involves calculating the first eigenvector of a symmetric matrix.
Both approaches work only for binary problems and are not obvious to generalize for $k>2$ options.

\introparagraph{Other truth discovery approaches}
\cite{li2017hyper} proposes the concept of experts and utilizes the observation that experts are more likely to reach consensus on a set of single questions (called hyper-questions in the paper) to conduct majority vote on hyper-questions instead of single questions but cannot quantify user abilities nor rank them.
\cite{lyu2019truth} relies on embeddings that cannot be easily converted into a ranking on users.
\cite{li2014confidence} uses confidence and focus on long-tail data. 
\cite{kajino2012convex, dawid} are optimization-based methods that only consider homogeneous questions (recall \cref{sec:formulation}).

\introparagraph{Other truth discovery problems}
Many approaches have been proposed for truth discovery. 
Most have different setups and are not applicable to our problem.
\cite{yang2020game, sunahase2017pairwise} change the setup of the problem by assigning different tasks to two groups of workers, where the first group answers the questions and the second group evaluates the answers.
\cite{dong2012less, rekatsinas2014characterizing, ZhengCMM15, ZhengWLCF15} work on the problem of how to assign questions to only a subset of the sources.
\cite{dong2010global, dong2009integrating} pay attention to the sources of information, 
yet focus on the copying relationships between sources. In our setup, no information is copied between users.
\cite{slimfast} discusses the problem of how much training data is needed to gain high-quality models.
\cite{ouyang2016aggregating} studies truth discovery in quantitative applications, such as percentage annotation and object counting.

\introparagraph{Crowdsourcing}
The ability discovery problem is closely connected to the truth discovery problem 
which occur in a wide range of data management problems 
related to crowdsourcing~\cite{Zheng:2017:TIC:3055540.3055547, LiWZF17}.
Various crowdsourcing systems have been proposed \cite{ZhengLC16, FanLOTF15}, 
and the crowdsourcing approach has been refined for various tasks, such as query answering \cite{franklin2011crowddb}, 
entity resolution\cite{ChaiLLDF16}, annotating Twitter data \cite{finin2010annotating}, top-k algorithms\cite{ZhangLF16} 
and various other labeling tasks \cite{tarasov2010using, welinder2010online, HuZBLFC16}.

\introparagraph{Expert finding}
The expert finding problem \cite{LinHWL17, YuanZTHC20} also aims to assess the trustworthiness of users.
The difference is that it focuses on \emph{finding experts with expertise (skills) specific to a given question} while our ability discovery problem aims to assess an overall user ability.

\section{Conclusions}
\label{sec:conclusion}
We proposed \HND, a novel variant of HITS,
with surprising theoretical and practical properties for ability discovery.
On the theoretical side, we showed that 
1) \cons of the response matrix models consistent solutions for the problem;
2) our method reconstructs the correct user rankings in the consistent case;
3) does so in linear time and
4) can handle more general cases (in contrast to other linear discrete algorithms).
On the practical side, we showed that 
\HnD handles the problem of ability discovery with robust accuracy and greater scalability in terms of the number of users than the only existing C1P reconstruction algorithm that works for general cases.

\section*{Acknowledgment}
This work was supported in part by the National Science Foundation (NSF) under award numbers IIS-1762268 and IIS-1956096.

\newpage
\bibliographystyle{IEEEtranS} 
\bibliography{truthdiscovery}

\clearpage
\appendices
\section{Nomenclature}\label{sec:nomenclature}

\begin{table}[h]
    \centering%
    \small
    \begin{tabularx}{\linewidth}{  @{\hspace{0pt}} >{$}l<{$}  @{\hspace{2mm}}  X @{}}
    \hline
    m & number of users\\
    n & number of items\\
    k & number of choices (or options) for each item\\
    \user_j & user $j$ \\
    \theta_j & ability of user $j$\\
    t_i & item $i$\\
    c_{ih}	& 	choice (or option) $h$ for item $i$ \\
    a_{ih} 		& discrimination of option $h$ for item $i$ in discrimination-difficulty parameterization \\	
    b_{ih} 		& difficulty of option $h$ for item $i$	in discrimination-difficulty parameterization\\
    \alpha_{ih} & discrimination (or slope) of option $h$ for item $i$ in slope-intercept parameterization \\
    \beta_{ih} 	& difficulty (or intercept) of option $h$ for item $i$ in slope-intercept parameterization\\	
    c_i 		& random guessing parameter in some IRT models for item $i$\\
    \vec{A}_{j:} & $j^{\mathrm{th}}$ row of some matrix $\vec{A}$ \\
    \vec{A}_{:i} & $i^{\mathrm{th}}$ column of some matrix $\vec{A}$ \\
    {A}_{ji} 	& $(j,i)$ entry of matrix $\vec{A}$ \\
    \lambda_i 	& the $i$-th largest eigenvalue\\
    \vec v_i & the eigenvector corresponding to the $i$-th largest eigenvalue (for simplicity, also referred to as ``$i$-th largest eigenvector'')\\
    \response 		& $(m \times kn)$ binary response matrix  \\
    \vec D & the  diagonal matrix of a square matrix. 
		Of size ($m \times m$) when it represents the diagonal matrix 
		of $\response\response^\transpose$ used by \ABH~\cite{ABH} \\
    \vec L &  the Laplacian matrix of a square matrix. 
		Of size ($m \times m$) when it represents the Laplacian matrix 
		of $\response\response^\transpose$ used by \ABH~\cite{ABH}: 
		$\laplacian = \vec{D} - \response \response^\transpose$ \\
     \userscore 	& $(m)$ user score vector where $s_j$ denotes score of user $j$\\
     \qweight 	& $(kn)$ option weight vector where $w_{i}$ denotes the weight of 
		option $((i\!-\!1) \textrm{ mod } k) +1$ 
	for item $((i\!-\!1) \textrm{ div } k) + 1$ \\
    \onesvec & the ones vector $\vec{1}_r$ for some constant $r \in \N^+$ \\
	\responsecol 	& $(m \times kn)$ column-normalized response matrix  \\
    \responserow 	& $(m \times kn)$ row-normalized response matrix \\
    \update 		& $(m \times m)$ update matrix  $\update = \responserow(\responsecol)^\transpose$
		used by \avgHITS  \\
    \updatediff & $((m\!-\!1) \times (m\!-\!1))$ difference update matrix $\updatediff = \vec{S} \update \vec{T}$
		used  by \HnD to update user score differences $\userdiff$\\
	\userdiff 	& $(m\!-\!1)$ user difference vector with 
	$\userdiff_j = \userscore_{j+1} - \userscore_{j}, j \in [m\!-\!1]$ \\
	 \vec S & $((m\!-\!1) \times m)$ matrix to compute the differences in a vector\\
    \vec T & $(m \times (m\!-\!1))$ matrix to reconstruct a vector based on the differences\\
	\vec M & $((m\!-\!1) \times (m\!-\!1))$ matrix $\vec M = \vec S\vec L\vec T$ \\
    \vec I & identity matrix (quadratic matrix with 1's on the diagonal)\\
	t		& number of iterations \\
    \vec{e}_i & the vector with the $i^{\mathrm{th}}$ element $1$ and all other elements $0$ \\
    {}
    {}[m]		& set $\{1, 2, \ldots, m \}$ \\
    \hline
    \end{tabularx}
\end{table}

\section{Proofs}
\label{sec:ideal}
In this section, we prove our main \cref{theorem: main}.
Before the proof, we first review some background.\footnote{The references in the appendix refer to the full reference, which includes 5 more papers.}

\introparagraph{The method of Atkins et al. (\ABH)}
Given a pre-P matrix $\response$ with a unique consecutive ones ordering,
the spectral method of Atkins et al.\ \cite{ABH} (which we refer to as \ABH)
finds this ordering of users that reconstruct the \cons property for $\response$
by using
the product of the response matrix $\response$ and its transpose to form $\response \response^\transpose$.
The method has two main parts: showing that if $\response$ is a P-matrix,
then (1) $\response  \response^\transpose$ has the special property of being an R-matrix (which we will define below), and
(2) the \emph{2nd smallest} eigenvector
of  the related Laplacian matrix $\laplacian = \vec{D} - \response \response^\transpose$ is monotonic
(i.e.\ its elements are in either increasing or decreasing order).

Here $\vec{D}$ is a diagonal matrix filled with the row sums of $\response \response^\transpose$.
Given these two facts, they prove that if $\response$ is a pre-P matrix, then the row permutations induced by sorting the values of 2nd smallest eigenvector of $\laplacian$ can reconstruct the \cons property for $\response$.

\begin{definition}[R-matrix~\cite{ABH}]
A matrix $\vec{A}$
is called an \emph{R-matrix} if it is symmetric and
\begin{align*}
A_{ji} &\geq A_{jh} \hspace{5mm}\textrm{for}\hspace{1mm}  j < i < h	\\
A_{ji} &\leq A_{jh} \hspace{5mm}\textrm{for}\hspace{1mm}  i < h < j 
\end{align*}
\end{definition}

Intuitively, the definition describes a matrix where the values fall off as we move away from the diagonal along any row.
Since for $ \response \response^\transpose$, each matrix entry represents the dot product of the responses of two users. 
The entries represent the number of common responses for a pair of users. 
When the response matrix is sorted by user ability, these values are highest for users with themselves and fall off as we move away along the correct linear ordering of the abilities represented in the correctly sorted response matrix.

\introparagraph{Our method}
Given a pre-P matrix $\response$ with a unique consecutive ones ordering, 
we use our update matrix 
$\update = \responserow(\responsecol)^\transpose$ 
to find this orderings of users that reconstruct the \cons property for the matrix $\response$.
To do this, we will prove the following two statements: 
(1) \cref{theorem: R matrix}
shows that if $\response$ is a P-matrix, then $\update$ is an R-matrix. 
(2) \cref{theorem: monotonic} shows that the \emph{2nd largest} eigenvector of $\update$ is monotonic. 
These two statements imply that if $\response$ is a pre-P matrix, then the eigenvector 
corresponding to the 2nd largest eigenvalue of $\update$ can be used to find the unique ordering that reconstruct the \cons property for the matrix $\response$.

Without loss of generality, we assume that every item option was chosen by at least one student.
If an option was never chosen, this option has no information and can be removed.
Thus in the following proof, we assume that every column of $\response$ contains at least one $1$.

We first recall a well-known lemma that we will use in the proofs.
\begin{lemma}[Constant row sums]
\label{theorem: all ones}
If all rows of a non-negative square matrix $\textbf{A}$ sum to a scalar $b$, then the largest eigenvalue of $\textbf{A}$ is $b$ with the corresponding eigenvector in the direction of $\onesvec = \vec{1}_n$.
\end{lemma}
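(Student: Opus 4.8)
The plan is to first exhibit $b$ itself as an eigenvalue with the all-ones eigenvector, and then show that no eigenvalue of $\mathbf{A}$ can exceed $b$ in magnitude, so that $b$ is necessarily the largest.

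First I would observe that the hypothesis "every row sums to $b$" says exactly that $\sum_i A_{ji} = b$ for each $j$, which means $(\mathbf{A}\onesvec)_j = \sum_i A_{ji} = b = b\,(\onesvec)_j$, i.e.\ $\mathbf{A}\onesvec = b\,\onesvec$ with $\onesvec = \vec{1}_n$. Hence $b$ is an eigenvalue of $\mathbf{A}$ and $\onesvec$ is a corresponding eigenvector. Note also that $b \ge 0$, since $b$ is a sum of nonnegative entries of $\mathbf{A}$.

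Next, to see that $b$ is actually the \emph{largest} eigenvalue, I would run the standard maximum-coordinate (Gershgorin-type) argument. Let $\lambda$ be any eigenvalue of $\mathbf{A}$, possibly complex, with eigenvector $\mathbf{v} \neq \mathbf{0}$, and pick an index $j$ maximizing $|v_j|$, so $|v_j| > 0$. Reading off coordinate $j$ of $\mathbf{A}\mathbf{v} = \lambda\mathbf{v}$ gives $\lambda v_j = \sum_i A_{ji} v_i$, and therefore
\begin{align*}
|\lambda|\,|v_j| \;=\; \Big|\sum_i A_{ji} v_i\Big| \;\le\; \sum_i A_{ji}\,|v_i| \;\le\; \sum_i A_{ji}\,|v_j| \;=\; b\,|v_j|,
\end{align*}
where the second inequality uses $A_{ji} \ge 0$ together with $|v_i| \le |v_j|$. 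Dividing through by $|v_j| > 0$ yields $|\lambda| \le b$. Since this bound holds for every eigenvalue and $b$ is itself an eigenvalue, $b = \max_i |\lambda_i|$; in particular, in the algebraic ordering $\lambda_1 \ge \cdots \ge \lambda_n$ we get $\lambda_1 = b$, with $\onesvec$ a corresponding eigenvector, which is the claim.

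I do not expect a genuine obstacle here: this is the classical fact that a nonnegative matrix with constant row sums $b$ has spectral radius exactly $b$, attained at the all-ones vector. The only points that need a little care are (i) letting $\lambda$ range over complex eigenvalues so that the inequality $|\lambda|\le b$ truly bounds the whole spectrum, and (ii) recording $b \ge 0$, which is what makes "largest eigenvalue" consistent with later invocations of $\onesvec$ as the Perron eigenvector; neither is difficult.
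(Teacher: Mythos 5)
Your proposal is correct and follows essentially the same route as the paper: exhibit $\onesvec$ as an eigenvector with eigenvalue $b$, then bound every eigenvalue's magnitude by the maximum row sum $b$. The only difference is that the paper simply cites the fact $\rho(\vec A) \leq \|\vec A\|_\infty$ (spectral radius bounded by the induced $\infty$-norm), whereas you prove that bound inline via the maximum-coordinate argument, which is a self-contained version of the same step.
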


\begin{proof}[Proof of \cref{theorem: all ones}]
	It is easy to see that $\vec e$ is an eigenvector with eigenvalue $b$: 
	since every row in $\vec A$ sums to $b$, every entry in $\vec A \vec e$ is $b$.
	That $b$ is the largest eigenvalue follows from 
	the fact that the spectral radius $\rho(\vec A) \leq ||\vec A||$ for every matrix norm~\cite{Meyer00},
	including the induced $\infty$-norm $|| \vec A ||_{\infty}$	
	which is the maximum absolute row sum.
\end{proof}

Note that if there are multiple connected components in the user-option bipartite graph, there is no way to get a total ordering on the users or items, since we cannot compare between the different connected components; we can only get an ordering for each connected component
\emph{separately}. Thus, in the sequence, we assume a single connected component, and thus, multiplicity $1$ of the largest eigenvalue of the update matrix $\update$.

\begin{lemma}[$\update$ is row-stochastic]\label{lemma:rowsum}
Each row of $\update$ has sum $1$.
\end{lemma}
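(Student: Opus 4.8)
The plan is to simply unwind the definitions of the two normalized matrices and push a summation through. Recall that $\responserow$ is obtained from $\response$ by dividing each row by its row sum, so every row of $\responserow$ sums to $1$; and $\responsecol$ is obtained by dividing each column of $\response$ by its column sum, so every column of $\responsecol$ sums to $1$. The latter is well-defined precisely because of the standing assumption that every option was chosen by at least one user, so no column of $\response$ is all-zero.

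Next I would compute the $j$-th row sum of $\update$ directly. Writing $U_{j,l} = \sum_i \responserow_{j,i}\, \responsecol_{l,i}$ for the $(j,l)$ entry of $\update = \responserow(\responsecol)^\transpose$ and summing over $l$, I would interchange the two finite sums:
\begin{align*}
\sum_{l} U_{j,l}
 = \sum_{l} \sum_{i} \responserow_{j,i}\, \responsecol_{l,i}
 = \sum_{i} \responserow_{j,i} \Big( \sum_{l} \responsecol_{l,i} \Big)
 = \sum_{i} \responserow_{j,i} \cdot 1
 = 1,
\end{align*}
where the third equality uses that the inner sum $\sum_{l} \responsecol_{l,i}$ is the $i$-th column sum of $\responsecol$, which equals $1$, and the last equality uses that row $j$ of $\responserow$ sums to $1$. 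Since $j$ was arbitrary, every row of $\update$ sums to $1$.

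There is essentially no hard step here — it is a two-line computation once the normalizations are spelled out; the only point requiring care is the legitimacy of the column normalization, which is guaranteed by the assumption that every column of $\response$ contains a $1$. As an immediate payoff I would also remark that $\update$ is non-negative (being a product of two non-negative matrices) with constant row sums equal to $1$, so \cref{theorem: all ones} applies: the largest eigenvalue of $\update$ is $1$ with corresponding eigenvector in the direction of $\onesvec$, and under the single-connected-component assumption this eigenvalue has multiplicity $1$ — exactly the facts that the subsequent passage from $\update$ to $\updatediff$ relies on.
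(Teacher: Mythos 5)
Your proof is correct and is essentially the paper's own argument: both compute the $j$-th row sum of $\update$ by swapping the finite sums, using that each column of $\responsecol$ sums to $1$ (valid because every option is chosen by at least one user) and that each row of $\responserow$ sums to $1$. The closing remark about non-negativity and the Perron eigenvalue matches how the paper subsequently uses the lemma, so nothing further is needed.
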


\begin{proof}[Proof of \cref{lemma:rowsum}]
Define $r_j$ as the sum of elements of the $j^\textrm{th}$ row of 
$\update = \responserow(\responsecol)^\transpose$,
i.e.\ $r_j = \sum_{i=1}^{m}{U_{ji}}$.
We want to show that $r_j = 1, \forall j \in [m]$.

Since $\update = \responserow(\responsecol)^\transpose$,
the $j^\textrm{th}$ row of $\update$, $\vec{U}_{j:}$, 
is created from the corresponding row $\responserow_{j:}$ of $\responserow$ and all the rows $\responsecol_{i:}$ of $\responsecol$:
\begin{align*}
  r_j &= \sum_{i=1}^{m}{\textbf{C}^\textrm{row}_{j:}} \cdot ((\textbf{C}^\textrm{col})^\transpose)_{:i} 
  	= \sum_{i=1}^{m}{\textbf{C}^\textrm{row}_{j:}} \cdot \textbf{C}^\textrm{col}_{i:} 
  	= {\textbf{C}^\textrm{row}_{j:}} \cdot \sum_{h=1}^{m} \textbf{C}^\textrm{col}_{i:}
\end{align*}
But recall that by construction, $\sum_{i=1}^{m} \textbf{C}^\textrm{col}_{i:}$ is the ones vector $\onesvec$
(every option is chosen by at least one student or otherwise the whole column is removed). 
So,
\begin{align*}
  r_j = {\textbf{C}^\textrm{row}_{j:}} \cdot \sum_{i=1}^{m} \textbf{C}^\textrm{col}_{i:} = {\textbf{C}^\textrm{row}_{j:}} \cdot \onesvec = 1
\end{align*}

The last step follows from the definition of the row-normalized binary response matrix $\responserow$.
\end{proof}

\begin{lemma}[1st eigenvector of \avgHITS]\label{lemma: fixed point of U}
If the largest eigenvalue of $\update$ has multiplicity 1, the fixed point of the \avgHITS update rule is in the direction of $\onesvec = \vec{1}_m$
\end{lemma}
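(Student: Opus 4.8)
The plan is to derive this lemma almost entirely from the two preceding lemmas and the structure of the \avgHITS iteration. First I would recall that by \cref{lemma:rowsum} the matrix $\update = \responserow(\responsecol)^\transpose$ is row-stochastic, i.e.\ $\update \onesvec = \onesvec$ with $\onesvec = \vec 1_m$. Hence $\onesvec$ is an eigenvector of $\update$ with eigenvalue $1$. Since $\update$ is a product of two entrywise non-negative matrices it is non-negative, so \cref{theorem: all ones} applies directly: the largest eigenvalue of $\update$ equals $1$ (the spectral radius is bounded by the induced $\infty$-norm, which is the maximum absolute row sum $=1$), with $\onesvec$ a corresponding eigenvector. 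Under the standing assumption of a single connected component, this eigenvalue has multiplicity $1$, so the associated eigenspace is exactly $\operatorname{span}(\onesvec)$.

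Next I would connect this to the ``fixed point of the \avgHITS update rule.'' Combining the two update equations of \avgHITS gives $\userscore \leftarrow \update\,\userscore$ followed by a normalization step, which is precisely the power iteration on $\update$. A normalized fixed point $\userscore^\ast$ therefore satisfies $\update\,\userscore^\ast = c\,\userscore^\ast$ for some scalar $c$, i.e.\ it is an eigenvector of $\update$; and the point to which the iteration actually converges, starting from the positive seed $\userscore = \onesvec$ used by \HITS, is the eigenvector associated with the dominant eigenvalue. By the previous paragraph that dominant eigenvalue is $1$ with one-dimensional eigenspace $\operatorname{span}(\onesvec)$, so the fixed point lies in the direction of $\onesvec$, which is the claim.

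The one place that needs a little care — and the main obstacle — is ensuring the power iteration truly converges to the $\onesvec$-direction rather than oscillating, which requires ruling out an eigenvalue of modulus $1$ other than $1$ itself (in particular $-1$). I would handle this by observing that $\update$ is not only non-negative and, under the connectedness assumption, irreducible, but also has strictly positive diagonal entries: $U_{jj}>0$ because every user shares all $n$ of its own answers with itself, contributing positive weight through every option it picked. A non-negative irreducible matrix with a positive diagonal entry is primitive/aperiodic, so by Perron--Frobenius its spectral radius $1$ is the unique eigenvalue of maximum modulus, the iteration converges, and the limit is the (positive) dominant eigenvector $\onesvec$. If one prefers to avoid the aperiodicity argument, it suffices to note that the lemma already \emph{hypothesizes} multiplicity $1$ of the largest eigenvalue, so any convergent fixed point of the normalized update that is not orthogonal to the dominant left eigenvector must be the right dominant eigenvector, namely a multiple of $\onesvec$.
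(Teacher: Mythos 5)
Your proposal is correct and follows essentially the same route as the paper's proof: use \cref{lemma:rowsum} to get row sums of $1$, invoke \cref{theorem: all ones} to identify the largest eigenvalue $1$ with eigenvector $\onesvec$, and note that the \avgHITS iteration is power iteration on $\update$, converging to the dominant eigendirection under the multiplicity-$1$ hypothesis. Your added observation that $\update$ has a strictly positive diagonal (hence is primitive when irreducible, ruling out other eigenvalues of modulus $1$) is a sound refinement of the convergence step that the paper leaves implicit, but it does not change the overall argument.
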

\begin{proof}[Proof of \cref{lemma: fixed point of U}]
The \avgHITS update rules above imply for any iteration $t > 0$,
\begin{align*}
\textbf{s}^{(t)} &= \update \userscore^{(t-1)} \\
\implies \textbf{s}^{(t)} &= \update^t \userscore^{(0)}
\end{align*}

Let $\lambda_1$ be the largest eigenvalue of $\update$ with the corresponding eigenvector $\vec v_1$. 
By the power rule, if $\lambda_1$ has multiplicity $1$, 
$\textbf{s}^{(t)}$ converges in the direction of $\vec v_1$ as $t$ goes to infinity 
(barring the very unlikely random initialization of $\userscore^{(0)}$ that is orthogonal to $\vec v_1$).

Note that $\update$ is square and non-negative since $\responserow$ and $\responsecol$ are both non-negative. 
By \cref{lemma:rowsum}, each row of $\update$ has sum $1$, and we can then use \cref{theorem: all ones}.
This gives us the result, since the fixed point of \avgHITS is 
in the direction of the eigenvector corresponding to the largest eigenvalue $\lambda_1$ of $\update$ 
(provided $\lambda_1$ has multiplicity 1) 
and \cref{theorem: all ones} implies $\update$ has largest eigenvalue $1$ with the corresponding eigenvector $\onesvec$.
\end{proof}

We can now prove our main result in \cref{theorem: unique} using the following three lemmas.
In those lemmas, we make the simplifying assumption that each row of the matrix $\response$ has the same row sum (i.e.\ the same number of $1$s in each row), which means each user chooses the same number of options.
This is WLOG, because given any arbitrary matrix $\response$, we can add additional columns with exactly one $1$ and $m-1$ zeros each to the matrix $\response$ until each row has the same number of $1$s. 
Note that adding such columns does not affect the \cons property, since each additional column has only a single $1$. 
In practice, the ordering of users might be affected by padding the matrix with columns in this way, so we only make this assumption of equal row sums for proving results related to the \cons property of the ideal case.

\begin{lemma}
\label{theorem: U symmetric}
If the response matrix $\response$ is a P-matrix and each row has the same row sum, the update matrix $\update$ is symmetric.
\end{lemma}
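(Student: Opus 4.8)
The plan is to establish symmetry by a direct entrywise computation, exploiting the fact that under the equal-row-sum hypothesis the row-normalization $\responserow$ is just a global rescaling of $\response$. Concretely, if every row of $\response$ has the same number $R$ of ones (in our setting $R = n$, since each user picks one option per item), then $\responserow = \tfrac{1}{R}\response$, whereas $\responsecol$ still scales column $\ell$ by the reciprocal of that column's sum $D_\ell = \sum_{p} C_{p\ell}$ (which is nonzero by our standing assumption that every option is chosen by at least one user).

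First I would expand $U_{ji}$ as an inner product of row $j$ of $\responserow$ with row $i$ of $\responsecol$ over the $kn$ option-columns:
\begin{align*}
 U_{ji} \;=\; \sum_{\ell} \responserow_{j\ell}\, \responsecol_{i\ell}
 \;=\; \frac{1}{R}\sum_{\ell} \frac{C_{j\ell}\,C_{i\ell}}{D_{\ell}}.
\end{align*}
Each summand $C_{j\ell}C_{i\ell}/D_{\ell}$ is invariant under swapping $i$ and $j$, and the prefactor $1/R$ is independent of $i$ and $j$; hence $U_{ji} = U_{ij}$, which is exactly the claim. It is worth remarking that the P-matrix hypothesis in the statement is not actually used here --- only the equal-row-sum assumption matters for symmetry; the P-matrix structure will be invoked in the subsequent lemmas showing $\update$ is an R-matrix and that its second-largest eigenvector is monotonic.

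There is essentially no obstacle: once the two normalizations are written out, the lemma collapses to a one-line identity. The only things to handle carefully are the bookkeeping details --- stating that $\responsecol$ is well-defined because each column of $\response$ has at least one $1$, noting that $R$ is common to all rows so that $\responserow = \tfrac{1}{R}\response$, and keeping straight which of the two matrices is transposed in $\update = \responserow(\responsecol)^{\transpose}$ so that the shared column index $\ell$ lines up correctly.
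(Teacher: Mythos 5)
Your proof is correct and follows essentially the same route as the paper's: a direct entrywise computation of $U_{ji}$ exploiting that equal row sums make $\responserow$ a uniform rescaling of $\response$ while $\responsecol$ entries in a common column coincide, so each term is symmetric in $i$ and $j$. Your closed-form expression $U_{ji}=\tfrac{1}{R}\sum_{\ell}C_{j\ell}C_{i\ell}/D_{\ell}$ is just a tidier packaging of the paper's term-by-term matching, and your observation that the P-matrix hypothesis is not needed for symmetry is consistent with the paper's argument.
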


\begin{proof}[Proof of \cref{theorem: U symmetric}]
First we assume each user chooses $n^{ans}$ options.
Consider any $i,j \in [m], i \neq j$. We need to show that $U_{ij} = U_{ji}$, where
\begin{align*}
U_{ji} &= \sum_{h = 1}^{\textrm{width}(\response)} {C^\textrm{row}_{jh} C^\textrm{col}_{ih}} \\
U_{ij} &= \sum_{h = 1}^{\textrm{width}(\response)} {C^\textrm{row}_{ih} C^\textrm{col}_{jh}} 
\end{align*}

To contribute to the sum, the respective elements of $\responserow$ and $\responsecol$ have to both be nonzero.
For all $i \neq j$, $C^\textrm{row}_{jh}$ and $C^\textrm{col}_{ih}$ being both nonzero implies $C_{jh}$ and $C_{ih}$ are 1, which also means $C^\textrm{row}_{ih}$ and $C^\textrm{col}_{jh}$ are both nonzero.

Because each user answers the same number of questions $n^{ans}$,
all nonzero entries in $\responserow$ are equal, and therefore also
$C^\textrm{row}_{jh} = C^\textrm{row}_{ih} = \frac{1}{n^{ans}}$ if both are nonzero.
Also by definition of $\responsecol$, 
$C^\textrm{col}_{jh} = C^\textrm{col}_{ih}$ if they are both nonzero 
since they appear in the same column of $\responsecol$.
Combining the two equations,
$C^\textrm{row}_{jh} C^\textrm{col}_{ih} = C^\textrm{row}_{ih} C^\textrm{col}_{jh}$
for all $i \neq j$.
Therefore, also
$U_{ij} = U_{ji}$ for all $i \neq j$, 
and thus $\update$ is symmetric.
\end{proof}

\begin{lemma}
\label{theorem: R matrix}
If the response matrix $\response$ is a P-matrix and each row has the same row sum, then the update matrix $\update$ is an R-matrix.
\end{lemma}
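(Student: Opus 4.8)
The plan is to build directly on \cref{theorem: U symmetric}, which already establishes that $\update$ is symmetric, so the only thing left is to verify the two order inequalities in the definition of an R-matrix. Because $\update$ is symmetric, these two inequalities are mirror images of one another, so I would prove $U_{ji} \ge U_{jh}$ whenever $j < i < h$, and then derive $U_{ji} \le U_{jh}$ for $i < h < j$ by applying the same reasoning to $U_{ij} \le U_{hj}$ and using symmetry.

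The first step is to rewrite an off-diagonal entry of $\update$ in a form that exposes its combinatorial content. Since $\response$ is a P-matrix with equal row sums, every nonzero entry of $\responserow$ equals $\tfrac{1}{n^{ans}}$, where $n^{ans}$ is the common row sum, and every nonzero entry in column $c$ of $\responsecol$ equals $\tfrac{1}{\sigma_c}$, where $\sigma_c$ is the number of $1$s in column $c$ of $\response$. Hence, for $i \neq j$,
\begin{align*}
U_{ji} = \sum_{c} C^\textrm{row}_{jc}\, C^\textrm{col}_{ic} = \frac{1}{n^{ans}} \sum_{c \,:\, C_{jc} = C_{ic} = 1} \frac{1}{\sigma_c},
\end{align*}
so $U_{ji}$ is a sum of nonnegative weights $\tfrac{1}{n^{ans}\sigma_c}$, one for each option $c$ that both users $j$ and $i$ chose, and crucially each weight depends only on the column $c$, not on the pair of users.

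The key step invokes the \cons structure. In a P-matrix the $1$s of each column $c$ occupy a consecutive block of rows, say rows $\ell_c$ through $r_c$, so $C_{jc} = C_{ic} = 1$ is equivalent to $\ell_c \le \min(i,j)$ and $\max(i,j) \le r_c$. Fix $j < i < h$: a column $c$ contributes to $U_{jh}$ iff $\ell_c \le j$ and $h \le r_c$, and since $i < h$ the condition $h \le r_c$ forces $i \le r_c$, so every column contributing to $U_{jh}$ also contributes to $U_{ji}$ with the identical weight; as all weights are nonnegative, $U_{ji} \ge U_{jh}$. The mirror inequality for $i < h < j$ is the same argument: a column contributes to $U_{ij}$ iff $\ell_c \le i$ and $j \le r_c$, and since $i < h$ the condition $\ell_c \le i$ implies $\ell_c \le h$, so the columns contributing to $U_{ij}$ form a subset of those contributing to $U_{hj}$, giving $U_{ji} = U_{ij} \le U_{hj} = U_{jh}$. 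Together with symmetry this shows $\update$ is an R-matrix.

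I do not expect a deep obstacle here; the two points requiring care are (i) keeping straight which user index is held fixed and which interval endpoint is relaxed when comparing the two sets of contributing columns, and (ii) noticing that the equal-row-sum hypothesis is exactly what makes the per-column weight $\tfrac{1}{n^{ans}\sigma_c}$ independent of the user pair (and, via \cref{theorem: U symmetric}, what makes $\update$ symmetric at all) --- without it the monotonicity inequalities would still hold but $\update$ would no longer be symmetric and hence would fail to be an R-matrix.
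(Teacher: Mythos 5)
Your proof is correct and follows essentially the same route as the paper's: both rest on the observation that the equal row sums and column normalization make each shared column contribute an identical weight to every entry of $\update$, and then use the consecutive-block structure of the columns to compare entries, you via a direct subset-of-contributing-columns argument and the paper via the contrapositive (a column with a $1$ in the outer rows but not the middle one would violate \cons). No gap to report.
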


\begin{proof}[Proof of \cref{theorem: R matrix}]
We need to show that neither of
\begin{align}
U_{ji} > U_{jk}  \hspace{5mm}\textrm{for}\hspace{1mm}    i < k < j 		\label{eq:R matrix 1}  \\
U_{ji} < U_{jk}  \hspace{5mm}\textrm{for}\hspace{1mm}    j < i < k			\label{eq:R matrix 2}
\end{align}
is true, where
\begin{align*}
U_{ji} &= \sum_{h = 1}^{\textrm{width}(\response)} {C^\textrm{row}_{jh}C^\textrm{col}_{ih}} \\
U_{jk} &= \sum_{h = 1}^{\textrm{width}(\response)} {C^\textrm{row}_{jh}C^\textrm{col}_{kh}}
\end{align*}

Analogous to the proof for the symmetry (\cref{theorem: U symmetric}), 
we have $C^\textrm{row}_{jh} C^\textrm{col}_{ih} =  C^\textrm{row}_{jh} C^\textrm{col}_{kh}$ for all nonzero 
$C_{jh}, C_{ih}, C_{kh}$. 
In order for \Cref{eq:R matrix 1} to hold, there would have to be more non-zero 
$C^\textrm{row}_{jh} C^\textrm{col}_{ih}$ than 
$C^\textrm{row}_{jh} C^\textrm{col}_{kh}$.
This in turn would imply that among the columns of $\response$ 
which have a $1$ in the $j$th row (and therefore nonzero entries in the $j$th row of $\responserow$)
there are strictly more $1$'s in the $i$th row than the $k$th row.
But since $i < k < j$, the column which has a $1$ in the $i$th row 
but not the $k$th row leads to a violation of the consecutive ones property. 

\Cref{eq:R matrix 2} leads to a similar contradiction.
\end{proof}

\begin{lemma}
\label{theorem: monotonic}
If $\response$ is a P-matrix and each row has the same row sum, the 2nd largest eigenvector of $\update$ is monotonic.
\end{lemma}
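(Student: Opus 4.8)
\introparagraph{Proof plan}
The plan is to show that the difference update matrix $\updatediff = \vec S \update \vec T$ is \emph{entrywise non-negative}, and then read off the conclusion from \cref{lemma:EigenvectorCorrespondence} together with the Perron--Frobenius theorem. Concretely, once $\updatediff \ge 0$, Perron--Frobenius~\cite{perron1907theorie,frobenius1912matrizen} gives a non-negative eigenvector $\vec y$ for the spectral radius of $\updatediff$, and for a non-negative matrix the spectral radius coincides with the largest (algebraic) eigenvalue; since by \cref{lemma:EigenvectorCorrespondence} the spectrum of $\updatediff$ is exactly that of $\update$ with the top eigenvalue $1$ deleted, this $\vec y$ is a largest eigenvector of $\updatediff$ and corresponds to the \emph{second} largest eigenvalue of $\update$. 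Translating back through \cref{lemma:EigenvectorCorrespondence}, the second largest eigenvector $\vec x$ of $\update$ then satisfies $\vec S \vec x = \vec y$ (up to sign and scaling), so $x_{j+1} - x_j \ge 0$ for every $j$, i.e.\ $\vec x$ is monotone. So the whole lemma reduces to the entrywise non-negativity of $\updatediff$.

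\introparagraph{Showing $\updatediff \ge 0$}
First I would expand the entries. Using that $\vec T$ is the lower-triangular $0/1$ matrix with $T_{pk}=1$ iff $p>k$, one obtains for all $j,k \in [m-1]$
\[
(\updatediff)_{jk} \;=\; \sum_{\ell=k+1}^{m}\bigl(U_{j+1,\ell}-U_{j,\ell}\bigr) \;=\; \sum_{\ell=1}^{k}\bigl(U_{j,\ell}-U_{j+1,\ell}\bigr),
\]
the second equality because every row of $\update$ sums to $1$ (\cref{lemma:rowsum}). Thus it suffices to show that the partial row sums $\sum_{\ell\le k}U_{j,\ell}$ are non-increasing in $j$. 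Using symmetry of $\update$ (\cref{theorem: U symmetric}) I rewrite each term $U_{j,\ell}-U_{j+1,\ell}=U_{\ell,j}-U_{\ell,j+1}$, and then the R-matrix property (\cref{theorem: R matrix}) pins down its sign for fixed $\ell$: it is $\ge 0$ when $\ell \le j$ and $\le 0$ when $\ell \ge j+1$. Two of these comparisons are boundary cases pitting a diagonal entry against its neighbour ($\ell=j$ and $\ell=j+1$); to settle them I would record the auxiliary fact that $U_{jj}=\max_i U_{ji}$, which is immediate from the expansion $\update=\responserow(\responsecol)^\transpose$ (the $(j,i)$ entry is a positive-weighted count of options chosen by \emph{both} users $i$ and $j$, and the options shared by $j$ with itself form a superset). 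With the signs established I split on $k$: if $k \le j$ the first expression above is a sum of non-negative terms; if $k \ge j+1$ I use instead the second expression together with the fact that $\update$ is doubly stochastic (symmetric $+$ row-stochastic), so that $\sum_{\ell=1}^k(U_{j,\ell}-U_{j+1,\ell}) = -\sum_{\ell=k+1}^m(U_{j,\ell}-U_{j+1,\ell})$ is again a sum of non-negative terms. Either way $(\updatediff)_{jk}\ge 0$.

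\introparagraph{Main obstacle}
The one place needing care beyond bookkeeping is the boundary terms: the R-matrix inequalities as stated do not constrain the diagonal, so the ``diagonal is the row maximum'' fact has to be proved separately (short, but necessary), and one must choose, case by case, whichever of the two equivalent expressions for $(\updatediff)_{jk}$ has all its terms of a single sign --- the $k\le j$ block from the head sum and the $k\ge j+1$ block from the tail sum, with the switch between rows and columns supplied by symmetry and the switch between head and tail supplied by double stochasticity. Everything after $\updatediff\ge 0$ is a direct invocation of Perron--Frobenius and \cref{lemma:EigenvectorCorrespondence}, the only subtlety being the standard remark that a non-negative matrix attains its spectral radius as an honest eigenvalue with a non-negative eigenvector.
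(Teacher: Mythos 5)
Your proof is correct and follows essentially the same route as the paper's: show that every entry of $\updatediff=\vec{S}\update\vec{T}$ is non-negative by combining the R-matrix structure of $\update$ (\cref{theorem: R matrix}) with the zero row sums of $\vec{S}\update$, writing each entry as a head or tail partial sum of a row of $\vec{S}\update$, and then invoke Perron--Frobenius together with the eigenvector correspondence to get monotonicity. Two small remarks: your explicit observation that $U_{jj}=\max_i U_{ji}$ is a welcome patch for the two boundary comparisons that the paper's proof silently subsumes under ``because $\update$ is an R-matrix,'' and in the body you swap the labels ``first'' and ``second'' expression for the case $k\le j$ (there it is the head sum $\sum_{\ell\le k}(U_{j,\ell}-U_{j+1,\ell})$ whose terms are all non-negative), though the case split in your closing paragraph states it correctly.
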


\begin{proof}[Proof of \cref{theorem: monotonic}]
We will prove this in a way similar to \cite{ABH}.
Define the matrices $\vec{S} \in \Rset^{(m-1) \times m}$ and $\vec{T} \in \Rset^{m \times (m-1)}$ as in \cref{sec:HND_algo}.

Note that $\vec{T} \vec{S} = (\vec{I}_m - \vec{e}\vec{e}_1^\transpose)$. Note that for any vector $\vec{v}$, $\vec{S} \vec{v}$ = $(v_2 - v_1, v_3 - v_2, ..., v_r - v_{r-1})^\transpose$. Similarly, the $i^{\textrm{th}}$ row of $\vec{S} \update$ is just the difference between the $(i+1)^\textrm{th}$ and $i^\textrm{th}$ rows of $\update$. However, as shown in \cref{lemma:rowsum}, each row of $\update$ sums to $1$. This implies each row of $\vec{S} \update$ sums to $0$.

Let $\vec{x}$ be an eigenvector of $\update$ that is not in the direction of the all ones vector $\vec{e} = \vec{1}_m$, i.e. $\vec{x} \neq \alpha \vec{e}$. Then,
\begin{align}
\update \vec{x} &= \lambda \vec{x}  				\notag \\
\vec{S} \update \vec{x} &= \lambda \vec{S} \vec{x}  \notag \\
\vec{S} \update (\vec{I}_m - \vec{e}\vec{e}_1^\transpose) \vec{x} &= \lambda \vec{S} \vec{x}  \notag \\
\vec{S} \update \vec{T} \vec{S} \vec{x} &= \lambda \vec{S} \vec{x}  \notag \\
\updatediff \vec{y} &= \lambda \vec{y}, \quad \textrm{where} \quad \vec{y} = \vec{S} \vec{x} 	
\label{eq:s to s_diff}
\end{align}

The equivalence between the 2nd and 3rd lines above is 
because each row of $\vec{S} \update$ sums to $0$ as shown above, so $\vec{S} \update \onesvec = \vec{0}$. So, any eigenvalue $\lambda$ of $\update$ is also an eigenvalue of $\updatediff$, except for the eigenvalue of $\update$ corresponding to the $\onesvec$ eigenvector leading to $\vec{S} \vec{e}  = 0$. Therefore, $\updatediff$ has exactly the same eigenvalues as $\update$ except the largest eigenvalue $1$, and the eigenvectors of $\updatediff$ are the differences between the entries of the corresponding eigenvector of $\update$, which proves \cref{lemma:EigenvectorCorrespondence}.

Then we prove that $\updatediff$ is a non-negative matrix.
First, we think about $\vec{S} \update$.
As we explained before, every row of it sums to 0.
Moreover, for each the $r$th row of $\vec{S} \update$ which is the difference between the $r$th and $(r+1)$th row of $\update$, the first $r$ entries are non-positive while all the others are non-negative because $\update$ is an R-matrix.

Then we turn to the multiplication between $\vec{S} \update$ and $\vec{T}$, which generates $\updatediff$.
The entry at the $j$-th row and the $i$-th column of $\updatediff$ is the product of the $j$-th row of $\vec{S} \update$ and the $i$-th column of $\vec{T}$.
This product is actually the sum of the last $(m - i)$ entries of the $j$-th row of $\vec{S} \update$,
which also equals to the sum of the $j$-th row of $\vec{S} \update$ minus the first $i$ entries.
Recall that the first $r$ entries are non-positive while all the others are non-negative for the $r$-th row of $\vec{S} \update$, and each row of $\vec{S} \update$ sums to 0.
If $j \geq i$, the first $i$ entries of the $j$-th row of $\vec{S} \update$ 
are all non-positive so the product is non-negative. 
If $j < i$,  the last $m - i$ entries of the $j$-th row of $\vec{S} \update$ are all non-negative so the product is non-negative.

Therefore, we prove every entry in $\updatediff$ is non-negative, which means $\updatediff$ is a non-negative matrix.

We can now apply the Perron-Frobenius Theorem \cite{perron1907theorie, frobenius1912matrizen}: 
there exists a non-negative eigenvector of $\updatediff$ corresponding to the largest eigenvalue of $\updatediff$. 
Note that $\updatediff$ has exactly the same eigenvalues as $\update$, 
except the largest eigenvalue $1$, and the eigenvectors of $\updatediff$ are 
the differences between the elements of the corresponding eigenvector of $\update$.
Since the differences between the elements of the eigenvector corresponding to the 2nd largest eigenvalue of $\update$ (largest eigenvalue of $\updatediff$) are non-negative, 
that eigenvector of $\update$ is monotonic.
\end{proof}

The above thre lemmas imply that if we start with a pre-P matrix $\response$ with a consistent row sum, 
sorting the rows according to the second eigenvector ordering of the corresponding update matrix $\update$ gives a P-matrix, 
proving \cref{theorem: unique}.

From \cref{lemma:EigenvectorCorrespondence}, we know by converting the converged largest eigenvector of $\updatediff$ back into a user score, we regain the ordering of the rows according to values in the second largest eigenvector of $\update$.
This, along with \cref{theorem: unique}, proves \cref{theorem: main} that an algorithm for computing the largest eigenvector of $\updatediff$ (\HnD detailed in \cref{alg:hnd}) reconstructs the ideal consistent ordering.

\section{Item Response Theory (IRT) models}
\label{app:IRT}

\Cref{sec:IRT_binary} starts with the binary (or dichotomous) models that model the probability that a student answers a question correctly. 
\Cref{sec:multi_IRT} then discusses multinomial (or polytomous) models 
that model the probability for a student to choose a specific option of a question.

\subsection{Binary (or Dichotomous) Models}
\label{sec:IRT_binary}

The binary IRT models are variations of binary logistic regression with one latent trait $x$:
$\P(x) = \sigma(\alpha x + \beta)$.
The intuition is that the probability of a correct answer increases with a student's ability and decreases with the question's difficulty.
The \emph{ability} of a student 
is captured by a single latent factor $\theta$.

\introparagraph{1PL (1-Parameter Logistic) model}
This is the simplest logistic IRT model 
and is also called the \emph{Rasch model}~\cite{rasch1993probabilistic}.
It is called 1PL as it has only one parameter $b_i$ called difficulty for every question $i$
which shifts the logistic function with regard to student abilities.
The probability for a user with ability $\theta$ to answer a question $i$ correctly is given by the following response function:
\begin{align*}
	\P_{i}(\theta) = \sigma\big(\theta-b_i\big) = \frac{1}{1 + e^{-(\theta - b_i)}}
\end{align*}
Thus the probability of getting a question correct increases with
a larger student ability $\theta$ 
and a smaller question difficulty $b_i$.

\introparagraph{2PL (2-Parameter Logistic) model \cite{birnbaum1968}}
The 2PL model adds a second parameter $a_i$ called discrimination to every question $i$.
Intuitively, the discrimination models how well a question can separate competent from less competent students:
High discrimination implies that 
the probability of answering it correctly increases strongly with student ability $\theta$.
The new response function is:
\begin{align*}
	\P_{i}(\theta) = \sigma\big(a_i(\theta-b_i)\big) = \frac{1}{1 + e^{-a_i(\theta - b_i)}}	
\end{align*}
Notice this is exactly the logistic function 
after changing
from \emph{slope-intercept} $\sigma(\alpha_i \theta + \beta_i)$ 
to \emph{discrimination-difficulty} parameterization $\sigma\big(a_i(\theta-b_i)\big)$.
In other words, the 2PL model is \emph{isomorph to logistic regression}.
It specializes into the 1PL model by tying all discrimination parameters to the same value $a_i = 1$.

\introparagraph{GLAD}
This model was proposed in the crowdsourcing literature~\cite{NIPS2009:Whitehill}
and also uses the logistic function.
Interestingly, it is a specialization of the 2PL model with all difficulty parameters tied by $b_i = 0$.
A special property of the GLAD model is that for a student whose ability is 0, the probability to answer any question correctly is 50\%.
The new response function is:
\begin{align*}
	\P_{i}(\theta) = \sigma\big(a_i \theta\big) = \frac{1}{1 + e^{-a_i\theta}}
\end{align*}

\introparagraph{3PL (3-Parameter Logistic) model \cite{birnbaum1968}}
The 3PL model adds a third parameter $c_i$ to each question $i$ that model the probability that a student without low ability can randomly guess the correct answer.
This is motivated in a multiple-choice question (MCQ) setting where the best strategy for a student who does not know the answer is to pick one answer randomly.
A reasonable setting for $c$ is $1/k$, where $k$ is the number of choices (or options).
The 3PL model specializes to the 2PL model by tying all random guesses to 0: $c_i=0$.
The new response function is
\begin{align*}
	\P_{i}(\theta) = c_i + (1-c_i)\sigma\big(a_i(\theta-b_i)\big) = c_i + \frac{1 - c_i}{1 + e^{-a_i(\theta-b_i)}}	
\end{align*}

\subsection{Multinomial (Polytomous) Models}
\label{sec:multi_IRT}

Multinomial IRT models model the probability of a student picking a particular choice among $k$ options.
They rely on a generalisation of the logistic function to multiple inputs called the softmax activation function (or multinomial logit),
used in multinomial logistic regression.
The softmax $\sigma : \R^k \rightarrow [0 , 1]^k$ defines a probability of a choice among several outcomes 
for $k\geq 2$ as 
$\sigma(\vec x)_h = e^{x_h}/ \sum_{l=1}^k e^{x_l}$
for $h=1 \ldots k$.
The various multinomial IRT models can thus be seen as variants of multinomial logistic regression.
We next discuss three important polytomous models and their connections.

\introparagraph{Bock}
Bock's nominal category model~\cite{bock1972estimating} is exactly multinomial logistic regression in 
slope-intersection parameterization.
In this model, each option $h$ has a discrimination parameter $\alpha_h$ and an intercept $\beta_h$.
The probability for a user with ability $\theta$ to pick option $h$ for question $i$ is given by the following response function:
\begin{align}
	\P_{ih}(\theta) = \frac{e^{\alpha_{ih}\theta+\beta_{ih}}}{\sum_{l=0}^{k-1}e^{\alpha_{il}\theta+\beta_{il}}}
\end{align}
The option with the largest $\alpha$ is the correct option 
(students with large enough ability $\theta$ will always choose that option).
Also, notice that the function is actually over-parameterized: 
Dividing by $e^{\alpha_{i0}\theta+\beta_{i0}}$ gives a normalized representation with $2(k-1)$ independent parameters per question.
With the same normalization, Bock recovers the binary 2PL model for $k=2$.
(see e.g.\ ~\cite[section 2.5.3]{pml1Book}).

\introparagraph{Graded Response Model (GRM) \cite{samejima1997graded, samejima1969estimation, samejima1972general}}
The graded-response model deals with \emph{ordered} polytomous categories such as ratings
(e.g.\ strongly disagree, disagree, agree, and strongly agree, used in attitude surveys).
GRM postulates that there are $k+1$ steps per question ($0, \ldots, k$).
A student choosing an option $h \in \{0, \ldots, k-1\}$ passes the first $h+1$ steps but fails in step $h+2$.
Thus every student passes the first step but no student passes the last step.
The model uses the 2PL response function as a cumulative probability function for a student with ability $\theta$ 
to pass step $h$ as:
\begin{align*}
	&\P^*_{ih}(\theta) =  \sigma\big(a_i(\theta-b_{ih})\big) = \frac{1}{1 + e^{-a_i(\theta - b_{ih})}}	\\ 
	&-\infty = b_{i0} < b_{i1} < ... < b_{i,k-1} < b_{ik} = \infty
\end{align*}
where 
$\P^*_{ih}(\theta) = \sum_{l=h}^k \P_{il}(\theta)$. 
The option response function is then
$\P_{ih} = \P^*_{ih} - \P^*_{i,h+1}$.
For a question $i$ with $k$ options, there are $k$ free parameters in all including a discrimination parameter $a_i$ and $k-1$ difficulty
or location parameters $b_{ih}, h \in \{1, \ldots, k-1\}$.

\begin{figure}[t]
\centering
\begin{subfigure}{0.49\linewidth}
	\centering
	\includegraphics[scale=0.45]{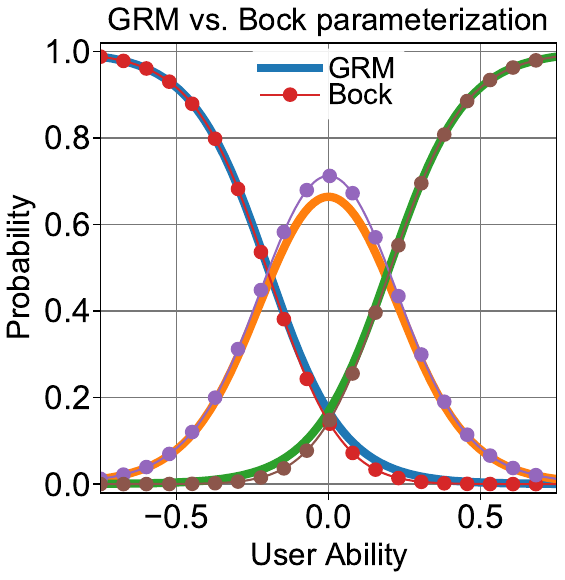}
	\caption{}
	\label{Fig_GRM_vs_Bock1}	
\end{subfigure}
\begin{subfigure}{0.49\linewidth}
	\centering
	\includegraphics[scale=0.45]{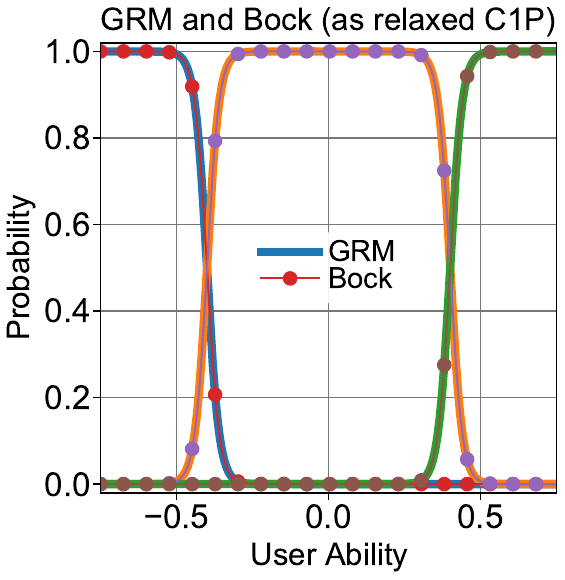}	
	\caption{}
	\label{Fig_GRM_C1P}
\end{subfigure}
\vspace{-3mm}
\caption{(a) Example illustrating that GRM, for all practical purposes 
can be seen as a special case of Bock.
Shown are GRM with $a=8$, $b_h=(-0.2, 0.2)$
and Bock with $\alpha_h=(8, 16)$, $\beta_h=(1.6,0)$.
Three curves correspond to the probability for users to choose three options respectively.
(b): Example illustrating that GRM and Bock (like all IRT models discussed in this paper) can be seen as relaxed versions of
a response matrix with the C1P property.
Shown here are GRM with $a=50$, $b_h=(-0.4, 0.4)$
and Bock with $\alpha_h=(50, 100)$, $\beta_h=(20,0)$.
Notice the similarity with \cref{fig:Fig_Intro}(b) after accounting for the linear ordering of the responses.
}
\label{Fig_GRM_vs_Bock}
\end{figure}

Notice that the logits in the cumulative response function $\P^*$ for GRM
all have the same slopes $a_i$, 
which is described as the \emph{homogeneous case} of the graded response model \cite{samejima1997graded}.
While the correspondence is not exact, GRM can be interpreted as an approximate special case of the Bock model (see \cref{Fig_GRM_vs_Bock}).

\begin{figure*}[h!]
\captionsetup[subfigure]{justification=centering}
    \centering
    \begin{subfigure}[h]{0.23\textwidth}
        \centering
        \includegraphics[width=\textwidth]{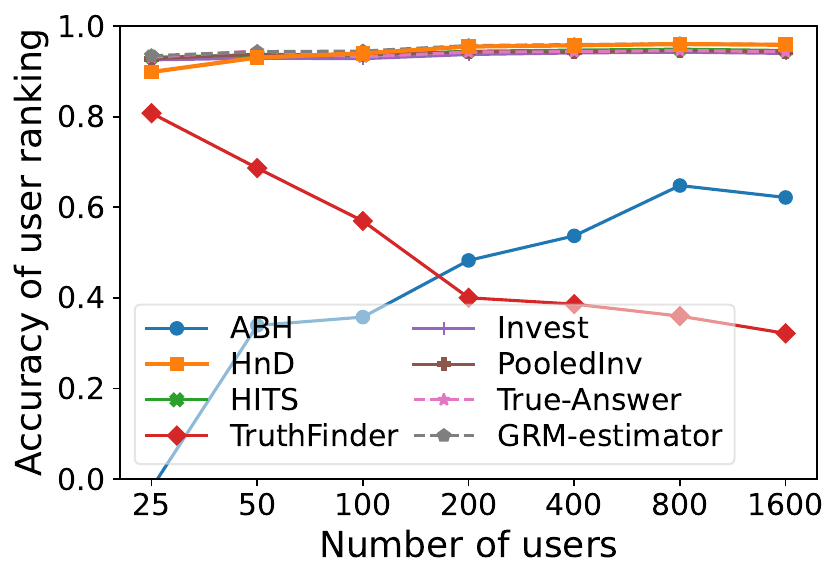}
        \caption{Varying $m$ (GRM)
		}
        \label{fig:acc_grm_u}
    \end{subfigure}
	~
    \begin{subfigure}[h]{0.23\textwidth}
        \centering
        \includegraphics[width=\textwidth]{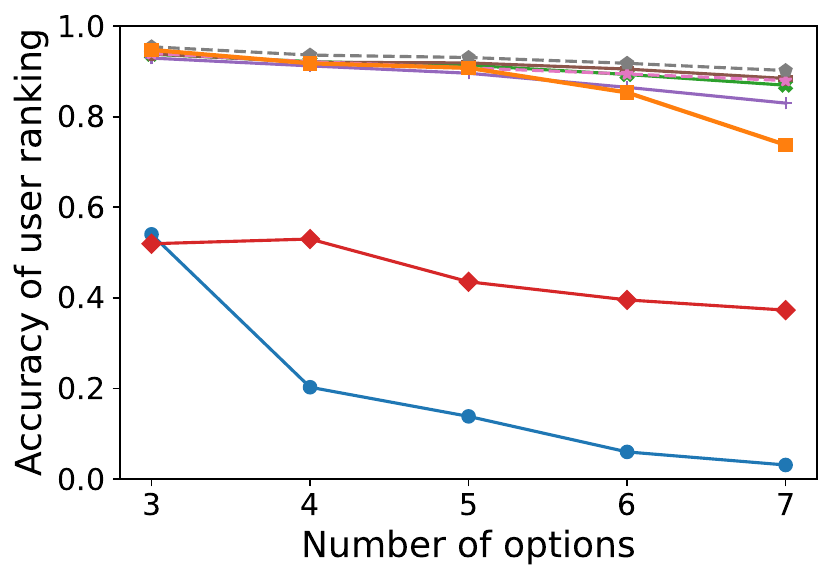}
        \caption{Varying $k$ (GRM)
		}
        \label{fig:acc_grm_o}
    \end{subfigure}
	~
    \begin{subfigure}[h]{0.23\textwidth}
		\centering
		\includegraphics[width=\textwidth]{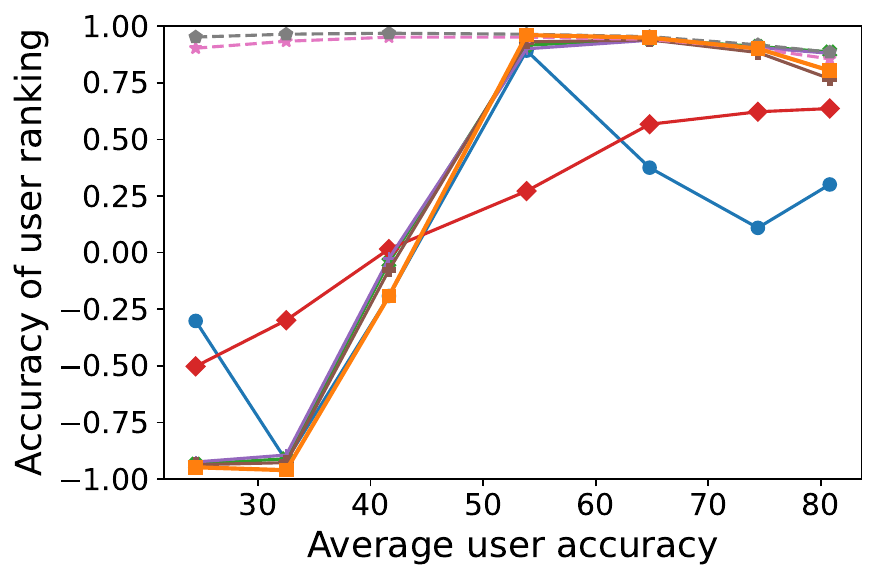}
        \caption{Varying $b_{ih}$ (GRM)}
        \label{fig:acc_grm_d}
    \end{subfigure}
    ~
    \begin{subfigure}[h]{0.23\textwidth}
		\centering
		\includegraphics[width=\textwidth]{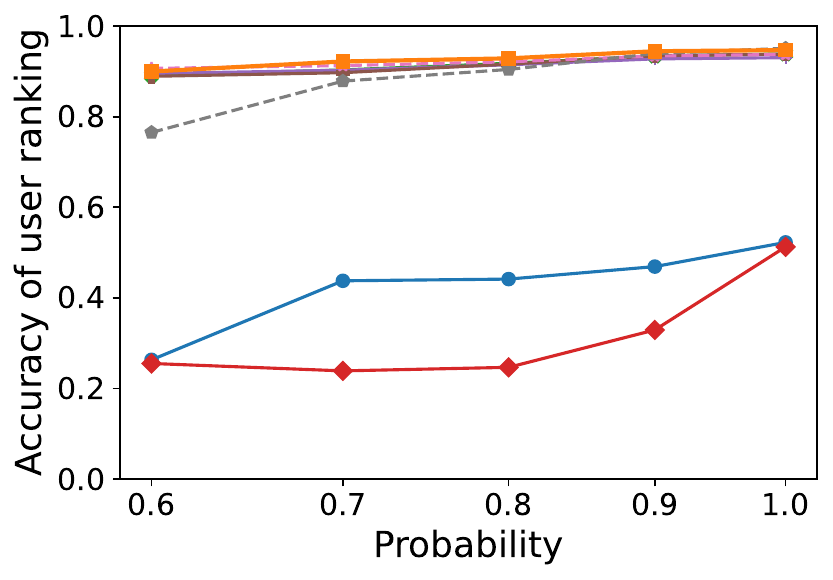}
        \caption{Varying $p$ (GRM)}
        \label{fig:acc_grm_pd}
    \end{subfigure}
    
    \begin{subfigure}[h]{0.23\textwidth}
        \centering
        \includegraphics[width=\textwidth]{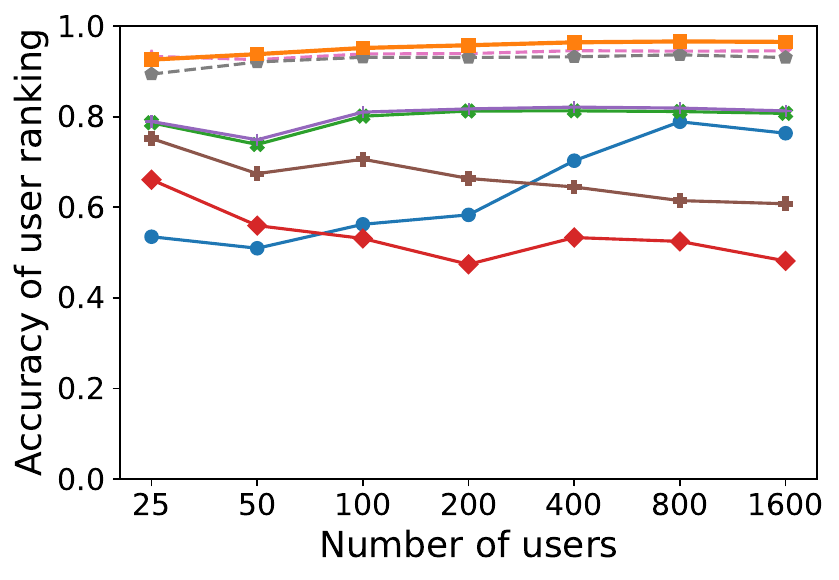}
        \caption{Varying $m$ (Bock)}
        \label{fig:acc_bock_u}
    \end{subfigure}
	~
    \begin{subfigure}[h]{0.23\textwidth}
        \centering
        \includegraphics[width=\textwidth]{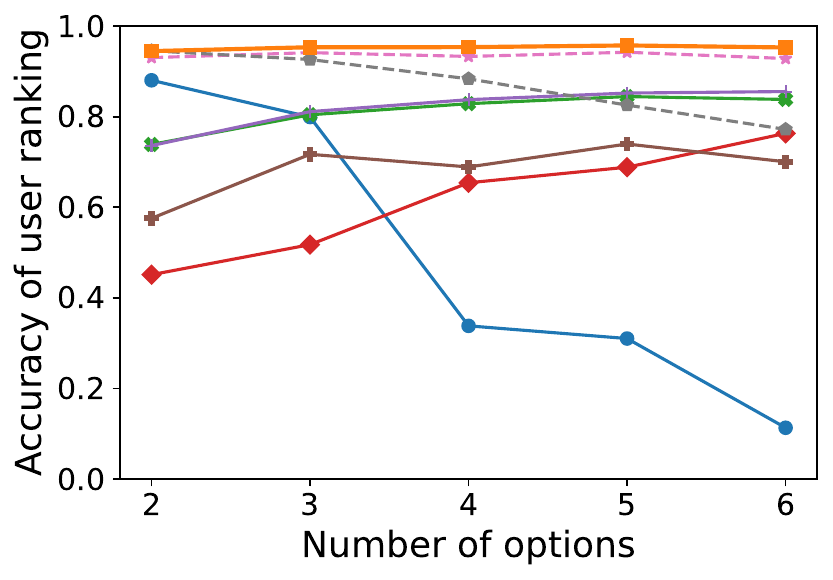}
        \caption{Varying $k$ (Bock)}
        \label{fig:acc_bock_o}
    \end{subfigure}
	~
    \begin{subfigure}[h]{0.23\textwidth}
		\centering
		\includegraphics[width=\textwidth]{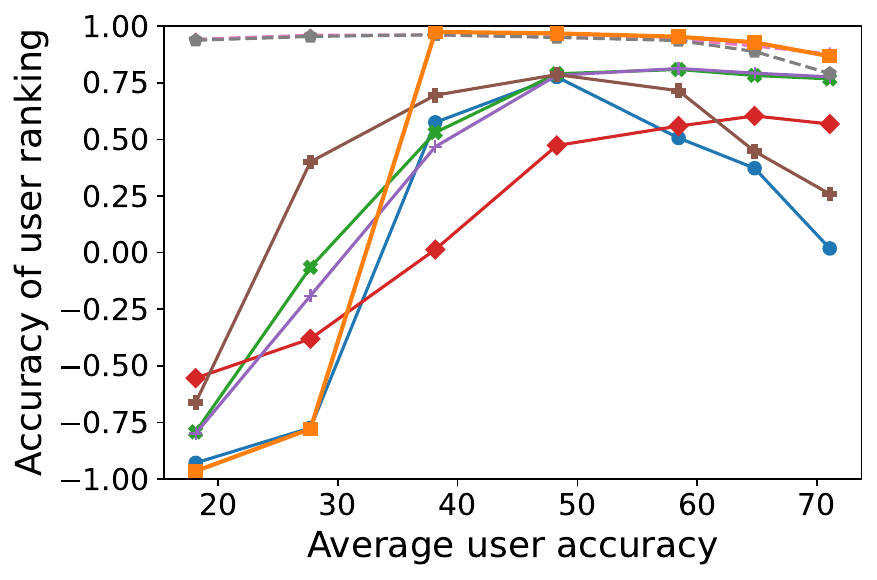}
        \caption{Varying $b_{ih}$ (Bock)}
        \label{fig:acc_bock_d}
    \end{subfigure}
    ~
    \begin{subfigure}[h]{0.23\textwidth}
        \centering
        \includegraphics[width=\textwidth]{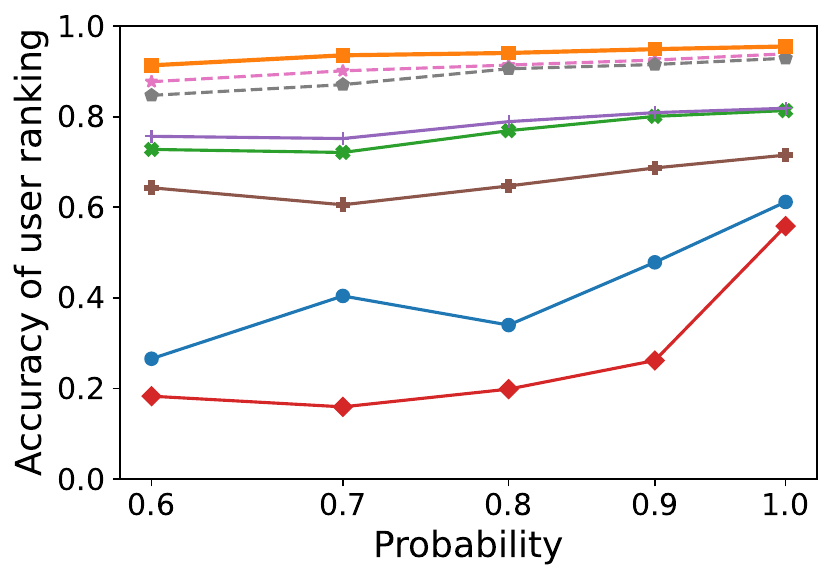}
        \caption{Varying $p$ (Bock)}
        \label{fig:acc_bock_pd}
    \end{subfigure}

    \begin{subfigure}[h]{0.23\textwidth}
        \centering
        \includegraphics[width=\textwidth]{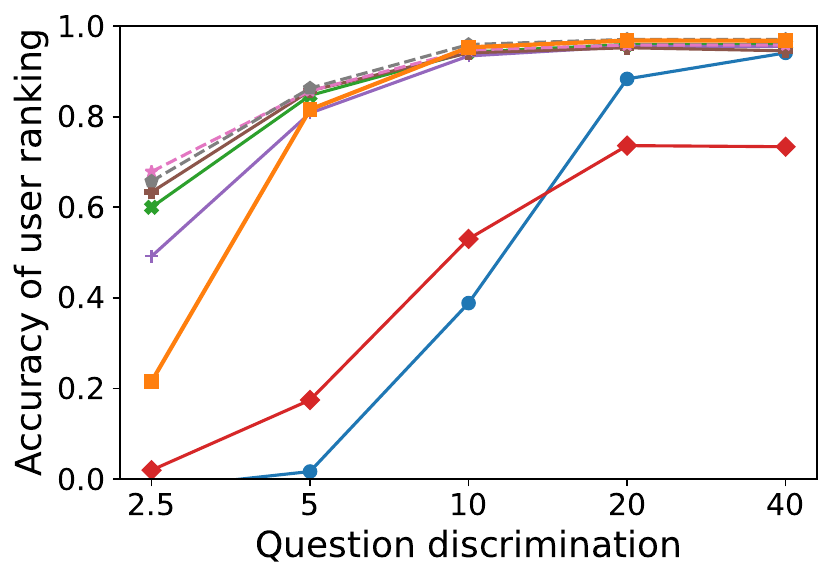}
        \caption{Varying $a_{ih}$  (GRM)}
        \label{fig:acc_grm_a}
    \end{subfigure}
	~
	\begin{subfigure}[h]{0.23\textwidth}
        \centering
        \includegraphics[width=\textwidth]{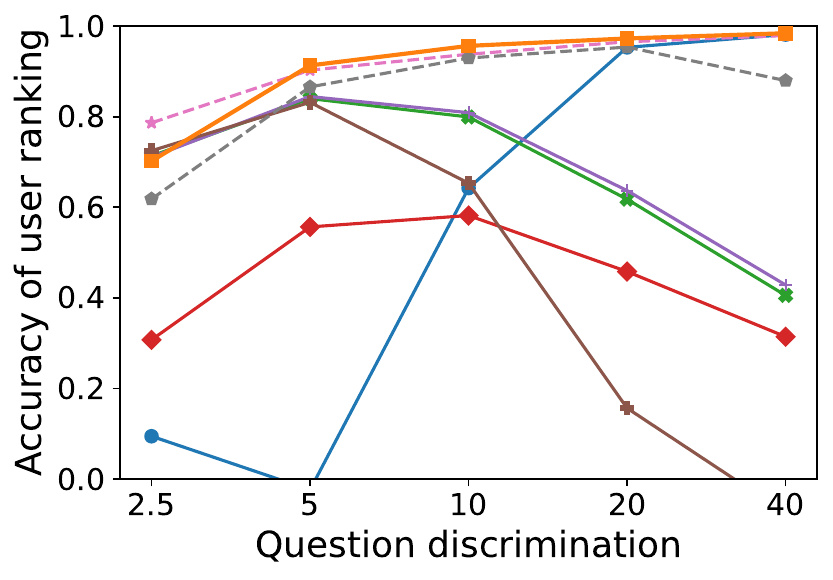}
        \caption{Varying $a_{ih}$ (Bock)}
        \label{fig:acc_bock_a}
    \end{subfigure}
	~
    \begin{subfigure}[h]{0.23\textwidth}
        \centering
        \includegraphics[width=\textwidth]{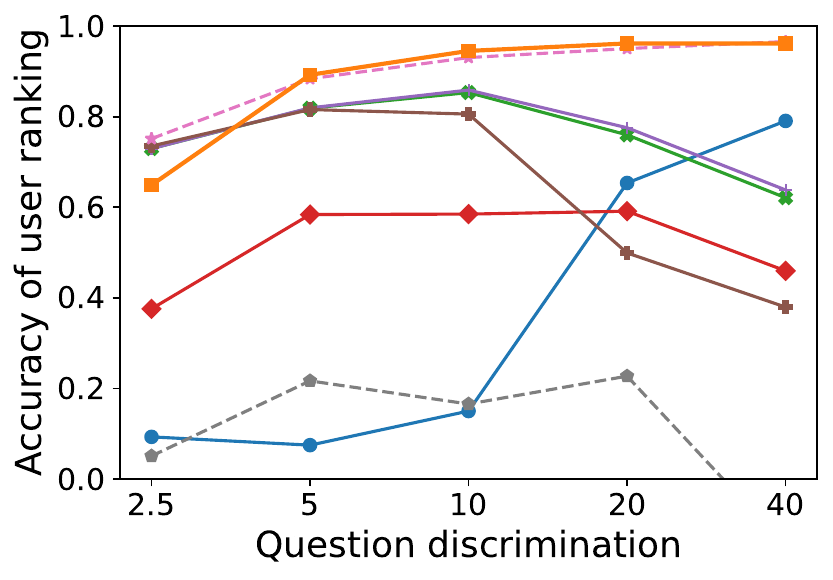}
        \caption{Varying $a_{ih}$  (Samejima)}
        \label{fig:acc_samejima_a}
    \end{subfigure}

	\caption{\Cref{sec:acc2}: Results of supplementary accuracy experiments (the legend is in the first figure).
	}
\label{fig:acc2}
\end{figure*}

\introparagraph{Samejima}
Similar to what 3PL adds to 2PL, Samejima \cite{samejima1979new} introduced a model with \emph{random guessing} based on the Bock model.
The model postulates a latent ``{don't-know}'' choice numbered zero along with the given $k$ options.
A student with low enough ability will guess randomly among the options.
Notice that this model does not exactly specialize to the 3PL model for $k=2$.
Unlike 3PL which is able to set a random guessing probability, the Samejima model fixes the probability of a student randomly picking the correct answer to be $1/k$.
The probability for a user with ability $\theta$ to pick an option $h$ for a question $i$ is given by the response function:
\begin{align*}
	\P_{ih}(\theta) = \frac{e^{\alpha_{ih}\theta+\beta_{ih}} + \frac{e^{\alpha_{i0}\theta+\beta_{i0}}}{k}}
		{\sum_{l=0}^{k}e^{\alpha_{il}\theta+\beta_{il}}}
\end{align*}
For $\beta_0 \rightarrow -\infty$, this model recovers the Bock model.

\section{Supplementary experiments}
\label{app:more_experiments}
\subsection{Accuracy on synthetic data}
\label{sec:acc2}
In addition to all experiments we show in \cref{sec:acc}, we also conduct an experiment to determine the accuracy as function of option discrimination $a_{ih}$.
Moreover, in \cref{sec:acc}, we focus on the most general polytomous IRT model, Samejima to avoid redundancy.
Here, we also present the results on the other two models GRM and Bock based on the same setup we use in \cref{sec:acc}.

\introparagraph{Supplementary accuracy experiments on Bock and GRM data (\Cref{fig:acc_grm_u,fig:acc_grm_o,fig:acc_grm_d,fig:acc_grm_pd,fig:acc_bock_u,fig:acc_bock_o,fig:acc_bock_d,fig:acc_bock_pd})}
In \cref{sec:acc}, \cref{fig:acc_grm_q,fig:acc_bock_q} show the robust performance of \HnD for datasets generated by Bock and GRM model with varying $n$ but for other experiments we focus on the most general model, Samejima.
Here we provide supplementary experimental resuts on data generated by Bock and GRM model with the same setup as what we do on data generated by the Samejima modell in \cref{sec:acc} (corresponding to \cref{fig:acc_samejima_u,fig:acc_samejima_o,fig:acc_samejima_d,fig:acc_samejima_p}).
The experimental results generally fits our observation in \cref{sec:acc}.

One major exception appears in \cref{fig:acc_grm_d,fig:acc_bock_d}.
There we see that when the average accuracy is low for Bock and GRM model, \HnD gives the approximate reverse ranking.
The reason is when the difficulty is larger than the user ability, 
good users do not return consistent correct answers but worse users all choose the worst answer for such models without random guessing. 
Thus all the methods tend to believe the majority in such arguably unrealistic setup.
Still, we see that for Samejima, \HnD performs robustly well because Samejima models random guessing
so when the difficulty is high, the bad users tend to answer random option so all methods know how to find the good users.
In all the realistic scenarios, we see \HnD outperforms other competitors.

\introparagraph{Varying question discriminations $a_{ih}$
(\Cref{fig:acc_grm_a,fig:acc_bock_a,fig:acc_samejima_a})}
Recall from \Cref{sec:irt} that larger discriminations imply that a question can more easily
``discriminate'' between able and less able users, and the correct option can more easily be picked by good users.
Here, we change the discrimination range from the default $[0,a_{\max}]$ 
with $a_{\max}=10$
to 5 different ranges, $a_{\max} \in \{2.5, 5, 10, 20, 40\}$.
We see that \HnD keeps high accuracy except when $a_{\max} = 2.5$.
When $a_{\max}$ is small, the performance between good users and bad users are close, which means the dataset is not good at evaluating users.

\subsection{Accuracy on real-world data}
Here we include the details of the real datasets used in \cref{tab:data} 
and the experimental results on each individual dataset of \cref{fig:multichoice}.\footnote{For the first two experiments, the correlation of ABH is actually negative. We use its absolute value for the simplicity of presentation.}

\begin{figure}[t]
\setlength{\tabcolsep}{3pt}	
	\centering
	\begin{tabular}{|c|c|c|c|c|}
	\hline
	Dataset & \#users & \#questions & \#options & result \\
	\hline\hline
	Chinese  & 50 & 24 & 5 & \cref{fig:multi_1} \\
	\hline
	English & 63 & 30 & 5 & \cref{fig:multi_2} \\
	\hline
	IT & 36 & 25 & 4 & \cref{fig:multi_3} \\
	\hline
	Medicine & 45 & 36 & 4 & \cref{fig:multi_4}\\
	\hline
	Pokemon & 55 & 20 & 6 & \cref{fig:multi_5}\\
	\hline
	Science & 111 & 20 & 5 & \cref{fig:multi_6}\\
	\hline
	\end{tabular}
\caption{Summary of real datasets}
\label{tab:data}
\end{figure}

\begin{figure}[t]
\captionsetup[subfigure]{justification=centering}
    \centering
    \begin{subfigure}[h]{0.23\textwidth}
        \centering
        \includegraphics[width=\textwidth]{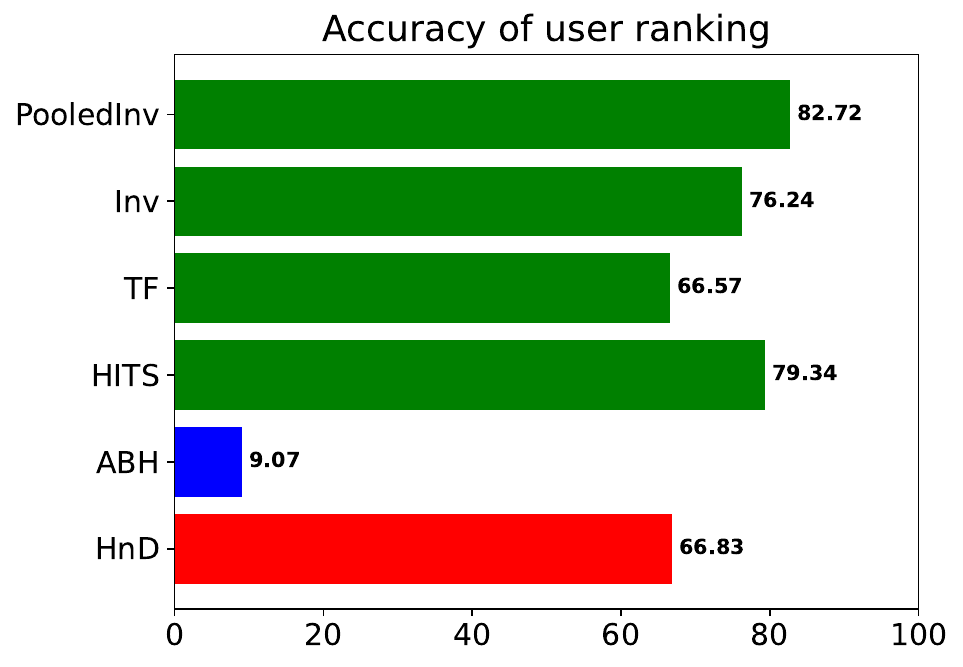}
        \caption{Chinese}
        \label{fig:multi_1}
    \end{subfigure}
	~
    \begin{subfigure}[h]{0.23\textwidth}
        \centering
        \includegraphics[width=\textwidth]{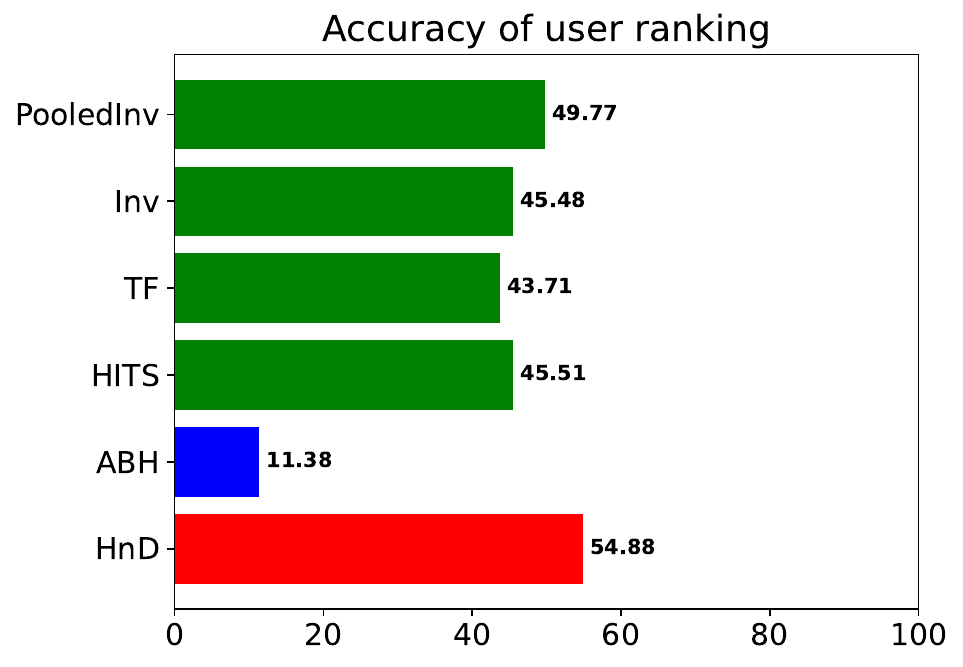}
        \caption{English}
        \label{fig:multi_2}
    \end{subfigure}
    
     \begin{subfigure}[h]{0.23\textwidth}
        \centering
        \includegraphics[width=\textwidth]{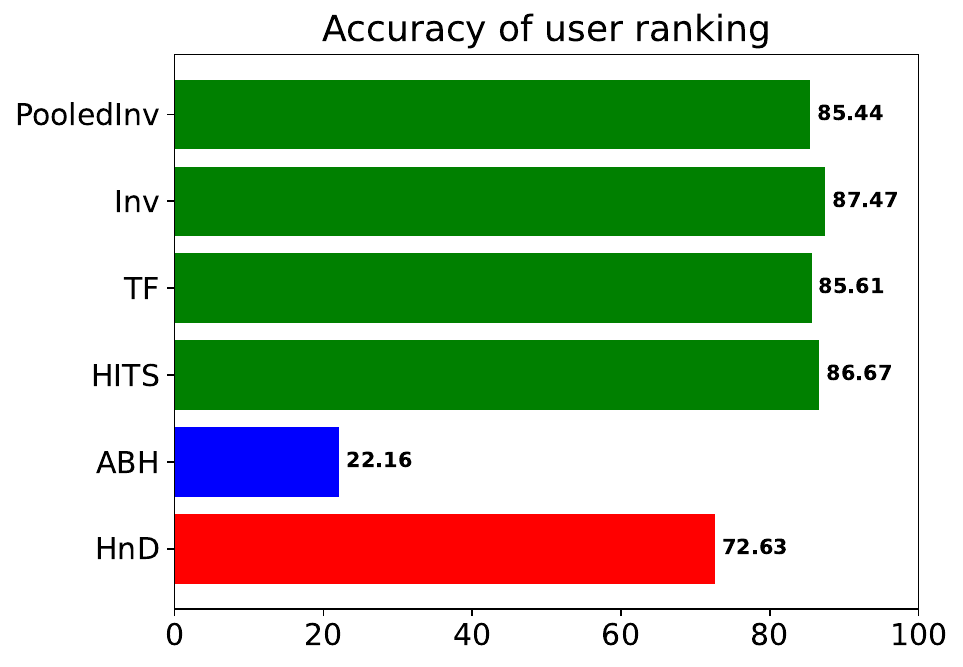}
        \caption{IT}
        \label{fig:multi_3}
    \end{subfigure}
	~
    \begin{subfigure}[h]{0.23\textwidth}
        \centering
        \includegraphics[width=\textwidth]{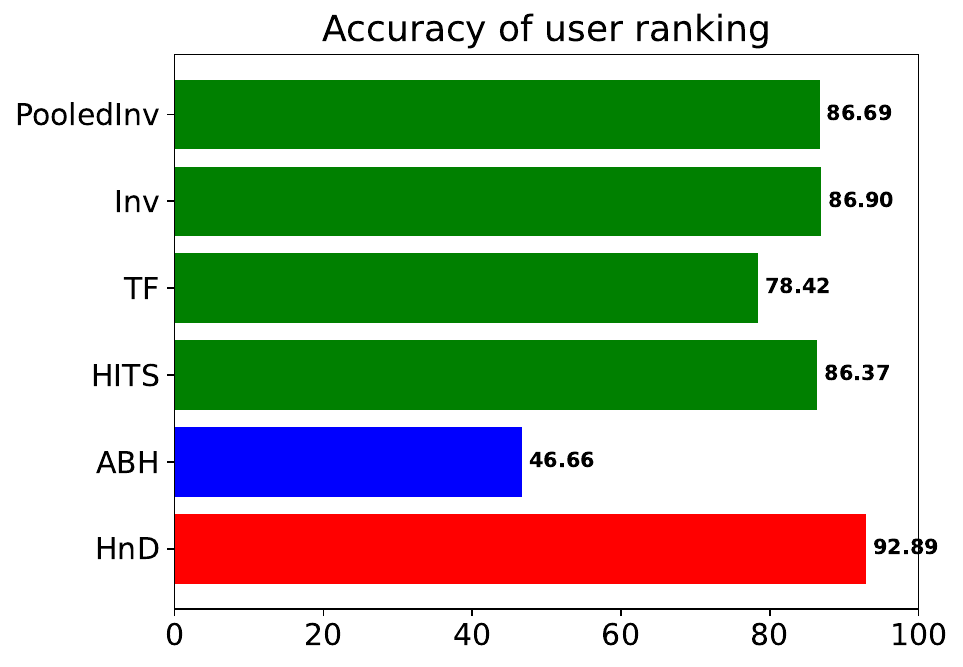}
        \caption{Medicine}
        \label{fig:multi_4}
    \end{subfigure}
    
     \begin{subfigure}[h]{0.23\textwidth}
        \centering
        \includegraphics[width=\textwidth]{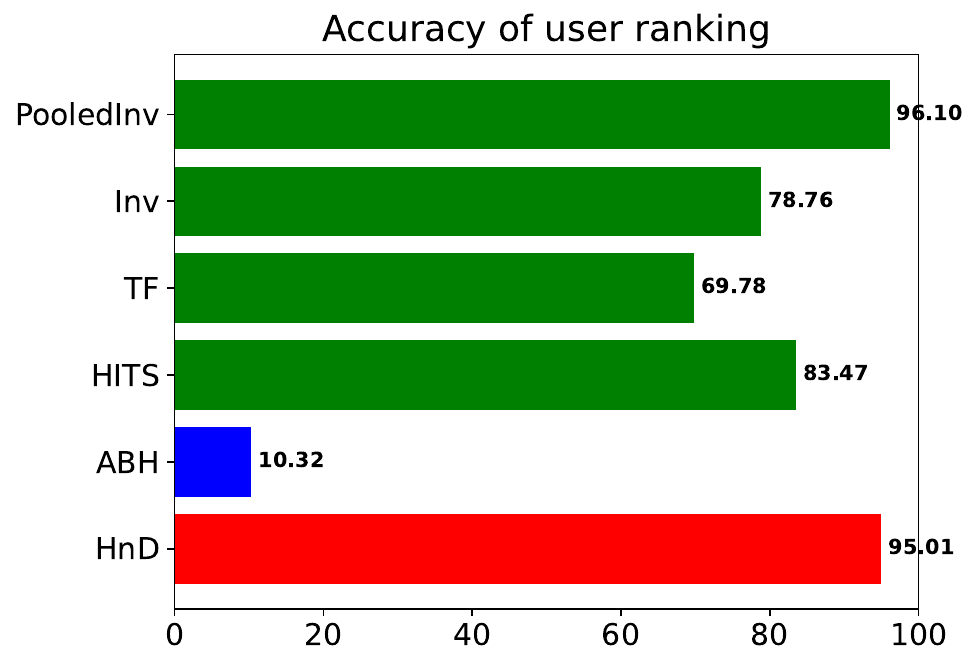}
        \caption{Pokemon}
        \label{fig:multi_5}
    \end{subfigure}
	~
    \begin{subfigure}[h]{0.23\textwidth}
        \centering
        \includegraphics[width=\textwidth]{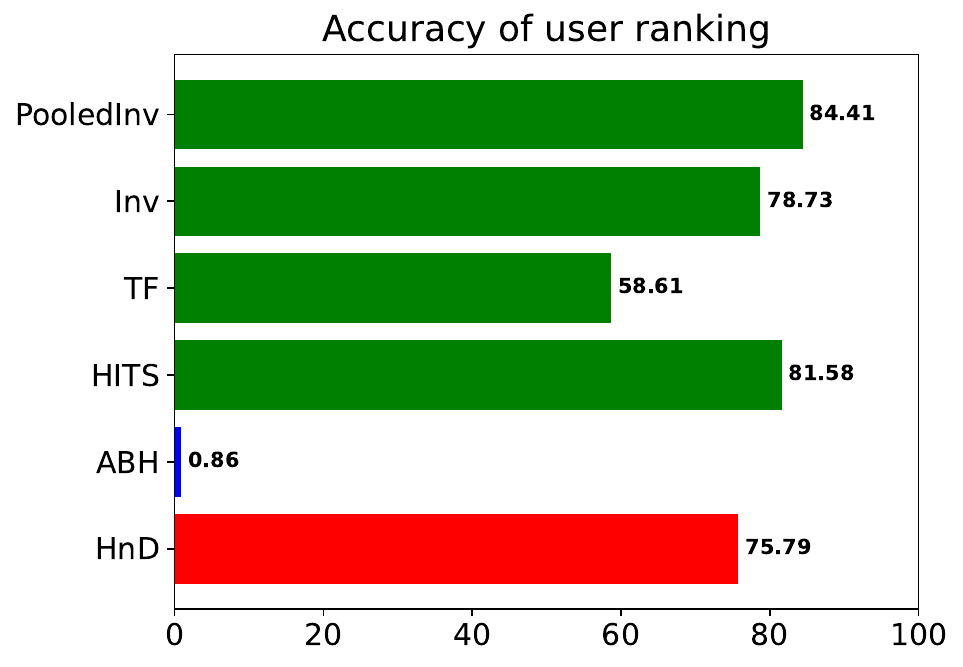}
        \caption{Science}
        \label{fig:multi_6}
    \end{subfigure}
\caption{Detailed result on each dataset in \cref{fig:multichoice}}
\label{fig:detail}
\end{figure}

We like to remind that these datasets \emph{do not come with a ground truth on user ranking} 
and we instead inferred an approximate ranking
by using the ``True answer''  baseline explained in \cref{sec:benchmark} as the reference ranking against which to compare. 
We have seen in the earlier experimental results that sometimes \HnD gives better answers than ``True answer.''
Thus the experimental results should be seen only  as a supplementary evidence.

\subsection{Accuracy on realistic simulated data}
\label{sec:realdata_appendix}
As mentioned in \cref{sec:benchmark}, we do not know \emph{any existing benchmark with a known true ranking} of users by their abilities.
In order to still verify the performance of \HnD in an as-realistic-as-possible setting, 
we simulate real-world data by generating synthetic data that follow prior published IRT parameters of real-world data ~\cite{demars} or estimated distribution \cite{VaniaHHMPPLCB20}.
In this way, we have realistic data with the ground truth.

\introparagraph{Experiments on simulated American Experience test data}
DeMars~\cite{demars} presents a detailed analysis on the American Experience test
with 40 questions and 2692 participating students. 
On page 87, the book presents estimates of parameters for a binary 3PL IRT model.
We generate 10 small datasets with 100 students and 40 questions, fitting the size of a real-world class and 10 large datasets with 2692 students and 40 questions following the original dataset size provided in \cite{demars}.
In datasets of both settings, the 40 questions follow exactly those estimated parameters from the book to simulate a realistic scenario, and the student (user) abilities follow the normal distribution $N(0, 1)$ as indicated in \cite{demars}.
We then conduct experiments on each of them and take the average accuracy.
Our results in \cref{fig:simulated} clearly show the stronger and more stable performance of \HnD compared to other competitors except for the cheating estimators.

\begin{figure}[t]
\centering
\begin{subfigure}[h]{0.23\textwidth}
	\centering
	\includegraphics[width=\textwidth]{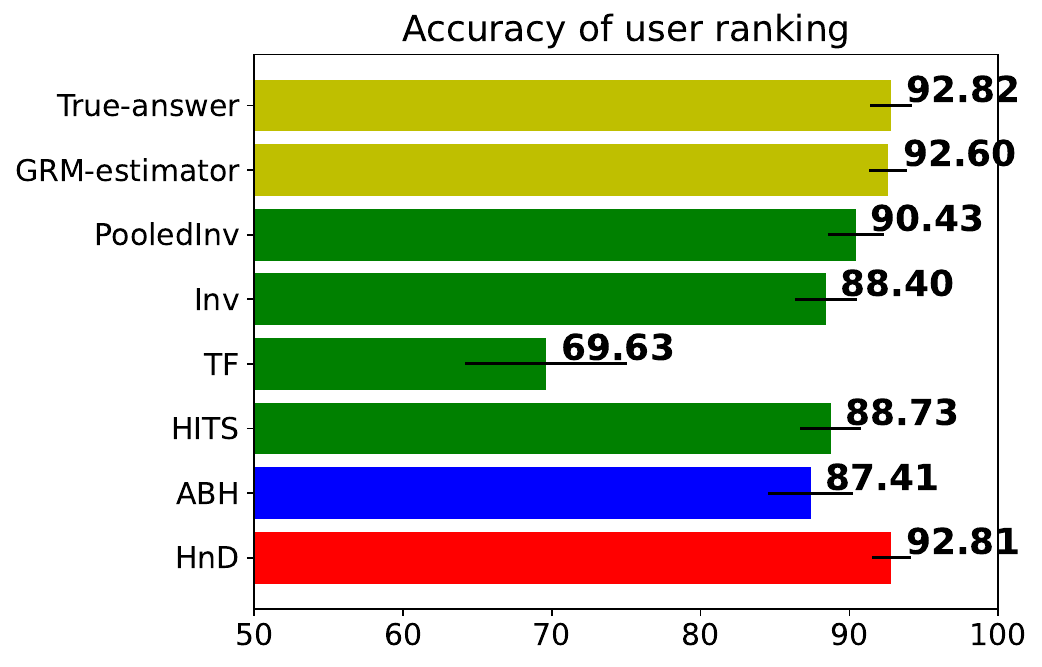}
	\caption{100 students}
	\label{fig:simulated_small}
\end{subfigure}
~
\begin{subfigure}[h]{0.23\textwidth}
	\centering
	\includegraphics[width=\textwidth]{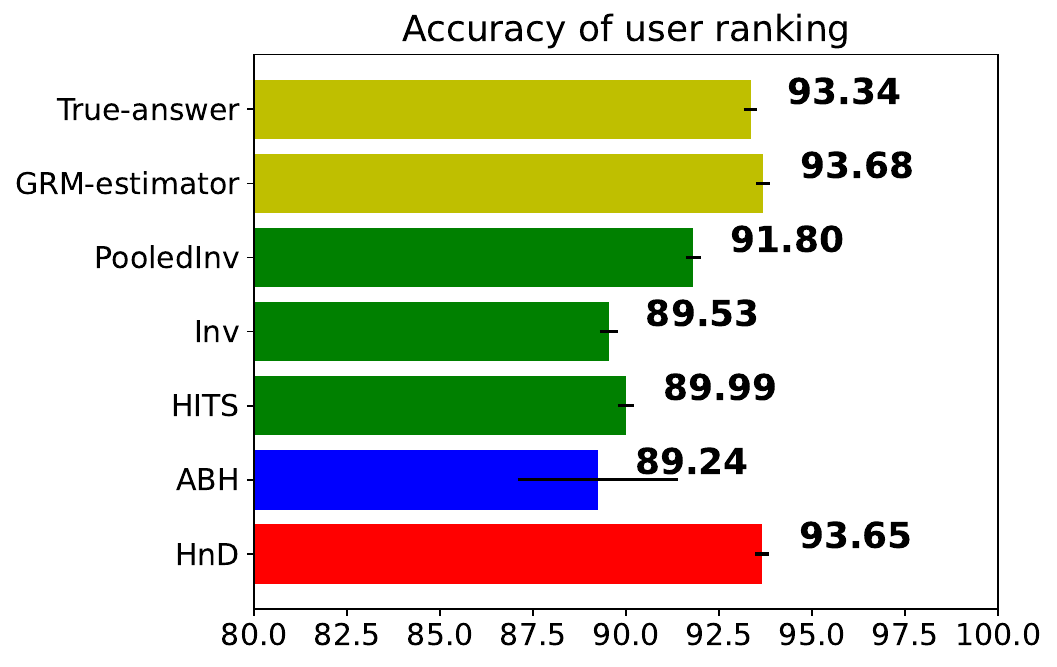}
	\caption{2692 students}
	\label{fig:simulated_large}
\end{subfigure}
\caption{\cref{sec:realdata_appendix}: Accuracy experiments on synthetically simulated data
with correlation of user rankings following parameters from real world data described in \cite{demars}.
Shown are average and standard deviation over 10 runs.
}
\label{fig:simulated}
\end{figure}

\introparagraph{Experiments on simulated half-moon data}
Besides the simulated data based on the data distribution of the American Experience test, which exactly fits our background of students and questions, we also explore the performance of \HnD on other data.
Prior work~\cite{VaniaHHMPPLCB20} provides a detailed analysis of IRT parameters over 29 datasets.
They consider different natural language understanding models as users in our scenario and use the performances of those models to estimate the binary 3PL IRT parameters
(\cref{sec:IRT_binary})
of the tasks (questions in our scenario) in the datasets.
One important take-away from the paper is that the distribution of log discrimination versus difficulty tends to have a half-moon pattern, which means most of the discriminative questions are either easy or difficult.
Following this, we develop a simulated question generator 
with the half-moon distribution of log discrimination versus difficulty as shown in \cref{fig:distribution}.
As for other parameters, we set user ability to follow the normal distribution $N(0, 1)$
and random guessing $c$ to be within $[0,0.5]$ as hinted by \cite{VaniaHHMPPLCB20}.
We generate 10 datasets with 100 users and 100 questions, 
conduct experiments on them and take the average accuracy 
as the experimental result.
\Cref{fig:simulated} presents the strong performance of \HnD, which shows that \HnD may be also applicable to other crowdsourcing tasks.

\begin{figure}[h!]
	    \captionsetup[subfigure]{justification=centering}
	    \centering
	    \begin{subfigure}[h]{0.18\textwidth}
	        \centering
	        \includegraphics[width=\textwidth]{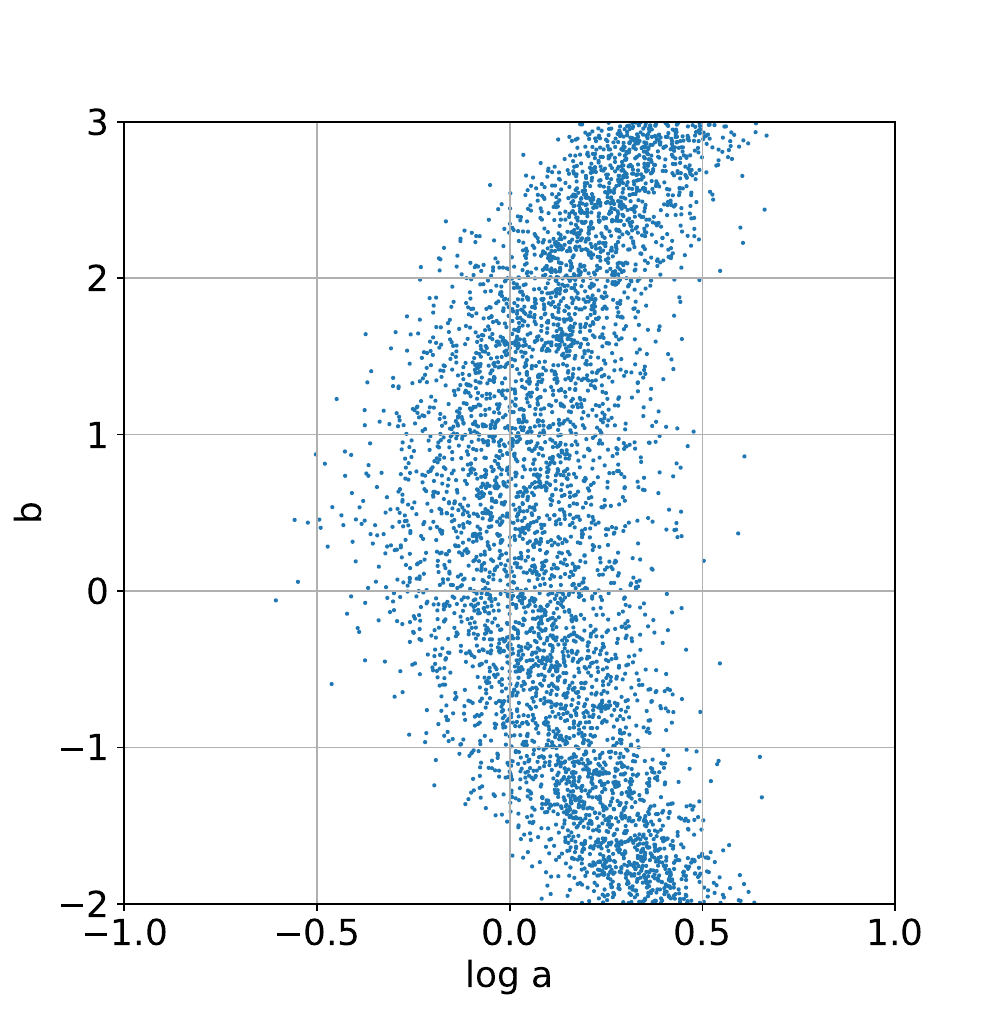}
	        \caption{Distribution of simulated data with half moon shape
			}
	        \label{fig:distribution}
	    \end{subfigure}
		~
	    \begin{subfigure}[h]{0.28\textwidth}
	        \centering
	        \includegraphics[width=\textwidth]{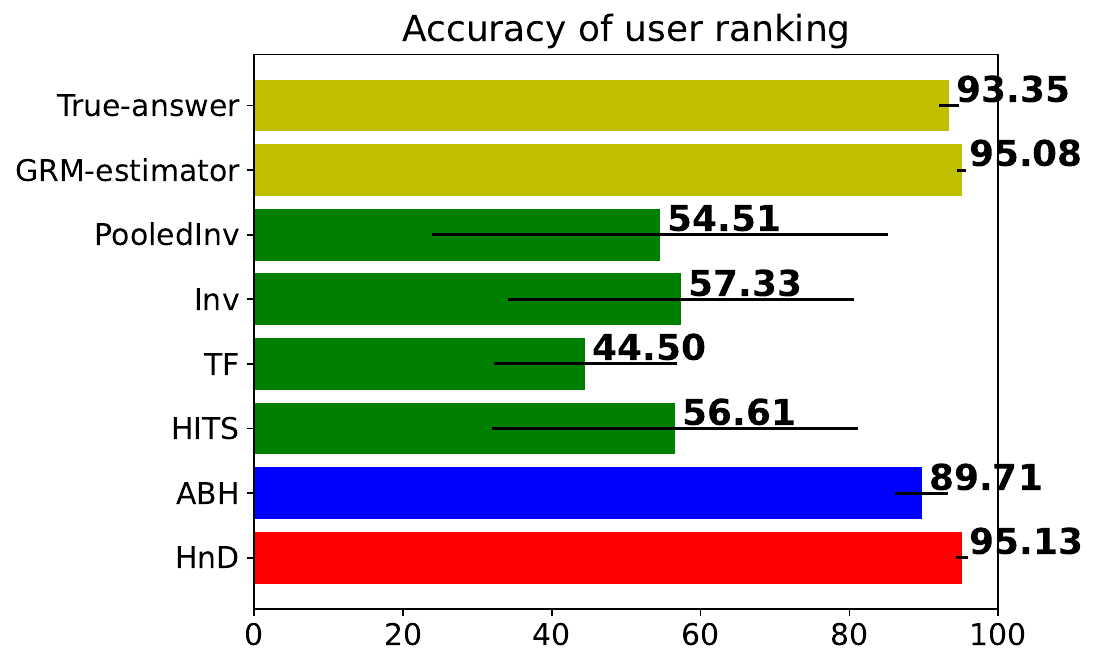}
	        \caption{Correlation of user ranking on simulated data following the half moon data distribution}
	        \label{fig:simulated}
	    \end{subfigure}
	    \caption{Experiments on the simulated half moon data}
\end{figure}

\subsection{Implementation details}
To get an approximate correspondence in discrimination between Bock and GRM models, 
when we parameterize $a_{ih}$ for Bock from $[0, x]$,
we also parameterize $a_{i}$ for GRM from 
$[0, \frac{2x}{k+1}]$.
In this way, the average $a_{ih}$ for Bock is $\frac{x}{2}$ 
and the average GRM's $a_{i}$ is $\frac{x}{k+1}$.
As discussed in \cref{sec:irt}, 
$a_{i}\!=\!\frac{x}{k+1}$ for GRM 
is approximately the same case as 
$a_{ih} \!=\! \frac{hx}{k+1}$ for Bock for each $h$.
This means $a_{i} \!=\! \frac{x}{k+1}$ for GRM 
approximates the case that the average 
$a_{ih}$ for Bock equals 
$(\sum \frac{hx}{k+1})/k \!=\! \frac{(k+1)kx}{2(k+1)}/k \!=\! \frac{x}{2}$.
In this way, we get similar average discriminations for Bock and GRM.

Another minor detail is that when we generate the C1P data, directly applying the distribution where all user abilities are distributed evenly within $[0,1]$ will result in completely symmetric response matrix and thus no method (even with symmetry breaking) could possibly determine the side with better users.
Therefore we set 10\% of the users to be within $[0,0.5]$ and  90\% users to be within $[0.5,1]$ without influencing the property that the response matrices are consistent.

\section{Detailed Analysis}
\subsection{Dawid-Skene (DS) for homogenous items}
\label{sec:DS}
Besides the Item Response Theory, another widely used model for item labeling was proposed by Dawid and Skene (DS)~\cite{dawid}.
Here we first introduce the DS model and compare it against thhe IRT models.

The DS model assumes \emph{homogenous items},
i.e.\ they are of the same type.
It is parameterized by one latent stochastic confusion matrix per user where
off-diagonal entries represent the probabilities that a user mislabels an item from one class with another while the diagonal elements correspond to probabilities of making accurate choices. 
For example, assume $m=5$ users have to choose one of $k=3$ labels dog, cat, or rabbit for each of $n=10$ images.
One entry in the confusion matrix for a user specifies the probability that that user chooses the label `dog' 
when the true label is `cat.'
The approach assumes \emph{identical parameters for each item} and thus requires $mk(k-1)$ parameters for $m$ users.
The confusion matrices and true labels are jointly estimated by maximizing the likelihood of observed labels
via the EM algorithm~\cite{dawid}.

This model is widely used and studied in crowdsourcing applications for homogenous questions
\cite{DBLP:journals/tkde/HungVTWYZ18,DBLP:journals/jmlr/ZhangCZJ16,DBLP:conf/nips/LiuPI12,DBLP:journals/jmlr/RaykarYZVFBM10},
and a recent survey~\cite{Zheng:2017:TIC:3055540.3055547} on crowdsourcing
recommends it as an implementation with little overhead.
However, DS is not suited to model a scenario where \emph{every question can be different}, such as in multiple-choice questions testing student understanding.
IRT, in contrast to DS, uses one (or more) latent parameters \emph{for each question}.

\subsection{Detailed Analysis on ABH-power}
\begin{figure}[t]
\captionsetup[subfigure]{justification=centering}
    \centering
    \begin{subfigure}[h]{0.23\textwidth}
        \centering
        \includegraphics[width=\textwidth]{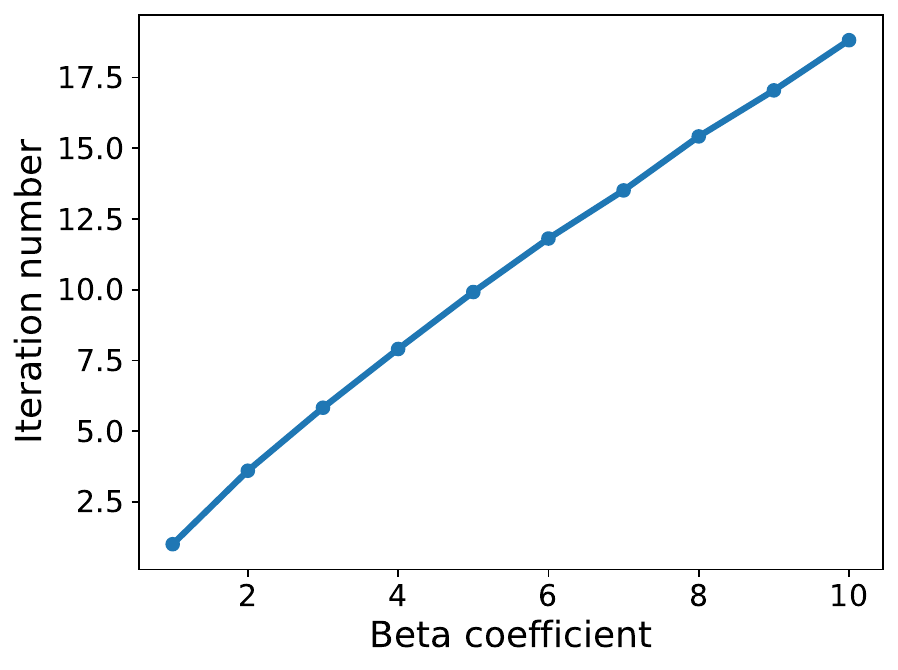}
        \caption{The influence of $\beta$. Each iteration number is divided by the smallest number.}
        \label{fig:beta}
    \end{subfigure}
	~
    \begin{subfigure}[h]{0.23\textwidth}
        \centering
        \includegraphics[width=\textwidth]{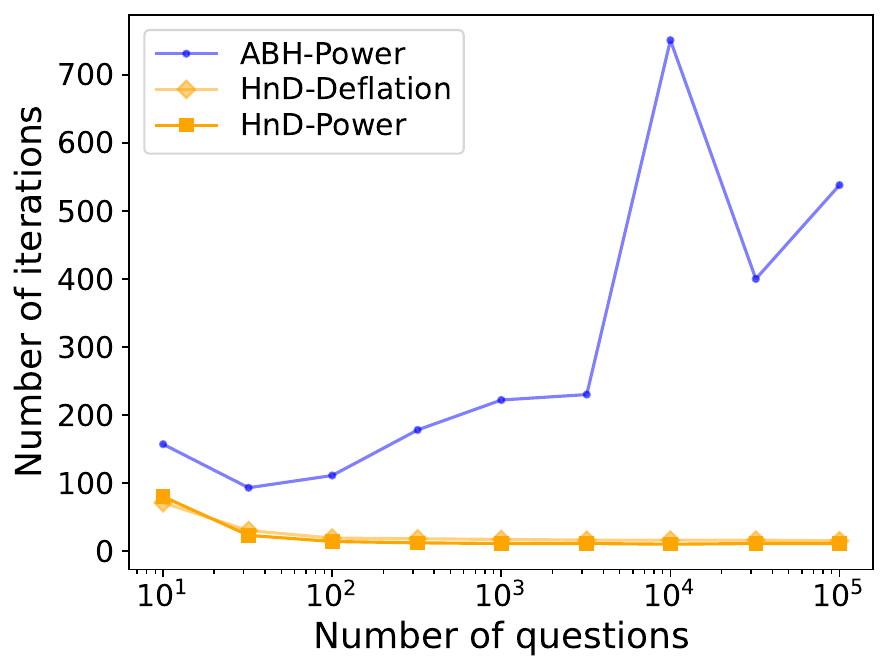}
        \caption{Iteration numbers with varying $n$}
        \label{fig:iteration_q}
    \end{subfigure}
\caption{Detailed Analysis on ABH-power}
\label{fig:detail_ABH}
\end{figure}

As discussed in \cref{sec:complexity}, it is possible theoretically to implement ABH in a similar way as \HnD whose time complexity is $\O(mnt)$.
However, instead of $\O(mnt)$, the time complexity of ABH-Power is $\O(mnt + m^2t)$, which can be larger than $\O(mnt)$ when $m >> n$.
We start from the algorithm to analyze ABH-Power in detail.

\introparagraph{The algorithm of ABH-Power}
\cref{alg:abh} presents our novel approach for implementing ABH which we call ABH-Power.
The main difference from \cref{alg:hnd} are \cref{alg:a5,alg:a6}.
Note that $\response$ is a $(m \times n)$ matrix but both $D$ and $I$ are $(m \times m)$ matrices which are much larger than $\response$ when $m$ is much larger than $n$.
In this case, ABH-Power has time complexity $\O(m^2t)$, which corresponds to the quadratic execution time in \cref{fig:sca_u}.
However, this does not explain why ABH-Power is also slower in \cref{fig:sca_q}.
Besides, we observe that when question number is large (e.g., 10000) and user number is small (e.g., 100), each iteration of ABH-Power and \HnD takes similar time.
We believe that the reason for ABH-Power being slower also in \cref{fig:sca_q} is due to the required number of iterations until convergence.
We discuss this next:

\introparagraph{The influence of $\beta$}
We discussed in \cref{sec:Hnd_vs_ABH_theory} the comparison between \HnD and ABH.
To implement ABH-power, we need to rely on the matrix $\beta \vec I_{m-1} - \vec M$, and $\beta$ needs to be no smaller than all the entries and eigenvalues of $\vec M$.
In practice, we set $\beta$ to be the largest entry in the diagonal matrix of $\response\response^\transpose$ to fulfill the requirements.
Therefore, in practice, $\beta$ is very large.
\cref{fig:beta} shows an interesting study on the choice of $\beta$, which shows that the iteration number of ABH-power is linear in terms of $\beta$, which means that the large $beta$ we use results in more iterations compared to \HnD.

\introparagraph{Number of iterations}
The above analysis shows us why ABH can be much slower than \HnD.
However, to fully understand why in both \cref{fig:sca_q}, it seems ABH-Power takes longer than linear execution time, we need more detailed evidence of the number of iterations.
\cref{fig:iteration_q} shows the numbers of iterations against varying number of users and questions respectively, corresponding to \cref{fig:sca_q}.
Each shown data point is also the median iteration number over the 5 runs.
\cref{fig:iteration_q} shows that generally as the number of questions increase, the number of iterations increase as well.
This explains why in \cref{fig:sca_q}, ABH-Power is not linear.

\begin{algorithm}[t]
\caption{ABH-Power}
\label{alg:abh}
\SetKwFor{ForAll}{forall}{do}{endfch}	
\KwIn{Response matrix $\response$, randomly initialized user scores $\userscore_0$}
\KwOut{User scores $\userscore$}
\BlankLine
$\vec{s}^{\textrm{diff}} \leftarrow \vec{s}^{\textrm{diff}}_0$ 
\hspace{15.8mm}// initialize user score differences \;
\Repeat{convergence or iteration limit}
	{
		$\vec s  \leftarrow \vec T \vec{s}^{\textrm{diff}} $	\label{alg:a3} 
		\hspace{11.1mm}// update user scores \;
		$\qweight \leftarrow \response^\transpose  \userscore$	\label{alg:a4}	
		\hspace{10.5mm}// update option weights \;
		$\userscore \leftarrow \vec D \userscore - \response  \vec \qweight$	\label{alg:a5}		
		\hspace{4.5mm}// update user scores \;
		$\vec{s}^{\textrm{diff}}  \leftarrow \beta \vec I \vec{s}^{\textrm{diff}} - \vec S \userscore$ \label{alg:a6}	
		\hspace{-0.3mm}// update user score differences \;
		Normalize $\vec{s}^{\textrm{diff}}$ to be a unit vector\;
	}
$\vec s  \leftarrow \vec T \vec{s}^{\textrm{diff}} $
\end{algorithm}

\section{Reproducibility}
In this subsection, we briefly discuss our code repository structure on github~\cite{HITSnDIFFS-code:2023}
(see \cref{fig:screenshot}).
A more detailed introduction can be found in the \emph{Readme.md} file in the repository.

\begin{figure}[h]
	\centering
        \includegraphics[width=0.2\textwidth]{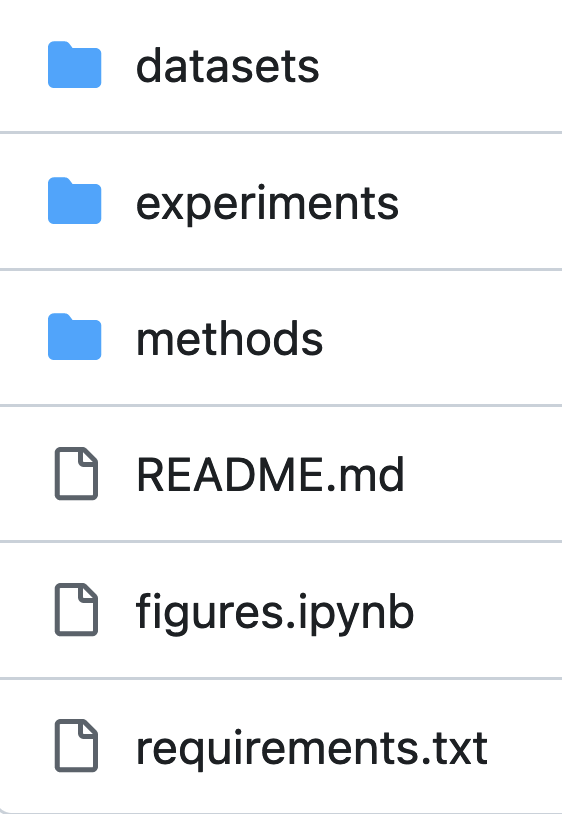}
        \caption{Repository structure}
        \label{fig:screenshot}
\end{figure}

The \emph{methods} folder contains all the methods including various implementations of HITSnDIFFs and ABH, majority vote, ``True-Answer" baseline, HITS-based approaches and GRM-estimator using the girth package.
The \emph{experiments} folder contains all the experiments reported in the paper including various accuracy and efficiency experiments on synthetic datasets, experiments on real-world datasets and experiments to compare HnD and ABH. 
The \emph{datasets} folder contains the real-world datasets we use in the experiments.
\emph{Readme.md} describes the repository structure.
With the help of \emph{requirements.txt}, all necessary python packages can be installed.
\emph{figures.ipynb} provides a notebook to reproduce every figure of experimental results in this paper.

\end{document}